\newcommand{\Tr}{\mbox{{\rm Tr}\,}}
\newcommand{\e}{\varepsilon}
\newcommand{\rd}{{\rm d}}
\newcommand{\bR}{{\mathbb R}}
\newcommand{\bbZ}{{\mathbb Z}}
\newcommand{\bs}{ \begin{split} }
\newcommand{\es}{ \end{split} }
\newcommand{\bE}{{\bf{E}}}
\newcommand{\bN}{{\mathbb N}}
\newcommand{\al}{\alpha}
\newcommand{\be}{\begin{equation}}
\newcommand{\ee}{\end{equation}}
\newcommand{\om}{{\omega}}
\newcommand{\cE}{{\cal E}}
\newcommand{\C}{{\mathbb C}}
\newcommand{\Z}{{\mathbb Z}}
\newcommand{\E}{{\bf{E}}}
\newcommand{\R}{{\mathbb R}}
\newcommand{\N}{{\mathbb N}}
\renewcommand{\P}{{\bf{P}}}
\newcommand{\wt}{\widetilde}
\newtheorem{theorem}{Theorem}
\newtheorem{corollary}[theorem]{Corollary}
\newtheorem{lemma}[theorem]{Lemma}
\newtheorem{proposition}[theorem]{Proposition}
\newtheorem{definition}{Definition}
\newcommand{\qed}{\hfill\fbox{}\par\vspace{0.3mm}}
\newenvironment{proof}{{\bf Proof.}} {\hfill\qed}
\numberwithin{equation}{section}
\numberwithin{theorem}{section}
\numberwithin{definition}{section}
\title{Wegner estimate and Anderson localization for random magnetic
fields}
\author{L\'aszl\'o Erd\H os\thanks{Partially supported by SFB-TR12 of
the German Science Foundation},\; David Hasler
 \\
\\
Institute of Mathematics, University of Munich, \\
Theresienstr. 39, D-80333 Munich, Germany \\
\text{lerdos@math.lmu.de, hasler@math.lmu.de} }
\date{Dec 23, 2010}
\begin{document}

\maketitle

\begin{abstract}
We consider a two dimensional magnetic Schr\"odinger operator
with a spatially stationary random magnetic field.
We assume that the magnetic field has a positive lower bound
and that it has Fourier modes on arbitrarily short scales. We prove
the Wegner estimate at arbitrary energy, i.e.
we show that the averaged density of states is finite
throughout the whole spectrum. We also prove Anderson localization
at the bottom of the spectrum.
\end{abstract}

{\bf AMS Subject Classification:} 82B44

\medskip

{\it Running title:} Wegner estimate for magnetic field

\medskip

{\it Key words:} Wegner estimate, Anderson localization,
random Schr\"odinger operator


\section{Introduction}

We consider a spinless quantum particle in the two dimensional
Euclidean space $\bR^2$
  subject to a random magnetic field $B:\bR^2\to\bR$. The energy is given
by the
magnetic Schr\"odinger operator, $H=(p-A)^2+V$, where $p=-i\nabla$,
$A:\bR^2\to\bR^2$ is a random
magnetic vector potential satisfying $\nabla\times A=B$
and $V$ is a deterministic external potential.
In contrast to the standard Anderson model for localization
with a magnetic field (see, e.g., \cite{CH, DMP, FLM, GK, W}), we
consider a model where the external potential is deterministic,
and only the magnetic field carries   randomness in the system.

The existence of the integrated density of states and its
independence of the boundary conditions in the thermodynamic
limit have been proven for both the discrete and the
continuous model and Lifschitz tail  asymptotics
have also been obtained \cite{HS, Na1, Na2}.

However,  Anderson localization for the random field model
has only been shown under an additional condition that the
random part of the magnetic flux is locally zero \cite{KNNN}.
Since a deterministic constant magnetic field localizes, one could 
expect that its random perturbation even enhances localization, so the
zero flux condition in  \cite{KNNN} should physically be unnecessary.
Technically, however, random magnetic fields are harder to fit
into the standard proofs of localization mainly because
the vector potential is nonlocal while a spatially stationary 
magnetic field typically does not lead
to a stationary Hamiltonian.

To circumvent this difficulty,
Hislop and Klopp \cite{HK} and later Ueki \cite{U}
have considered  spatially stationary random vector potential of the form
\be
   A_\om(x) = \sum_{z\in\bbZ^2}\om_z u(x-z),
\label{Akl}
\ee
  where $\om=\{\omega_z \; : \;
z\in \bbZ^2\}$ is a collection of i.i.d. real random variables with
some moment condition
and $u:\bR^2\to\bR^2$ is a fixed vectorfield with a fast decay at
infinity.
For such random field, Anderson localization was shown
in \cite{U,GHK}, motivated by a method in \cite{HK}, that gave
the first Wegner estimate  for this model. The method works
  only for  energies away from the spectrum
of the deterministic part of  the Hamiltonian,
mainly because the Wegner estimate is shown only in that regime.

Note that
for the magnetic field $B_\om=\nabla\times A_\om$ generated by
\eqref{Akl},  the fluctuation of the total flux $\int_{\Lambda_L} B_\om(x)\rd x$
within a large box  $\Lambda_L= [-L/2, L/2]^2$
is of order of the square root of the boundary, $|\partial \Lambda_L|^{1/2}\sim
L^{1/2}$ by the central limit theorem.
  In contrast, if the magnetic
field $B_\om(x)$ itself were given by a spatially stationary
random process
with a sufficient correlation decay, e.g.,
\be
B_\om(x) = \sum_{z\in\bbZ^2}\om_z u(x-z)
\label{rmf}
\ee
with some decaying scalar function $u:\bR^2\to \bR$,
then  $\int_{\Lambda_L} B_\om(x)\rd x$ would fluctuate on a scale of order
square root of the area, $|\Lambda_L|^{1/2}\sim L$.
  Assuming stationarity
on the vector potential thus imposes
an unnatural constraint on the physically relevant
gauge-invariant quantity, i.e. on the magnetic field.

The analogous problem  for the lattice magnetic Schr\"odinger operator
has been studied with different methods. For the discrete
magnetic Schr\"odinger operator on $\bbZ^2$, the magnetic field
is given by its flux on each plaquet of the lattice.
  Extending the method of Nakamura \cite{Na1}, Anderson localization
was proven for this model \cite{KNNN} near
the spectral  edge, however, the
 zero flux condition was enforced in a strong sense.
Instead of considering the more natural
i.i.d. (or weakly correlated) random fluxes
on each plaquet, the neighboring plaquets were domino-like paired
and the magnetic fluxes were opposite
within each domino. Such magnetic field again has much less
fluctuation than the i.i.d. case, moreover the flux is
deterministically zero on each domino.

The main technical reason for
the zero flux condition in both the continuous and the
discrete model was that the proof of the Wegner estimate
required it. The Wegner estimate is a key element in any known
mathematical proof of the Anderson localization since
it provides an a-priori bound for the resolvent
with a very high probability.
Typically, the statement is formulated
  for the finite volume truncation
$H_L$ of $H$ onto a box $\Lambda_L=[-L/2,
L/2]^2$
with some boundary conditions. The Wegner estimate states that
the expected number of eigenvalues of $H_L$ within a small
spectral interval $I$ is bounded from above by $C(L)|I|$,
where $|I|\ll 1$ denotes the length
of the interval.  As $|I|\to 0$, this provides an upper bound on
the averaged density of states and Lipschitz continuity of
the averaged integrated density of states.
Ideally, the constant $C(L)$ should be proportional
with the volume of the box, but for the purpose of
Anderson localization $C(L)$ often  may even grow subexponentially with
the volume. Moreover, it is also sufficient if the averaged
integrated density of states is only H\"older continuous, which
corresponds to a bound $C(L)|I|^\alpha$, $0<\alpha<1$, for
the expected number of eigenvalues in the Wegner estimate.

In this paper we present a new method to prove a Wegner estimate
that applies to a certain class of spatially stationary random
magnetic fields and to any energy in the spectrum.
Our estimate gives the optimal (first) power of $|I|$,
but not the optimal volume dependence:  $C(L)$ is
a high (but universal)  power of $L$.
As an application of the Wegner estimate, we prove Anderson localization for our
model at the bottom of the spectrum. We will address
the localization at higher energies  later.

We remark that our new approach can also be used to prove a Wegner estimate
and localization for the discrete Schr\"odinger operator
with a random magnetic field given by i.i.d. random fluxes
on the plaquets of $\Z^2$. The details will be given in a 
separate paper \cite{EH2}.


\section{Definition of the random magnetic field}\label{sec:field}

We work in $\bR^2$ and we set $|x|_\infty := \max\{ |x_1|, |x_2|\}$
for any $x\in \bR^2$.  We are given two
positive numbers, $b_0$ and $K_0> 3$, and
a deterministic (possibly nonconstant) magnetic field
$B_{\rm det}(x)$ with
\be
    0< 2 b_0\leq  B_{\rm det}(x) \leq (K_0 - 1 ) b_0.
\label{detbound}
\ee
We perturb this magnetic field by a random one, i.e., we consider 
\be
B =     B_\omega= B_{\rm det}+ \mu B_{\rm ran}^\omega,
\label{B}
\ee
where the random field is assumed to be $|B_{\rm ran}^\omega|\leq b_0$ and
$0<\mu\le 1$ is a coupling constant. In particular,
\be
    0< b_0 \leq B_\omega  \leq K_0 b_0.
\label{lowbound}
\ee

\smallskip

Now we define the random field $B_{\rm ran}^\omega$ more precisely.
We will need the assumption that  $B_{\rm ran}^\omega$
has  components on
arbitrary small scales, but these components   decay in
size. For simplicity we present a class of magnetic
fields for which our method works, but our approach
can be extended to more general fields with
a similar structure.
We remark that the analogous result in the discrete setup
 \cite{EH2}
will not require such assumption on the structure
of the random field.

We choose a smooth profile function $u\in C_0^1(\bR^2)$, $0\le u\leq 1$,
that satisfies one of the following two conditions for some
sufficiently small $\delta$:
either
\be
   u(x)\equiv 0  \quad \mbox{for}
   \quad |x|_\infty  \geq \frac{1}{2}+\delta
    \quad\mbox{and}\quad     u(x) \equiv 1, \quad \mbox{for}
   \quad |x|_\infty  \leq
   \frac{1}{2} - \delta
\label{kcond}
\ee
or
\be
    u(x) = \delta^2u_0( x\delta)
  \;\;\; \mbox{with some $u_0\in C^1_0(\bR^2)$},\;\; \int_{\bR^2} u_0=1,
\quad
u_0(x)\equiv0\;\;
\mbox{for $|x|_\infty \ge 1$}.
\label{kcond1}
\ee
In both cases $\delta$ will be chosen as a sufficiently small
positive number $\delta\leq\delta_0\leq 1$. The threshold $\delta_0$
can be chosen as
\be
\begin{split}
  \delta_0 =&\frac{1}{3200} \quad \mbox{under condition \eqref{kcond}}
\label{delta0}
\cr
\delta_0 = & \frac{1}{640+32\|\nabla u_0\|_\infty^2}
    \quad \mbox{under condition \eqref{kcond1}}.
\end{split}
\ee

Fix $k\in \bN$  and define the lattice $\Lambda^{(k)} = (2^{-k}\bbZ)^2$.
For $z\in \Lambda^{(k)}$ define
\be
       \beta_z^{(k)}(x) :=  u\big(2^{k}(x-z)\big).
\label{betadef}
\ee
The randomness is represented by a  collection of
independent random variables 
$$
\omega = \{ \om^{(k)}_z\; : \; k\in\bN,
z\in \Lambda^{(k)}\}.
$$ 
We assume that all $\omega^{(k)}_z$ have zero
expectation, and they satisfy a bound that is uniform in $z$
\be
     | \om^{(k)}_z| \leq \sigma^{(k)}:= e^{-\rho k}
\label{lowbound1}
\ee
with some $\rho>0$. We assume that the distribution of $\omega_z^{(k)}$ is absolutely
continuous, its density function  $v^{(k)}_z$ is in $C^2_0(\bR)$
and  satisfies
\be
    \int_{\bR} \Big| \frac{\rd^2 v^{(k)}_z}{\rd s^2}(s)\Big|
  \rd s \leq C[\sigma^{(k)}]^{-2} = Ce^{2\rho k}.
\label{2der}
\ee
Note that we do not require identical distribution.
Thus for each  $(k,z)  \in \mathcal{L} :=
\bigcup_{k \in \N} \{ k \} \times \Lambda^{(k)} $ we have a
 probability measure with
density $v_z^{(k)}$. The associated product measure, $\P$,
 is probability measure on
$\Omega = \R^{\mathcal{L}}$, and we denote expectation 
with respect to this measure
by $\E$.

For example,  one can assume that for each fixed $k$, the
random variables $\{ \om^{(k)}_z\; : \; z\in \Lambda^{(k)}\}$
are i.i.d. and they all live on a scale $\sigma^{(k)}$,
e.g.  $v^{(k)}_z(s) = [\sigma^{(k)}]^{-1}v(s/\sigma^{(k)})$
for all $z$
with some smooth, compactly supported density function $v$.

We define the random magnetic field as
\be
 B_{\rm ran}^\omega(x) =  B_{\rm ran}(x) := \sum_{k=0}^\infty  B^{(k)}(x), \qquad
B^{(k)}(x) := \sum_{z\in \Lambda^{(k)}} B^{(k)}_z(x), \qquad
     B^{(k)}_z(x) := \om^{(k)}_z   \beta_z^{(k)}(x) \; ,
\label{bconst}
\ee
i.e. $B_{\rm ran}^\omega$ is the sum of independent local magnetic fields on each scale 
$k$ and at every $z \in \Lambda^{(k)}$. 
We assume that
\be \label{newcondranb}
    \sum_{k=0}^\infty \sigma^{(k)}\leq b_0,
\ee
i.e.
$(1-e^{-\rho}) b_0\ge 1$,
then clearly $|B_{\rm ran}(x)|\leq b_0$.
Note that $B^\omega_{\rm ran}$  is differentiable if $\rho > \ln 2$.
We will make the following assumption

\medskip

\noindent
{\bf (R)}   $B_{\omega}$ is a  random
magnetic field  constructed   in \eqref{B}, \eqref{betadef}, and \eqref{bconst}, and it 
satisfies  \eqref{detbound},  \eqref{lowbound}, \eqref{lowbound1}, \eqref{2der}, 
 \eqref{newcondranb}, and one of the conditions  \eqref{kcond} or \eqref{kcond1}.

\medskip

Let $\Lambda \subset \bR^2$ be square    and we will consider the
magnetic Sch\"odinger operator  with Dirichlet boundary conditions
on $\Lambda$.
We will work in the Hilbert space $L^2(\Lambda)$ and
  denote the scalar product by $\langle \,\cdot ,\cdot\, \rangle$
and the norm by $\|\,\cdot\,\|$.
Let $A$ be a magnetic vector potential such that $\nabla \times A = B$.
By $H_\Lambda(A)$  we 
denote the magnetic Schr\"odinger operator on $L^2(\Lambda)$  with Dirichlet boundary
conditions, i.e.,
$H_\Lambda(A) = (p - A)^2  + V $.
Here $V$ is a bounded external potential.
In the special case where  $\Lambda_L = [-L/2, L/2]^2 \subset \R^2$ with 
 $L\in \bN$  we will write 
\begin{equation} \label{eq:defofdirichletham}
H_L(A)  = H_{\Lambda_L}(A) .
\end{equation}
By $H(A) = (p - A)^2 + V $, we denote the magnetic Schr\"odinger operator on $L^2(\R^2)$.
The magnetic Hamilton operators  can be realized by  the Friedrichs extension.
If we refer to statements which are independent of the particular choice of gauge,
with a slight abuse of notation, we shall
occasionally write $H_\Lambda(B)$ and $H(B)$.
If $\nabla \times A_\omega = B_\omega$ and $B_\omega$ satisfies {\bf (R)}, then 
 $\omega \to H_L(A_\omega)$ is measurable. This follows for example from an application of 
Proposition 1.2.6 \cite{S}.

\section{Main results} \label{sec:thm}

The first result is a Wegner estimate.
Fix an energy $E$ and a window of width $\eta\leq 1$ about $E$.
Let $\chi_{E,\eta}$ be the characteristic function of the interval
$[E-\eta/2, E + \eta/2]$.

\begin{theorem}\label{thm:wegner} Let $K_0 > 3$.
We assume that $B_\omega$ is a random magnetic field satisfying {\bf (R)}.
Let $A_\omega$ be a vector potential
with $\nabla\times A_\omega =B_\omega = B_{\rm det} + \mu B_{\rm ran}^\omega$ with $\mu\in (0,1]$.
  We assume $\rho \ge\ln 2$,
  $(1-e^{-\rho})b_0\ge 1$, and $\| V \|_\infty \leq b_0/4$.
Let $\delta\leq \delta_0$ and 
$K_1 \geq 1$. Then 
there  exist positive  constants $C_0=C_0(K_0,K_1)$, $C_1 = C_1(K_0,K_1)$, 
and $L_0^*=L_0^*(K_0,K_1,\delta)$  such that
for any $0 < \kappa \leq 1$ 
$$
   \bE \, \Tr \chi_{E,\eta}(H_L(A)) \leq C_0  \eta \mu^{-2}
L^{ C_1(\kappa^{-1}  + \rho)} , 
$$
for all  $E \in [ \frac{b_0}{2} ,  K_1 b_0]  $,  $0 < \eta \leq 1$, and  $L\ge L_0^* b_0^{\kappa}$.
\end{theorem}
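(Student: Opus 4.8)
The plan is to prove the Wegner estimate via the standard spectral-averaging strategy, but the crucial new ingredient must be a quantitative lower bound on how the eigenvalues of $H_L(A_\omega)$ move when the random field is varied. The key observation is that although the vector potential is nonlocal and a spatially stationary magnetic field does not give a stationary Hamiltonian, the magnetic field $B_\omega$ has a guaranteed positive lower bound $b_0$ (by \eqref{lowbound}), and the energy window $[b_0/2, K_1 b_0]$ sits above the bottom of the spectrum of a magnetic Laplacian with field $\ge b_0$ only by a bounded factor. Because $B_{\rm ran}^\omega$ is built from bumps $\beta_z^{(k)}$ living on \emph{every} dyadic scale, one can use the fine-scale modes $\om_z^{(k)}$ with $k$ large to perturb $B_\omega$ locally in a way that, after choosing a good gauge, changes $H_L(A_\omega)$ by a controlled additive perturbation. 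First I would fix a scale $k_*$ depending on $L$ and $\kappa$ (roughly $2^{k_*}\sim L^{c/\kappa}$), and regard the variables $\{\om_z^{(k_*)} : z \in \Lambda_L \cap \Lambda^{(k_*)}\}$ as the averaging parameters, freezing all the others.

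The heart of the argument is a \emph{positivity estimate}: I claim that for a suitable vector potential gauge, the derivative of $H_L(A_\omega)$ in the collective direction $t \mapsto \sum_z \om_z^{(k_*)} \to \sum_z (\om_z^{(k_*)} + t)$ (or some averaged combination of the single-site derivatives) is bounded below, in the sense of quadratic forms restricted to the spectral subspace $\chi_{E,\eta}(H_L)$, by a positive constant times the identity — this is the analogue of the covering condition $\sum_z u(x-z) \ge c > 0$ in the usual Anderson model. The magnetic analogue is subtler: changing $B$ by $\delta B$ changes $A$ by $\delta A$ with $\nabla \times \delta A = \delta B$, and $\frac{d}{dt}H = -\{(p-A), \delta A\}$, so one needs that the commutator/anticommutator of $\delta A$ with $p-A$, tested against low-energy states, is quantitatively nonzero. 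Here is where the positive lower bound $B_\omega \ge b_0$ and the bound $\|V\|_\infty \le b_0/4$ enter decisively: by a magnetic Lieb–Thirring / diamagnetic-type argument, eigenfunctions with eigenvalue below $K_1 b_0$ cannot be too spread out, and the magnetic field forces a nondegeneracy that propagates to the vector-potential perturbation. I expect the single-site spectral shift to be too weak individually, so the argument must be done with the full family at scale $k_*$ simultaneously, exploiting that the bumps $\beta_z^{(k_*)}$ tile $\Lambda_L$ and their gradients cover all directions.

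Granting the positivity estimate, the rest is routine: one writes
$$
\bE\, \Tr \chi_{E,\eta}(H_L(A_\omega)) = \bE \int \cdots \int \Tr\chi_{E,\eta}(H_L) \prod_{z} v_z^{(k_*)}(\om_z^{(k_*)})\, \rd\om_z^{(k_*)},
$$
changes variables so that the collective parameter $t$ appears, integrates by parts in $t$ (this is where the $C[\sigma^{(k)}]^{-2}$ bound \eqref{2der} on the second derivative of the densities is used, producing the factor $e^{2\rho k_*}$ which after the choice of $k_*$ becomes the polynomial $L^{C_1(\kappa^{-1}+\rho)}$), and uses $\Tr\chi_{E,\eta} \le \eta \cdot (\text{spectral derivative})$ together with the positivity bound (which is where the $\mu^{-2}$ comes from, since the random part is coupled with strength $\mu$). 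One must also control the number of sites at scale $k_*$ in $\Lambda_L$, which is $\sim L^2 2^{2k_*}$, a polynomial factor absorbed into $L^{C_1(\ldots)}$, and handle boundary terms from the integration by parts using the $C^2_0$ compact support of the densities.

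The main obstacle is unquestionably the magnetic positivity estimate — the replacement for the elementary fact that $\partial_{\om_z} H = u(\cdot - z) \ge 0$ in the electric case. In the magnetic setting the perturbation $-\{(p-A),\delta A\}$ is not sign-definite, so one genuinely needs a lower bound \emph{on the spectral subspace}, using that the field has a uniform positive lower bound, that $E$ is bounded, and that the fine-scale structure of $B_{\rm ran}^\omega$ allows localized flux insertions. I would expect this to require a careful gauge choice (e.g. a Poincaré-type gauge adapted to each bump, so that $\delta A$ is supported where $\delta B$ is), a magnetic Sobolev/localization estimate bounding $\|(p-A)\psi\|$ and the spatial spread of $\psi$ for eigenvalues below $K_1 b_0$, and a combinatorial/covering argument at scale $k_*$ to turn many weak single-site bounds into one uniform collective bound; the constants $\delta_0$, $K_0>3$, and the threshold $L_0^* b_0^\kappa$ are presumably exactly what is needed to make this collective estimate close.
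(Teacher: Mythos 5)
The central step in your proposal --- a positivity estimate for the derivative of $H_L(A_\omega)$ along a single collective direction $Y = \sum_z \partial/\partial\om_z^{(k_*)}$ (or some fixed linear combination of the $Y_z$), restricted to the spectral window --- is precisely the step that fails, and Section~\ref{sec:methods} of the paper explains why. First-order perturbation theory gives $Y_z\lambda = \mu\int\al_z\cdot j_\psi$, where $j_\psi$ is the divergence-free current of the eigenfunction and $\al_z$ is a compactly supported gauge for the bump $\beta_z$. Unlike the electric case, where $\partial_{\om_z}\lambda = \int u(\cdot-z)|\psi|^2\ge 0$, the current is a vector with no preferred sign: for a nearly constant field the low-energy states are approximately Landau coherent states with circular currents centered at arbitrary points, so the pairing $\int\big(\sum_z\al_z\big)\cdot j_\psi$ can be positive or negative depending on where the state sits relative to the gauge origin. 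The lower bound $B\ge b_0$ and the restriction $E\le K_1 b_0$ constrain the magnitude of $j_\psi$ but impose no sign on this pairing, so no covering-type inequality of the form you are aiming for can hold; the ``nondegeneracy'' you invoke does not propagate to a sign.

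What the paper does instead (Proposition~\ref{prop:lower}) is to lower-bound the \emph{sum of squares} $\sum_z(Y_z\lambda)^2$, which is automatically non-negative; the quantitative positivity comes from a lower bound on $\|j_\psi\|_{L^2}$ via $B\ge b_0$ and a pointwise nonvanishing of $\psi$ (Lemma~\ref{lm:jnorm}), an elliptic-regularity upper bound on $\|\nabla j_\psi\|_{L^2}$ (Lemma~\ref{lm:jder}), and a Poincar\'e-type frame argument (Proposition~\ref{prop:ce}) showing that the fine-scale gauges $\al_z^{(k)}$ on $\Lambda^{(k)}$ reconstruct the low-frequency content of $j_\psi$, so some $z$ must see it. Because one controls only a sum of squares, not a monotone direction, the standard once-integrated-by-parts spectral averaging is unavailable. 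The paper replaces it by a second-order argument: the identity $Y^2G(\tau_\ell)=(Y^2\tau_\ell)F(\tau_\ell)+(Y\tau_\ell)^2\chi(\tau_\ell)$ together with the trace inequality of Lemma~\ref{lm:trick}, $\Tr(Y^2T)F(T)\le\sum_\ell(Y^2\tau_\ell)F(\tau_\ell)$, reduces everything to $\Tr Y^2T$, which is computable because $H$ is quadratic in $\omega$; this sidesteps the eigenvector-derivative degeneracies you would otherwise face. This is also why hypothesis~\eqref{2der} controls the $L^1$ norm of $v_z''$: two integrations by parts are performed, not one as in your sketch. Your choice of scale $2^{k_*}\sim L^{O(1/\kappa)}$, the $\mu^{-2}$ factor, and the appearance of the density regularity in the final polynomial exponent are all in line with the paper, but without the sum-of-squares mechanism and Lemma~\ref{lm:trick} the argument does not close.
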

The next theorem is a standard result stating that the spectrum is deterministic.
For this we need that the random magnetic field is stationary  on each scale:

\medskip
\noindent
{\bf (i.i.d.)}  For any fixed $k \in \N$,  $\{ \omega_z^{(k)} : z \in \Lambda^{(k)} \}$  
are i.i.d., i.e.,  { $v^{(k)} = v^{(k)}_z$} .
\medskip

\begin{theorem} \label{thm:deterministic} Suppose $B_\omega$ is a random magnetic field such that
{\bf (R)} and {\bf  (i.i.d.)} hold. For  $\nabla \times A_\omega = B_\omega$, the function 
 $\omega \mapsto H(A_\omega)$ is measurable.
There exists a 
set $\Sigma \subset \R$  and  a set
 $\Omega_1 \subset \Omega$ with  ${\bf P}(\Omega_1) = 1$ such that for all $\omega \in \Omega_1$
$$
\sigma( H(B_\omega)) = \Sigma.
$$
\end{theorem}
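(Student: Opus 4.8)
\medskip
\noindent\textbf{Proof strategy.} The plan is to obtain this from the standard ergodicity theorem for ergodic families of self-adjoint operators (as in the work of Kirsch and Martinelli, and the monographs of Pastur--Figotin and Carmona--Lacroix): once one exhibits a measure-preserving ergodic group action $\{\tau_a\}$ on $(\Omega,\P)$ together with unitaries $U_a$ satisfying $U_aH(A_\omega)U_a^{*}=H(A_{\tau_a\omega})$, the spectrum $\sigma(H(B_\omega))$ — and in fact each of its spectral components — is $\P$-almost surely equal to a deterministic closed set. The two points needing attention here are the multiscale index set $\mathcal{L}$ and the fact that spatial stationarity holds for the magnetic field but not for the vector potential.

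\emph{Step 1: the ergodic action.} Because $\bbZ\subset 2^{-k}\bbZ$ for every $k\in\bN$, we have $\bbZ^2\subset\Lambda^{(k)}$ for all $k$, so for $a\in\bbZ^2$ the shift $\tau_a\colon\Omega\to\Omega$, $(\tau_a\omega)^{(k)}_z:=\omega^{(k)}_{z-a}$, carries each lattice $\Lambda^{(k)}$ onto itself and is well defined. Under {\bf (i.i.d.)} the variables within any fixed scale are identically distributed, so each $\tau_a$ preserves $\P$; and since each scale is an independent i.i.d.\ field, hence mixing under the $\bbZ^2$-shift, and distinct scales are mutually independent, the diagonal action $\{\tau_a:a\in\bbZ^2\}$ on $\Omega=\R^{\mathcal{L}}$ is mixing, in particular ergodic.

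\emph{Step 2: magnetic covariance.} From $\beta^{(k)}_z(x-a)=u\big(2^k(x-a-z)\big)=\beta^{(k)}_{z+a}(x)$ and $a+\Lambda^{(k)}=\Lambda^{(k)}$ one computes $B^{\tau_a\omega}_{\rm ran}(x)=B^{\omega}_{\rm ran}(x-a)$. Assuming — as one must for a deterministic spectrum, and therefore does — that $B_{\rm det}$ and $V$ are $\bbZ^2$-periodic, this gives $B_{\tau_a\omega}(x)=B_\omega(x-a)$ and $V(x-a)=V(x)$. Hence $x\mapsto A_\omega(x-a)$ is a vector potential for $B_{\tau_a\omega}$, so it differs from $A_{\tau_a\omega}$ by a gradient, $A_{\tau_a\omega}=A_\omega(\cdot-a)+\nabla\varphi_{a,\omega}$ with $\varphi_{a,\omega}$ real; the magnetic translation $U_{a,\omega}:=e^{i\varphi_{a,\omega}}T_a$, with $(T_af)(x)=f(x-a)$, is unitary on $L^2(\bR^2)$, preserves the Friedrichs form domain, and — using $T_apT_a^{*}=p$ and $e^{i\varphi}pe^{-i\varphi}=p-\nabla\varphi$ — satisfies $U_{a,\omega}H(A_\omega)U_{a,\omega}^{*}=H(A_{\tau_a\omega})$. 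In particular $\sigma(H(B_{\tau_a\omega}))=\sigma(H(B_\omega))$ for every $a\in\bbZ^2$.

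\emph{Step 3: conclusion, and the main obstacle.} Since $\omega\mapsto H(A_\omega)$ is measurable, for fixed $f\in L^2$ and a rational interval $I$ the map $\omega\mapsto\langle f,\chi_I(H(A_\omega))f\rangle$ is measurable (approximate $\chi_I$ from above by continuous functions and pass to the limit in the spectral calculus); hence $\{\omega:\chi_I(H(A_\omega))=0\}$ is measurable, and by Step~2 it is $\{\tau_a\}$-invariant, so ergodicity forces it to have probability $0$ or $1$. Running this over a countable dense family of intervals and using that spectra are closed produces a full-measure set $\Omega_1$ and a deterministic closed set $\Sigma$ — the set of $\lambda$ every neighbourhood of which almost surely meets $\sigma(H(B_\omega))$ — with $\sigma(H(B_\omega))=\Sigma$ on $\Omega_1$; applied to the spectral projections the same reasoning also yields a.s.\ constancy of the pure point, absolutely continuous and singular continuous spectra, which we record for later use. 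The one genuinely non-routine step is Step~2: because the vector potential is nonlocal, translation alone does not conjugate $H(A_\omega)$ to $H(A_{\tau_a\omega})$, so one must produce the gauge phases $\varphi_{a,\omega}$, verify the conjugation identity on the Friedrichs form core, and check enough joint measurability in $(a,\omega)$ for Step~3 to run. A minor caveat is that periodicity of $B_{\rm det}$ and $V$ is genuinely used; without it the spectrum need not be deterministic.
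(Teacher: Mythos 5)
Your proof follows essentially the same route as the paper's Appendix~A: magnetic translations $\mathcal{V}_a=e^{i\lambda_a[B]}\mathcal{U}_a$ conjugating $H(A_\omega)$ to $H(A_{\tau_a\omega})$, ergodicity of the diagonal $\Z^2$-shift across all scales, and the standard zero--one argument for the spectrum. There is, however, one genuine omission: measurability of $\omega\mapsto H(A_\omega)$ on the whole plane is part of the assertion, and in Step~3 you invoke it without establishing it. The remark at the end of Section~2 only covers the finite-volume operators $H_L(A_\omega)$; the paper extends this to $L^2(\R^2)$ by showing $\langle f,(H_{\Lambda_l}(A_\omega)-z)^{-1}g\rangle\to\langle f,(H(A_\omega)-z)^{-1}g\rangle$ for $f,g\in C_0^\infty$ using the geometric resolvent identity \eqref{eq:gri} together with the Combes--Thomas decay estimate (Theorem~\ref{thm:combthomas}), so that the infinite-volume resolvent is a pointwise limit of measurable functions and hence measurable.

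Your caveat about periodicity is in fact a valid correction to the statement as written. The covariance $B_{T_a\omega}=\mathcal{T}_aB_\omega$ of equation~\eqref{eq:tranofranb}, and hence Proposition~\ref{thm:gaugtrafo2} on which the paper's proof rests, requires $\Z^2$-periodicity of $B_{\rm det}$; and for the conjugation $\mathcal{V}_aH(A[B_\omega])\mathcal{V}_a^*=H(A[B_{T_a\omega}])$ to hold for $H(A)=(p-A)^2+V$ one also needs $V$ to be $\Z^2$-periodic. These hypotheses are part of {\bf (A)} but not of {\bf (R)} or {\bf (i.i.d.)}, so the theorem tacitly uses them and you are right to flag this. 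Otherwise the covariance construction in Step~2 and the ergodicity argument in Steps~1 and~3 match the paper's in substance; the paper is merely more explicit, writing down the fixed gauge $A[B]$ and the phase $\lambda_a[B]$ as a line integral.
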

 For completeness, we give a  proof of Theorem \ref{thm:deterministic} in  Appendix A.

The second main result is localization at the bottom of the spectrum.
We make additional assumptions on the profile function, namely that
\begin{equation} \label{eq:profileidentity1}
  U(x): =  \sum_{z \in \Z^2} u(x - z) \geq c_u  \qquad \mbox{and}        
 \quad  {\rm sup}_{x \in \R^2} U(x)   = 1 ,
\end{equation}
for some  positive constant $c_u >0$.

The result about localization will hold  under  the  following Hypotheses.

\medskip

\noindent
{\bf (A)}    $V$   and $B_{\rm det}$ are  $\Z^2$-periodic, and  
 $B_\omega$ is a  random
magnetic field  satisfying {\bf (R)}  with   profile function
satisfying
\eqref{eq:profileidentity1}.   Hypothesis {\bf (i.i.d.)} holds and ${\rm supp} v^{(k)}$ is 
a compact interval $[m_-^{(k)},m_+^{(k)}]$.

\medskip

\medskip

Second we assume a polynomial bound on the lower tail of $v^{(0)}$. To this end we 
introduce the probability distribution function 
\be \label{eq:probdist1}
 \nu(h) := \int_{m_-^{(0)}}^{m_-^{(0)}+ h } v^{(0)}(x) \rd x  .
\ee
\medskip

\noindent
 {\bf ($\boldsymbol{{\rm A}_\tau}$)} Hypotheses 
{\bf (A)} holds and  there exists a constant $c_v$ 
 such that for all $h \geq 0$ we have 
$
 \nu(h) \leq c_v h^\tau$.

\medskip

The next theorem states that we have Anderson localization at the bottom of the spectrum.
Recall that $\Sigma$  denotes the almost sure
 deterministic spectrum of  $H(B_\omega)$, see Theorem  \ref{thm:deterministic}, and let 
  $\Sigma_{\rm inf}$ be its infimum. 
We will  assume that  the following 
quantity is finite 
\begin{align*}
& K_2 :=   \max_{|\alpha|=1} \left[ (1- 2 e^{- \rho})^{-1}   \| D^\alpha  U  \|_\infty  + 
  \| D^\alpha  B_{\rm det} \|_\infty  +  \| D^\alpha V \|_\infty    \right]  .
\end{align*}
where  we used the multi-indices notation
$D^\alpha = \partial_{x_1}^{\alpha_1}\partial_{x_2}^{\alpha_2}$ with $\alpha \in \N_0^2$
and $|\alpha|= \alpha_1 + \alpha_2$. To show localization we will need that 
$K_2 b_0^{-1/2}$ is small. A more  explicit relation between  $b_0$ and  derivatives 
of $U$, $B_{\rm det}$, and $V$ 
can be  obtained from the first inequality of  \eqref{eq:bdepend}  given  in the proof.

\begin{theorem} \label{thm:loclization} 
Let    $K_0 > 3$, $K_1 \geq 1$, 
$\rho > \ln 2$,
  $(1-e^{-\rho})b_0\ge 1$,  $\| V \|_\infty \leq b_0/4$.
Suppose  {\bf ($\boldsymbol{{\rm A}_\tau}$)} holds for some  $\tau > 2 $,
and let $B_\omega = B_{\rm det}+ \mu B_{\rm ran}^\omega$ with $\mu\in (0,1]$ be the random
magnetic field with a vector potential  $A = A_\omega$. 
If  $K_2 b_0^{-1/2}$  is sufficiently small, then 
there exists an $\e_0>0$ such that for almost every $\omega$ the operator  $H(A_\omega)$ has in 
 $[\Sigma_{\rm inf}, \Sigma_{\rm inf}  + \e_0]$
 dense  pure point spectrum with exponentially decaying eigenfunctions.
For  $p < 2  (\tau - 2)$,  there exists an $ \widetilde{\e}_0 > 0$ 
such that for any subinterval $I \subset [\Sigma_{\rm inf}, \Sigma_{\rm inf}  + \widetilde{\e}_0]$ 
and any compact subset $K \subset \R^2$,
we have
\be \label{eq:dynloc}
\bE\left\{ \sup_t \left\| | X|^p e^{ - i H(A) t} {\bf 1}_I(H(A)) \chi_K \right\| \right\} < \infty .
\ee
\end{theorem}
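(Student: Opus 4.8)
The plan is to deduce Theorem \ref{thm:loclization} from the multiscale analysis (MSA) machinery, using Theorem \ref{thm:wegner} as the Wegner input and establishing the complementary \emph{initial length scale estimate} (ILSE) at the bottom of the spectrum. The starting point is that near $\Sigma_{\rm inf}$ the deterministic operator $H(B_{\rm det}) = (p-A_{\rm det})^2 + V$ has a spectral gap above $\Sigma_{\rm inf}$ governed by the lower bound $b_0$ on the magnetic field: for a magnetic field bounded below by $b$, the operator $(p-A)^2$ has ground-state energy controlled from below in terms of $b$ (this is where the flux lower bound genuinely enters). The term $K_2 b_0^{-1/2}$ being small quantifies that the derivatives of $U$, $B_{\rm det}$, $V$ are a small perturbation on the natural magnetic length scale $b_0^{-1/2}$, so that the lowest magnetic band stays separated and one has a quantitative lower bound on $\Sigma_{\rm inf}$ in terms of $b_0$, and the spectrum just above $\Sigma_{\rm inf}$ is thin.

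First I would set up the finite-volume operators $H_L(A_\omega)$ on $\Lambda_L$ with Dirichlet boundary conditions and record the Combes--Thomas estimate, which gives exponential off-diagonal decay of the resolvent $(H_L - E)^{-1}$ at energies in a resolvent set with a rate proportional to the distance to the spectrum. Second, I would prove the ILSE: there is an energy $E_0 > \Sigma_{\rm inf}$ and, for suitable $L_1$, a bound $\mathbf{P}\{ \mathrm{dist}(\sigma(H_{L_1}(A_\omega)), [\Sigma_{\rm inf}, E_0]) \le \text{something}\} \le L_1^{-q}$ with $q$ large. This is the point where the lower tail hypothesis {\bf ($\boldsymbol{{\rm A}_\tau}$)}, $\nu(h) \le c_v h^\tau$, is used: with high probability all the scale-$0$ random variables $\omega_z^{(0)}$ on a box of side $L_1$ exceed a threshold $h_0 = L_1^{-\alpha}$ except on an event of probability $\lesssim L_1^2 (c_v h_0^\tau)$, which is $L_1^{-q}$ once $\alpha\tau - 2 > q$; on that event the random magnetic field $B_\omega$ is bounded below by $B_{\rm det} + \mu h_0 c_u$ on the bulk of $\Lambda_{L_1}$, lifting the bottom of the spectrum of $H_{L_1}$ by a definite amount $\gtrsim \mu h_0 c_u$ above $\Sigma_{\rm inf}$ (using the periodicity in {\bf (A)} so that $\Sigma_{\rm inf}$ itself is computed from a periodic operator and diamagnetic/flux lower bounds give monotonicity of the bottom of the spectrum in the field). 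Choosing $\e_0 \ll \mu h_0 c_u$ then yields the ILSE on $[\Sigma_{\rm inf}, \Sigma_{\rm inf} + \e_0]$.

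Third, with the Wegner estimate (Theorem \ref{thm:wegner}, applicable since $\Sigma_{\rm inf} \ge b_0/2 \le E \le K_1 b_0$ in the relevant window, giving the needed $\eta\,\mu^{-2} L^{C_1(\kappa^{-1}+\rho)}$ bound) and the ILSE in hand, I would run the bootstrap multiscale analysis of Germinet--Klein (or the classical Fr\"ohlich--Spencer--von Dreifus--Klein scheme) to propagate exponential decay of finite-volume resolvents through all scales with probabilities $1 - L^{-p}$. The only structural input the MSA additionally needs is \emph{independence at a distance} — that $H_{\Lambda}(A_\omega)$ and $H_{\Lambda'}(A_\omega)$ are stochastically independent when $\Lambda, \Lambda'$ are far apart — and this is the genuinely delicate point for magnetic models because the vector potential is nonlocal: although the fields $B^{(k)}_z$ are independent and compactly supported, the gauge potential $A_\omega$ is not local in $\omega$. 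I expect this to be \textbf{the main obstacle}, and I would handle it exactly as in \cite{U,GHK}: on each box choose a gauge (e.g. the transverse/Poincar\'e gauge centered in the box, or a gauge built only from the fields supported near the box) so that, up to a gauge transformation, $H_\Lambda(A_\omega)$ depends only on $\{\omega_z^{(k)} : z \text{ near } \Lambda\}$ plus a harmless deterministic long-range contribution; gauge covariance of the spectrum and of $|X|^p$-weighted norms then makes the MSA conclusions gauge-independent. Finally, the MSA output (simultaneous exponential decay of resolvents at all scales off a set of energies of full measure, plus the Wegner bound) is converted by the standard argument into almost-sure pure point spectrum with exponentially decaying eigenfunctions on $[\Sigma_{\rm inf}, \Sigma_{\rm inf}+\e_0]$, and the strong form of dynamical localization \eqref{eq:dynloc} follows from the Germinet--Klein equivalence, with the polynomial moment threshold $p < 2(\tau-2)$ emerging from how fast the ILSE probability decays — i.e. from the exponent $\alpha\tau$ available in the lower-tail estimate — which fixes $\widetilde\e_0$ and the admissible $p$.
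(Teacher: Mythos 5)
Your overall route is the paper's route: prove an initial length scale estimate using the lower tail hypothesis ${\bf (A_\tau)}$, use Theorem~\ref{thm:wegner} as the Wegner input, run multiscale analysis, and convert the MSA output into pure point spectrum and dynamical localization (the paper does this via Stollmann's setup rather than Germinet--Klein, but that is a cosmetic choice). You also correctly flag the independence-at-a-distance issue; the paper handles it in Lemma~\ref{lem:INDY} by observing that $|R_\Lambda(A_\omega,z)(x,y)|$ is gauge-invariant, hence depends only on $B_\omega\!\upharpoonright\!\Lambda$, and that restrictions to squares at distance $\ge c_\delta$ are independent --- essentially your local-gauge argument.

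However, there is a genuine gap in how you locate the localization interval, and this is exactly where the hypothesis ``$K_2 b_0^{-1/2}$ small'' is actually used. The initial length scale estimate (Theorem~\ref{lem:probestoninfspec}, Corollary~\ref{lem:probestoninfspec2}) is naturally anchored at $E_{\rm inf}$, the pointwise infimum of $B_{\omega_-}+V$, because on the good event all $\omega_z^{(0)}$ exceed $m_-^{(0)}+c_u^{-1}h$ and Lemma~\ref{lem:estoninfspec} pushes $\inf\sigma(H_\Lambda)$ above $E_{\rm inf}+\mu h$. So the MSA produces localization on $J_0 = [E_{\rm inf},\, E_{\rm inf}+e_0]$, \emph{not} directly on $[\Sigma_{\rm inf},\Sigma_{\rm inf}+\e_0]$. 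In general $\Sigma_{\rm inf} > E_{\rm inf}$ (strictly, unless $K_2=0$), so it is not a priori clear that $J_0$ contains any spectrum at all; your proposal does not address this. The paper closes the gap via Theorem~\ref{thm:estonspec}: part (a) gives $\Sigma_{\rm inf}\le E_{\rm inf}+K(b_0)$ with $K(b_0)\lesssim K_2 b_0^{-1/2}$; the smallness assumption on $K_2 b_0^{-1/2}$ is invoked precisely to force $K(b_0)\le \tfrac12 e_0$; and parts (b), (c) (spectrum sweeps an interval from $\Sigma_{\rm inf}$ up to $E_{\rm sup}$) then yield $J_0\cap\Sigma = [\Sigma_{\rm inf},\Sigma_{\rm inf}+e_1]$ with $e_1>0$. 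Your explanation of $K_2 b_0^{-1/2}$ (``lowest magnetic band stays separated'', ``spectrum just above $\Sigma_{\rm inf}$ is thin'') points at Landau-level intuition that the paper does not use; the actual role is the non-vacuousness argument just described. You would need to add the $E_{\rm inf}$ versus $\Sigma_{\rm inf}$ comparison --- i.e.\ the trial-state upper bound on $\Sigma_{\rm inf}$ and the inner bound on $\Sigma$ --- before your proof would go through.
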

We will use the notation that ${\bf 1}_S$ as well as   $\chi_S$  denotes the 
characteristic function of a set $S$.

\medskip

\noindent
{\it Remark.}
 We note that if $K_2 =0$, then no large $b_0$ assumption is  necessary, that is, the assertion 
of the theorem holds for any  $b_0\ge 2$. 
 Now $K_2 =0$ 
holds provided $B_{\rm det}$ and  $V$ are constant and  $U =  1$.
The condition  $U =  1$ can be realized for 
example  as follows. We 
choose  $\varphi \in C^\infty_0(\R^2;[0,1])$ with
$\varphi(x) = 0$,  if $ |x| \geq 1$, $\int \varphi = 1 $,
and  set, for $s > 0$,
$ u =  {\bf 1}_{\{ | x |_\infty \leq 1/2 \}} \ast \varphi_s$ and  $\varphi_s(x) = s^{-2} \varphi(x/s)$.
Conditions \eqref{kcond} or
\eqref{kcond1} can be satisfied by taking $s$ sufficiently small or sufficiently large, respectively. 

\medskip

The next theorem provides estimates on the location of  the deterministic spectrum $\Sigma$ 
of $H(B_\omega)$, under the influence of
the random potential. It will be used in the proof of Theorem \ref{thm:loclization}.
To formulate it, we define  two specific configurations of the collection of random variables,
$\omega_+$ and $\omega_-$, by $(\omega_{\pm})_z^{(k)} := m_{\pm}^{(k)}$,  and we set
\be\label{Esup}
E_{\rm inf} := \inf_{x \in \R^2} \left[ B_{\omega_-}(x) + V(x) \right] , \quad
E_{\rm sup} := \inf_{x \in \R^2} \left[ B_{\omega_+} (x) + V(x) \right].  
\ee
Moreover, we will write $M_{\pm} = \sum_{k=0}^\infty m^{(k)}_\pm$.
Note that $|m^{(k)}_{\pm} | \leq \sigma^{(k)}$ and thus $|M_{\pm}| \leq (1 - e^{- \rho})^{-1}$.

\begin{theorem} \label{thm:estonspec} Suppose {\bf (A)}  holds, 
 and let  $\rho > \ln 2$. 
Then  the following statements hold:
\begin{itemize}
\item[(a)] We have 
\be \label{eq:largeb0reg}
   E_{\rm inf}      \leq      \Sigma_{\rm inf}  \leq E_{\rm inf} + 4 K_2^2b_0^{-2} + 
\min(K_2 b_0^{-1/2} ,  K_3 b_0^{-1} ) ,
\ee
where we defined 
\be \label{eq:defofK3}
K_3  := 
2 \max_{|\alpha|=2} \left\{  \| D^\alpha B_{\rm det} \|_\infty +
 ( 1 - 4 e^{-\rho})^{-1} \| D^\alpha U \|_\infty + \| D^\alpha V \|_\infty \right\} ,
\ee 
if  $4 e^{-\rho} < 1$, and  $K_3 := \infty$ otherwise.
\item[(b)]  We have $E_{\rm inf} + \mu c_u (M_+ - M_-) \leq E_{\rm sup}$. 
\item[(c)] If 
$\Sigma_{\rm inf}  <   E_{\rm sup}$, then 
\be  \label{eq:estonspec}
 \Sigma  \supset
 [ \Sigma_{\rm inf}    ,   E_{\rm sup} ] .
\ee
 \item[(d)] In the special case when  $U  = 1$, $B_{\det}$ is constant and $V=0$, 
then $\Sigma_{\rm inf} =  B_{\rm det} + \mu  M_- $ and
 $$
\Sigma  \supset  \bigcup_{n \in \N_0} \{ ( 1 + 2 n ) ( B_{\rm det} + \mu [ M_- , M_+]  ) \} .
$$
\end{itemize}
\end{theorem}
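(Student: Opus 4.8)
\emph{Part (a).} The lower bound $E_{\rm inf}\le\Sigma_{\rm inf}$ is a standard diamagnetic/ground-state-energy comparison: since $B_\omega(x)+V(x)\ge E_{\rm inf}$ pointwise for $\omega=\omega_-$ (the configuration making the field smallest where it matters), and for a $2$-dimensional magnetic Schr\"odinger operator with field $B$ one has the lower bound $(p-A)^2+V\ge B+V$ in the form-sense (Lichnerowicz/Avron--Herbst--Simon type inequality, valid after Friedrichs extension), we get $\inf\sigma(H(B_{\omega_-}))\ge E_{\rm inf}$, and by monotonicity of ground state energies in $\omega$ together with Theorem \ref{thm:deterministic} this transfers to $\Sigma_{\rm inf}$. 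For the upper bound I would construct a trial function: localize near a point $x_0$ where $B_{\omega_-}(x_0)+V(x_0)=E_{\rm inf}$, use the explicit lowest-Landau-level Gaussian of the constant field $B_{\omega_-}(x_0)$ as an Ansatz on scale $b_0^{-1/2}$, and estimate the error from (i) the variation of $B_{\rm det}$, $U$ (hence $V$-like terms) and $V$ over that scale, which contributes the $K_2b_0^{-1/2}$ or, via a Taylor expansion to second order, the $K_3b_0^{-1}$ term, and (ii) the random part, which since $\omega$ ranges over all admissible configurations can be pushed to be $\ge\omega_-$ only up to the tail correction $4K_2^2b_0^{-2}$. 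The min of the two bounds just reflects whether one expands $B+V$ to first or second order.

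\emph{Part (b).} This is a direct pointwise comparison using \eqref{eq:profileidentity1}: $B_{\omega_+}-B_{\omega_-}=\mu\sum_{k,z}(m_+^{(k)}-m_-^{(k)})\beta_z^{(k)}$, and summing $\beta_z^{(k)}(x)=u(2^k(x-z))$ over $z\in\Lambda^{(k)}$ gives $\sum_z\beta_z^{(k)}(x)=U(2^kx)\ge c_u$; hence $B_{\omega_+}(x)-B_{\omega_-}(x)\ge\mu c_u\sum_k(m_+^{(k)}-m_-^{(k)})=\mu c_u(M_+-M_-)\ge0$. Adding $V$ to both and taking infima over $x$ yields $E_{\rm inf}+\mu c_u(M_+-M_-)\le E_{\rm sup}$.

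\emph{Part (c).} Here the plan is: for each intermediate value $E\in[\Sigma_{\rm inf},E_{\rm sup}]$ produce a configuration $\omega_E$, interpolating between $\omega_-$ and $\omega_+$, whose deterministic spectrum contains $E$. By part (a) applied with $\omega_E$ in place of $\omega_-$ (the argument of (a) works verbatim for any constant-shift configuration), $\inf\sigma(H(B_{\omega_E}))$ is sandwiched near $\inf_x[B_{\omega_E}(x)+V(x)]$, which moves continuously from something $\le\Sigma_{\rm inf}$ (at $\omega_-$, by (a)) up through $E_{\rm sup}$ (at $\omega_+$) as we turn the knob; since the set of attainable constant shifts is an interval, an intermediate-value argument gives an $\omega_E$ with ground-state energy exactly $E$. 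Finally, by Theorem \ref{thm:deterministic} the spectrum of $H(B_{\omega_E})$ is almost surely $\Sigma$, so $E\in\Sigma$; since such $\omega_E$ is admissible (lies in the support of $\P$ because each coordinate stays in $[m_-^{(k)},m_+^{(k)}]$), we conclude $[\Sigma_{\rm inf},E_{\rm sup}]\subset\Sigma$, using closedness of $\Sigma$ at the endpoint.

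\emph{Part (d).} When $U\equiv1$, $B_{\rm det}$ constant, $V=0$, the configuration $\omega$ with all coordinates equal to $s^{(k)}\in[m_-^{(k)},m_+^{(k)}]$ gives the \emph{constant} field $B_{\rm det}+\mu\sum_k s^{(k)}$, whose Landau spectrum is exactly $\{(1+2n)(B_{\rm det}+\mu\sum_k s^{(k)}):n\in\N_0\}$. As $(s^{(k)})_k$ ranges over $\prod_k[m_-^{(k)},m_+^{(k)}]$, the sum $\sum_k s^{(k)}$ sweeps the interval $[M_-,M_+]$; the identity $\Sigma_{\rm inf}=B_{\rm det}+\mu M_-$ follows from part (a) (with $K_2=0$, $K_3=0$, so both correction terms vanish), and each such constant-field spectrum is contained in $\Sigma$ by Theorem \ref{thm:deterministic} as in (c). Taking the union over $n$ and over $s^{(k)}$ gives the stated inclusion.

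\textbf{Main obstacle.} The delicate point is part (a): getting the \emph{quantitative} upper bound on $\Sigma_{\rm inf}$ with the precise error terms $4K_2^2b_0^{-2}$ and $\min(K_2b_0^{-1/2},K_3b_0^{-1})$ requires a careful trial-function computation, controlling the magnetic energy of a Gaussian against a non-constant field and potential on the magnetic length scale, and correctly bounding how far the random part can be made to deviate below $\omega_-$'s contribution. The other parts are essentially soft (IVT, pointwise comparison, explicit Landau levels) once (a) is in hand.
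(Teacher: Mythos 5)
Your overall skeleton matches the paper: the lower bound in (a) via the form inequality $H_\Lambda(A)\ge \pm B+V$ (Lemma~\ref{lem:estoninfspec}); a lowest-Landau-level Gaussian trial function at a minimizer of $B_{\omega_-}+V$ for the upper bound; a pointwise comparison for (b); an intermediate-value sweep from a near-minimizing configuration to $\omega_+$ for (c); and constant configurations with explicit Landau levels for (d). However, there are two genuine gaps.

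\textbf{Misattribution of $4K_2^2b_0^{-2}$ in (a).} You describe this term as ``the random part\ldots can be pushed to be $\ge\omega_-$ only up to the tail correction $4K_2^2b_0^{-2}$.'' That is not where it comes from. The trial computation is performed entirely at the fixed (deterministic) configuration $\omega_-$; the paper writes $\langle\varphi_0,H(A_{\rm inf})\varphi_0\rangle = \|\,[(p-A_{\rm inf})_1+i(p-A_{\rm inf})_2]\varphi_0\,\|^2 + \langle\varphi_0,(B_{\omega_-}+V)\varphi_0\rangle$ and the $4K_2^2b_0^{-2}$ error arises because the Gaussian $\varphi_0$ is annihilated by $(p-A_0)_1+i(p-A_0)_2$ for the \emph{constant}-field gauge $A_0$ but not for the true gauge $A_{\rm inf}$ of $B_{\omega_-}$; the difference $\|[(A_0-A_{\rm inf})_1+i(A_0-A_{\rm inf})_2]\varphi_0\|^2$ is estimated by a Taylor expansion of $B_{\omega_-}$ around $\widetilde x$ and a Gaussian integral, giving $4K_2^2b_0^{-2}$. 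It is a kinetic/gauge-mismatch error of the same Taylor-expansion nature as the other two terms, not a probabilistic tail correction. Since your summary flags the quantitative bound in (a) as the delicate point, you should realize this ledger is essential and currently wrong.

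\textbf{The continuity and inner-spectrum inputs in (c) are missing.} You assert that ``the argument of (a) works verbatim for any constant-shift configuration'' to conclude that $\inf\sigma(H(B_{\omega_E}))$ tracks $\inf_x[B_{\omega_E}(x)+V(x)]$ and use an IVT. Part (a), however, bounds $\Sigma_{\rm inf}$, not $\inf\sigma(H(B_\omega))$ for arbitrary fixed $\omega$, and the quantitative upper bound obtained there does not give you continuity of the map $\omega\mapsto\inf\sigma(H(B_\omega))$ for infinite-volume Hamiltonians. The paper actually proves this continuity along the specific path $\omega_s$ by showing (Lemma~\ref{lem:unfiform}) that $\inf\sigma(H_{nL_0}(B_{\omega_s}))\to\inf\sigma(H(B_{\omega_s}))$ uniformly in $s$ via the IMS formula and periodicity; that is the real work of (c). In addition, you invoke ``admissible (lies in the support of $\P$)'' to conclude $\sigma(H(B_{\omega_E}))\subset\Sigma$, but being in the topological support does not by itself give this inclusion. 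The paper uses Theorem~\ref{thm:ibspec}, which shows $\Sigma\supset\bigcup_L\bigcup_{\omega\in\mathcal{V}_L}\sigma(H(B_\omega))$ for all \emph{$L$-periodic} configurations $\omega$ with coordinates in $[m_-^{(k)},m_+^{(k)}]$, and this is the reason the interpolating path $\omega_s$ is constructed to be $L_0$-periodic. Finally, starting the path at $\omega_-$ does not put you at $\Sigma_{\rm inf}$: the paper starts from an $\e$-approximate minimizer $\omega^\e$ (from Theorem~\ref{thm:ibspec}) and then lets $\e\to 0$, using closedness of $\Sigma$.

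Parts (b) and (d) as you state them are fine and match the paper (with (d) also needing the inner-spectrum Theorem~\ref{thm:ibspec} to place each Landau spectrum inside $\Sigma$, which you effectively use but should cite).
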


\medskip

\noindent
{\it Remark.}
The finiteness of $K_3$ improves the upper bound on $\Sigma_{\rm inf}$ in the 
large $b_0$ regime, see \eqref{eq:largeb0reg}, but it requires higher regularity on the data.
We also remark that in view of (a) and (b) the condition $\Sigma_{\rm inf} <  E_{\rm sup}$ 
in (c) can be guaranteed if $c_u > 0$ and $b_0$ is sufficiently large.

\medskip

The paper is organized as follows.
In Section \ref{sec:methods} some previous methods to obtain a Wegner estimate are presented.
Sections \ref{sec:proofwegner}--\ref{sec:reg} are devoted to the proof of the Wegner estimate
as stated in Theorem \ref{thm:wegner}.  Its proof is given in Section  \ref{sec:proofwegner}
modulo the key  Proposition \ref{prop:lower}, whose proof is
given in Section \ref{sec:pr}. Section \ref{sec:reg}
contains some elliptic regularity estimates needed in Section \ref{sec:pr}.
The ergodicity property needed to show Theorem \ref{thm:deterministic}
will be given in Appendix A.
In Sections  \ref{sec:inspecbound}-\ref{sec:multiscale} we explain how the Wegner estimate leads to Anderson
localization.
In Section \ref{sec:inspecbound} an inner bound
on the deterministic spectrum is shown, i.e., a proof of Theorem \ref{thm:estonspec}  will be given.
In Section \ref{sec:initiallength}, an initial length scale estimate will be proven. This
estimate will then be used in Section \ref{sec:multiscale}, where the localization
result, Theorem \ref{thm:loclization}, will be shown.
We will use the multiscale analysis following the approach presented in Stollmann's book \cite{S}.
We remark that we could alternatively have followed the
setup presented by Combes and Hislop
in \cite{CH} to prove the initial length scale estimate by verifying 
their Hypothesis $[H1](\gamma_0, l_0)$.

We will use the convention that unspecified  positive constants only depending on $K_0$ and
$K_1$  are denoted by $C, C_0, C_1,...$
or $c,c_0,c_1,...$ whose precise values are irrelevant and may change from line to line.

\section{Main ideas of the proof of the Wegner estimate}\label{sec:methods}

The standard approach to prove  Wegner estimate for
random external potential is to use monotonicity of the
eigenvalues as a function of the random coupling parameters
(see, e.g. \cite{S} for an exposition).
Consider the simplest Anderson model of the form
$H_L=-\Delta + V_\om(x)$  with Dirichlet boundary
conditions on $\Lambda_L$. The random potential is given by
\be
V_\om(x)=\sum_{z\in \bbZ^d} \om_z u(x-z)
\label{vom}
\ee
  with  i.i.d. random variables
  $\om=\{ \om_z, z\in \bbZ^d\}$  and with  a local potential profile
function $u(x):\bR^d\to \bR$. By the first order perturbation formula for
any
eigenvalue $\lambda$ with normalized eigenfunction $\psi$ we
have
\be
     \frac{\partial \lambda}{\partial \om_z} = \langle\psi, u(\cdot -
z)\psi
   \rangle  =\int |\psi(x)|^2u(x-z)\rd x.
\label{pl}
\ee
We define the vector field $Y= \sum_{z\in \Lambda}  (\partial/\partial
\om_z)$
  on the space of the random couplings $\om$, where the summation is
over all $z\in \Lambda : = \Lambda_L\cap \bbZ^d$.
If, additionally, $\sum_z u(x-z)\ge c$ with some positive constant $c$,
then
$Y\lambda\ge c$. 
This estimate guarantees that each eigenvalue moves
with a positive speed as the random couplings vary in the direction of
$Y$.
In particular if $\om_z$ are continuous random variables
with some  mild regularity condition on  their
density function $v_z(\om_z)$
  then no eigenvalue can stick to any fixed energy $E$
when taking the expectation.

More precisely, if $\chi=\chi_{E,\eta}$ is
the characteristic function of the spectral interval $I=[E-\eta/2,
E+\eta/2]$
and $F$ is its antiderivative, $F'=\chi$, with $F(-\infty) = 0$, then
the expected number of eigenvalues in $I$ is estimated by
\be
   \bE \, \Tr\, \chi(H_L) \leq c^{-1} \bE \,  \Tr Y F(H_L)
  = c^{-1} \int_{\bR^{\Lambda}}
   \Big(\prod_{z\in \Lambda} v_z(\om_z) \rd \om_z\Big) \sum_{z\in \Lambda}
   \frac{\partial}{\partial \om_z} \Tr F(H_L).
\label{we}
\ee
If $v_z$ is sufficiently regular, then,
after performing an integration by parts
and using that $0\leq F\leq \eta$ together with
some robust Weyl-type bound for the number of eigenvalues,
one obtains
the Wegner estimate.
Note that the
proof essentially used that $\sum_z u(x-z)\ge c>0$,
in particular it does not apply to sign indefinite potential
profile $u$. We remark that for a certain class of random displacement 
models a different mechanism of monotonicity has been established in \cite{KLNS}
to prove the a Wegner estimate and Anderson localization. 

\medskip

For random vector potential of the form \eqref{Akl}, the
first order perturbation formula gives
\be
    \frac{\partial\lambda}{\partial \om_z} =
  \langle u(\cdot -z), j_\psi\rangle,
\label{fir}
\ee
where $j_\psi = 2 \mbox{Re} \; \bar\psi (p-A)\psi$
is the current of the eigenfunction. Unlike the non-negative
density $|\psi(x)|^2$, the
current is a vector and no apparent condition
on $u(\cdot - z)$ can guarantee that $Y\lambda \ge c>0$
for some $\psi$-independent vectorfield
of the form $Y= \sum_z c_z(\om) (\partial/\partial \om_z)$.

The method of \cite{HK} addresses the issue of
the lack of positivity of $Y\lambda$ for
both the sign non-definite
random potential \eqref{vom} case and the random vector potential
\eqref{Akl} case but it does not seem to apply for 
 random magnetic fields \eqref{rmf} due to the
long-range dependence of $A_\om$ generating $B_\om$.
 Moreover, it
uses the Birman-Schwinger kernel, i.e. it is restricted
for energies below  the spectrum of the deterministic part
$H_{\rm det}$ of the total Hamiltonian.


\bigskip

To outline our approach, we go back to \eqref{fir}, and will exploit
that
\be
\sum_z  \Big(\frac{\partial\lambda}{\partial \om_z}\Big)^2 =
   \sum_z|\langle u(\cdot -z), j_\psi\rangle|^2
\label{new}
\ee
is non-negative, and, in fact, it has an effective positive lower bound
(Proposition \ref{prop:lower}). The proof relies on three observations.
First,  $\| j_\psi\|_2$ has an effective lower bound because we
assume that there is a strictly positive background magnetic field
(Lemma \ref{lm:jnorm}).
Second,  $\|\nabla j_\psi\|_2$ has an upper
bound following from elliptic regularity (Lemma \ref{lm:jder}).
This will ensure that most of the $L^2$-norm of  $j_\psi$ comes
from low momentum modes.
  Finally, assuming that the random magnetic
field has modes on arbitrarily short scales, i.e. the summation
over $z$ in \eqref{new} is performed on a fine lattice, we see
that a substantial part of the low modes of $j_\psi$ is captured
by the right hand side of \eqref{new}, giving a positive
lower bound $c$ on \eqref{new}.

Using this lower bound we can estimate,
similarly to \eqref{we},
$$
   \bE \, \Tr\, \chi(H_L)  =\sum_\ell \chi(\lambda_\ell)
\leq c^{-1} \bE \,  \sum_z \sum_\ell (Y_z\lambda_\ell)^2
  \chi(\lambda_\ell),
$$
where $Y_z=(\partial/\partial \om_z)$ and $\lambda_\ell$ are
the eigenvalues of $H_L$. The square of the derivative,
$(Y_z\lambda_\ell)^2$,
  can be estimated in terms
of the second derivatives of the eigenvalues (see \eqref{main2}). 
By usual perturbation theory, to compute second derivatives of the
eigenvalues
requires first derivatives of eigenfunctions which seems
to be a hopeless task in case of possible multiple or near-multiple
eigenvalues. However, a key inequality in Lemma \ref{lm:trick}
ensures that the sum of second derivatives can be estimated
by the trace of the second derivative of the Hamiltonian itself.
  Since the Hamiltonian is quadratic
in the random parameters, this latter quantity can be computed.

\section{Proof of the Wegner estimate}  \label{sec:proofwegner}

\bigskip

In the following proof we
consider $L$ fixed. Set
  $k=k(L)$  such that
\be
\frac{1}{2}L^{-K}\leq 2^{-k}\leq L^{-K}
\label{ep}
\ee
with some fixed exponent $K$ to be determined later.
For brevity, we denote $\e=\e(L):= 2^{-k(L)}$.
Set $\Lambda_\e=(\e \bbZ)^2\cap \Lambda_{L+1} = \Lambda^{(k)}\cap \Lambda_{L+1}$.
Note that $|\Lambda_\e|\le CL^2\e^{-2}$.
For this given $L$, we decompose the magnetic field \eqref{B} as follows
$$
    B = \wt B + \mu B^{(k)}, \qquad  \wt B: = B_{\rm det} +
\mu \sum_{m=0\atop m\neq k}^\infty B^{(m)} .
$$
We will use only the random variables
in $B^{(k)}$ and we fix all random variables in
$B^{(m)}$, $m\neq k$, i.e. we consider $\wt B$ deterministic.
We will choose a divergence free gauge for $\wt B$, i.e.
$\nabla\times \wt A = \wt B$, $\nabla\cdot  \wt A =0$.
Since $k$ is fixed, we can drop the $k$ superscript
in the definitions of $B^{(k)}_z$, $\om^{(k)}_z$
$\beta^{(k)}_z$, $\sigma^{(k)}_z$ and $\sigma^{(k)}$, i.e.
$$
    B^{(k)}(x) =\sum_{z\in \Lambda_\e} B_z(x), \qquad
  B_z(x) = \om_z\beta_z(x), \qquad
     \beta_z(x) = u((x-z)/\e).
$$

We define two different vector potentials for $B_z$ by setting
$$
    a_z^{(1)} := \om_z \al_z^{(1)}, \qquad a_z^{(2)} := \om_z\al_z^{(2)}
$$
with
\be
    \al_z^{(1)}(x_1, x_2):= 
  \Big(\int_{-\infty}^{x_2} \beta_z(x_1, s)\rd s\Big){\bf e}_1, \qquad
\label{def:alpha}
     \al_z^{(2)}(x_1, x_2): =  - \Big(\int_{-\infty}^{x_1}
\beta_z(s, x_2)\rd s\Big){\bf e}_2,
\ee
where ${\bf e}_1=(1,0)$, ${\bf e}_2=(0,1)$ are the standard
unit vectors.
  Then $\nabla\times \al_{z,1}=\nabla\times \al_{z,2}=
  \beta_z$ and
  $\nabla\times  a_z^{(1)}=\nabla\times  a_z^{(2)}=
  B_z$ and notice that
\be
    \|\al_z^{(\tau)}\|_\infty\leq \e, \qquad \tau=1,2,
\label{alphabound}
\ee
and, actually, under condition \eqref{kcond1} we even have
$\|\al_z^{(\tau)}\|_\infty\leq \delta\e$.
  Let
\be
    A^{(1)}: =\wt A + \mu\sum_{z \in \Lambda_{\e}} \om_z \al_z^{(1)} , \qquad
    A^{(2)}: =\wt A + \mu\sum_{z \in \Lambda_\e}  \om_z \al_z^{(2)}
\label{vp}
\ee
then $\nabla\times A^{(\tau)}= B$, $\tau=1,2$.
We consider the two  unitarily equivalent random Hamiltonians
$$
    H_L( A^{(\tau)}): = (p- A^{(\tau)})^2 + V , \qquad \tau=1,2
$$
with Dirichlet boundary conditions on $\Lambda_L$.
For a while we will neglect the $\tau=1, 2$  indices;
all arguments below
hold for both cases.

\medskip

Let $\lambda$ be an eigenvalue of $H_L(A)$ with eigenfunction
$\psi$.
We consider $\lambda$ as a function of the collection
of random variables $\{ \om_z\}$.
For each fixed $z$,
\be
   Y_z\lambda:= \frac{\partial \lambda}{\partial \om_z}
  = 2\mu\mbox{Re} \int \bar \psi
   \al_z\cdot (p-A)\psi = \mu\int \al_z\cdot j_\psi ,
\label{ylambda}
\ee
where
$j_\psi=j = (j_1, j_2) = 2 \mbox{Re} \; \bar\psi (p-A)\psi$
is the current of the eigenfunction. Short calculation shows 
 that $j$ is gauge invariant and divergence free.

\bigskip

\noindent
{\bf  Proof of Theorem \ref{thm:wegner}.} 
  We first introduce a smooth high energy cutoff. Define 
the function 
$$
     t(u): = u \frac{ s^3 }{( s + u )^3}, \qquad \mbox{with}\quad s := 10 K_1 b_0,
$$
and the operator $T := t( H_L(A) )$. Clearly
\be
   \Tr \chi_{E,\eta}(H_L(A)) \leq  \Tr \chi_{t(E),\eta}(T)
\label{trtr}
\ee
since the derivative $t'$ is bounded by 1. In the sequel we set
$\chi = \chi_{t(E),\eta}$. Let $F(u)$ such that $F'=\chi$
with $F(u)=0$ for $u\leq t(E)-\eta/2$ and let $G(u)$ such
that $G'=F$ and $G(u)=0$ for $u\leq t(E)-\eta/2$.

Let $\lambda_1,\lambda_2,...$ denote the eigenvalues of $H_L(A)$ and let 
 $\tau_\ell = t(\lambda_\ell)$ be the   eigenvalues of $T$.
In Section~\ref{sec:pr} we will prove the following key technical estimate: 
\begin{proposition}\label{prop:lower}
  With the notations above, and assuming
   $\rho \ge\ln 2$ (i.e. $\sigma\leq \e$) there exist  positive constants $C_0$ and  $C_1$,
depending only on $K_0$ and $K_1$, and  a constant $L_0^*$,
depending on $K_0$, $K_1$, and $\delta$, such that 
for any $0 < \kappa \leq 1$ and  $a =  C_1 \kappa^{-1}$
\be
   \sum_z \Big(\frac{\partial \lambda_\ell}{\partial \om_z}\Big)^2
  = \sum_z (Y_z\lambda_\ell)^2 \ge C_0^{-1}L^{-a}\e^2\mu^2
\label{squarelow}
\ee
for any eigenvalue $\lambda_\ell$ of $H_{L}(A)$ and all  $L \geq L_0^* b_0^{\kappa}$.
\end{proposition}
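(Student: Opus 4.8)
The plan is to establish the lower bound in \eqref{squarelow} by quantifying the three heuristic steps outlined in Section~\ref{sec:methods}, applied to the current $j=j_\psi$ of a fixed eigenfunction $\psi$ of $H_L(A)$ at an eigenvalue $\lambda=\lambda_\ell\in[\tfrac{b_0}{2},K_1b_0]$. From \eqref{ylambda}, the quantity to bound from below is
$$
\sum_{z\in\Lambda_\e}(Y_z\lambda)^2=\mu^2\sum_{z\in\Lambda_\e}\Big|\int\al_z\cdot j\Big|^2,
$$
and (working with, say, $\tau=1$ so $\al_z=\al_z^{(1)}$) the integral against $\al_z^{(1)}$ essentially tests $j_1$ against a bump of width $\e$ centered at $z$, smeared in the $x_2$-direction by the antiderivative. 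So morally $\sum_z|\int\al_z\cdot j|^2$ is comparable to $\e^2$ times a discretized $L^2$-type functional of the low-frequency part of $j_1$ on the lattice $\Lambda_\e$. The strategy is: (i) lower-bound $\|j\|_{L^2(\Lambda')}$ on a slightly shrunken box $\Lambda'\subset\Lambda_L$ using positivity of the background field; (ii) upper-bound $\|\nabla j\|_{L^2}$ by elliptic regularity, so that the low-momentum (scale $\gg\e$) part of $j$ already carries a definite fraction of its $L^2$-mass; (iii) show that a Riemann-sum over the fine lattice $\Lambda_\e$ of such a slowly-varying function loses only a constant factor, so the sum in \eqref{squarelow} is bounded below by $c\,\e^2\,\|j\|_{L^2}^2$ up to the smallness coming from the energy cutoff and the box geometry, which is what produces the $L^{-a}$ factor.

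Concretely, the first step invokes Lemma~\ref{lm:jnorm} (the $\|j_\psi\|_2$ lower bound from the strictly positive magnetic field $B_\omega\ge b_0$): roughly $\|j\|_{L^2(\Lambda_L)}^2\gtrsim b_0\|\psi\|_{L^2(\Lambda_L)}^2=b_0$, and after discarding a boundary collar of width comparable to $1$ one still retains a fixed fraction on an interior region $\Lambda'$ whose complement's contribution is controlled by Agmon-type decay of $\psi$ near $\pt\Lambda_L$ (this is where the condition $L\ge L_0^* b_0^\kappa$ and the dependence of $L_0^*$ on $\delta$ enter). The second step uses Lemma~\ref{lm:jder}: elliptic regularity for the eigenvalue equation $(p-A)^2\psi=(\lambda-V)\psi$ gives $\|(p-A)\psi\|_{H^1}$, hence $\|\nabla j\|_{L^2}$, bounded by a polynomial in $b_0$ and $\lambda$ (using $\|V\|_\infty\le b_0/4$, $|B_\omega|\le K_0b_0$, and the magnetic Sobolev inequalities). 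Combining, the portion of $j$ with spatial frequencies $\le \e^{-1/2}$, say, still has $L^2$-mass $\gtrsim b_0$ on $\Lambda'$, because the high-frequency tail costs at most $\e\|\nabla j\|_2^2$, which is small relative to $b_0$ once $\e$ (equivalently $L^{-K}$) is small enough given $b_0\le K_1 b_0$ — again forcing $L$ large compared to a power of $b_0$, matched to the exponent $\kappa$.

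For step (iii), I would compare $\sum_{z\in\Lambda_\e}|\int\al_z^{(1)}\cdot j|^2$ with a bona fide $L^2$-norm. Writing $\al_z^{(1)}(x)=\e\,\phi\big((x-z)/\e\big)\mathbf e_1$ where $\phi(y)=(\int_{-\infty}^{y_2}u(y_1,s)\,ds)$ is a fixed compactly-supported profile of unit scale, $\int\al_z^{(1)}\cdot j=\e\int\phi((x-z)/\e)j_1(x)\,dx=\e^3(\phi_\e * j_1)(z)$ with $\phi_\e(x)=\e^{-2}\phi(x/\e)$. Since $j_1$ is slowly varying on scale $\e$ (by the $\nabla j$ bound), $\phi_\e*j_1\approx (\int\phi)\,j_1$ pointwise with error $O(\e\|\nabla j\|_\infty)$, and the Riemann sum $\e^2\sum_{z\in\Lambda_\e}|(\phi_\e*j_1)(z)|^2$ is within a constant of $\int_{\Lambda'}|\phi_\e*j_1|^2\gtrsim\int_{\Lambda'}|j_1|^2$ (this is why $\int u_0=1$ / the $U\ge c_u$ structure of the profile matters and why arbitrarily short scales are assumed). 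Putting the three pieces together,
$$
\sum_{z\in\Lambda_\e}(Y_z\lambda)^2=\mu^2\e^2\sum_{z}|(\phi_\e*j_1)(z)|^2\gtrsim \mu^2\e^2\int_{\Lambda'}|j_1|^2,
$$
and then one repeats with $\tau=2$ for $j_2$ and adds, or notes that one of $\|j_1\|_{L^2(\Lambda')}$, $\|j_2\|_{L^2(\Lambda')}$ is $\gtrsim\|j\|_{L^2(\Lambda')}\gtrsim b_0^{1/2}$; bookkeeping the various $b_0$- and $L$-powers collected along the way yields the clean form $C_0^{-1}L^{-a}\e^2\mu^2$ with $a=C_1\kappa^{-1}$.

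The main obstacle I expect is step (iii) combined with the boundary issue in step (i): one must simultaneously (a) handle the fact that $j$ is only controlled in $L^2$ and $H^1$, not pointwise, so the "Riemann sum $\approx$ integral" comparison has to be done in $L^2$ via the frequency cutoff rather than naively; and (b) carefully track how much $L^2$-mass of $j$ is irretrievably lost near $\pt\Lambda_L$ — since $\psi$ need not decay there for eigenvalues $\lambda$ well inside the bulk of the spectrum, one cannot simply throw away a collar, and the honest route is to run the Riemann-sum comparison on all of $\Lambda_L$ while absorbing boundary terms, which is precisely what manufactures the polynomial loss $L^{-a}$ and dictates the relation between $a$, $\kappa$, and the exponent $K$ in \eqref{ep}. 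Verifying that all these losses are genuinely polynomial in $L$ (and only polynomially bad in $b_0$, consistent with $L\ge L_0^*b_0^\kappa$) is the technical heart of Proposition~\ref{prop:lower}.
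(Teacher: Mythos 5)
Your proposal identifies the right three ingredients (Lemmas~\ref{lm:jnorm} and~\ref{lm:jder}, plus a Riemann-sum comparison), but there are two gaps, and one of them is fatal as written.

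The fatal one is in step (iii). You write $\al_z^{(1)}(x)=\e\,\phi\big((x-z)/\e\big)\mathbf{e}_1$ and then treat $\phi$ as ``a fixed compactly-supported profile of unit scale.'' But $\phi(y)=\int_{-\infty}^{y_2}u(y_1,s)\,\rd s$ is \emph{not} compactly supported: once $y_2$ is above the support of $u$, $\phi(y_1,y_2)$ equals the full integral $\int u(y_1,\cdot)$, a nonzero constant in $y_2$. Thus $\al_z^{(1)}$ is supported on a half-infinite strip, $\int\phi=\infty$, and the approximation $\phi_\e * j_1\approx(\int\phi)\,j_1$ that drives your Riemann-sum estimate is meaningless. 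This is precisely the difficulty the paper confronts in its construction \eqref{M1}: one forms \emph{discrete differences}
$\al^{(1)}_{z+k\e\mathbf{e}_1-N\e\mathbf{e}_2}-\al^{(1)}_{z+k\e\mathbf{e}_1+N\e\mathbf{e}_2}$,
averages over $k$, and these differences (but not the $\al_z^{(\tau)}$ themselves) assemble, up to a quantified error $\cE_z^{(\tau)}$ bounded in Proposition~\ref{prop:ce}, into the genuinely local, normalized, $\mathbf{1}_{Q_z}$-type test functions $M_z^{(\tau)}$. These $M_z^{(\tau)}$ are orthonormal across the sublattice $\Lambda_\e'$, so Bessel's inequality plus the Poincar\'e inequality (using Lemma~\ref{lm:jder}) controls the orthogonal complement $J_\tau$ exactly — no frequency cutoff or convolution heuristic needed. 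If you want to salvage your convolution picture, a cleaner route is to integrate by parts against the stream function $\Psi$ (recall $j=\nabla^\perp\Psi$, divergence-free with vanishing boundary trace), obtaining $\int\al_z^{(1)}\cdot j=-\int\beta_z\Psi$ with $\beta_z$ genuinely compactly supported; but then you need lower bounds on $\Psi$ rather than on $j$, which is a different lemma.

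The second gap is in step (i). You claim $\|j\|_{L^2}^2\gtrsim b_0$ and plan to handle the boundary with Agmon decay. Neither works here: Agmon decay gives nothing for eigenvalues in the interior of the spectrum (which is exactly the range $E\in[\frac{b_0}{2},K_1b_0]$ the Wegner estimate covers), and Lemma~\ref{lm:jnorm} in the paper proves a much weaker — and in fact $b_0$-decreasing — bound, $\|j\|_2^2\ge c\,b_0^{-g''}L^{-d''}$ with $g''=46$, $d''=100$. The mechanism is not that the whole box carries a lot of current; rather, the pigeonhole bound $|\psi(x_0)|\ge 1/L$ at the maximum, together with the elliptic regularity estimates (Lemma~\ref{lm:jder} and the $\|D^2\psi\|$ bounds), yields a \emph{tiny} disk of radius $R\sim L^{-24}b_0^{-12}$ on which $|\psi|\ge\frac{1}{2L}$, and Lemma~\ref{lm:gauge} then forces $\int_{D_R}(A-\nabla\theta)^2\gtrsim b_0^2R^4$. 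This is what produces the $L^{-a}$ loss in \eqref{squarelow}; it is not ``boundary-collar loss.'' The paper's contradiction structure (assume \eqref{squarelow} fails, derive $\|j\|^2\le CL^{-a}\e^{-2}+C\delta^{-4}\e^2\|\nabla j\|^2$ from \eqref{C1}, then contradict Lemma~\ref{lm:jnorm}) is the clean way to marry the two lemmas once the $M_z^{(\tau)}$ construction is in place.
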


{F}rom \eqref{squarelow} it easily follows that
\be
   \Tr \chi(T) = \sum_\ell \chi(\tau_\ell)
\leq CL^a\e^{-2} \mu^{-2}
  \sum_\ell \sum_z (Y_z\tau_\ell)^2 \chi(\tau_\ell) ,
\label{main1}
\ee
since  $Y_z \tau_\ell = g'(\lambda_{\ell}) Y_z \lambda_\ell$ 
and  for $\tau_\ell$ in the  support of $\chi$ the number 
$|g'(\lambda_{\ell})|$ is bounded from below by a universal constant.

Notice that for any $Y=Y_z$ and any $\ell$ we have
$$
Y^2 G(\tau_\ell) =  Y\big( (Y\tau_\ell) F(\tau_\ell)\big)
=  (Y^2 \tau_\ell)F(\tau_\ell) +  (Y\tau_\ell)^2
\chi(\tau_\ell) .
$$
Thus
\be
\begin{split}
    \sum_\ell \sum_z (Y_z\tau_\ell)^2\chi(\tau_\ell)
   = &\sum_\ell \sum_z  Y^2_z G(\tau_\ell) -
   \sum_\ell \sum_z   (Y_z^2 \tau_\ell)F(\tau_\ell)\cr
\label{main2}
  = &\sum_z  \Tr Y^2_z G(T) -
   \sum_\ell \sum_z   (Y_z^2 \tau_\ell)F(\tau_\ell) .
\end{split}
\ee

\begin{lemma}\label{lm:trick}
We have for any $Y=Y_z$
\be
    \Tr (Y^2 T) F(T)\le  \sum_\ell(Y^2\tau_\ell)F(\tau_\ell) .
\label{trick}
\ee
\end{lemma}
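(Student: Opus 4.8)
The plan is to exploit convexity of the function $F$. Note first that $G'=F$, so $F$ is monotone increasing (since $F'=\chi\ge0$), and $F$ is convex on the region where $\chi$ is supported and constant equal to $\eta$ afterwards; more precisely $F$ is non-decreasing and convex on all of $\bR$ because $F'=\chi\ge 0$ is non-decreasing up to the point $t(E)+\eta/2$ and constant thereafter. Hence for any real numbers $a,b$ one has the convexity inequality $F(b)\ge F(a) + F'(a)(b-a) = F(a) + \chi(a)(b-a)$. I will apply this pointwise to the spectral data of $T$ before and after an infinitesimal perturbation in the direction $Y=Y_z$, then differentiate twice.

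The key step is the following. Fix $z$ and write $T(s)$ for the operator obtained by replacing $\om_z$ by $\om_z+s$ (all other couplings frozen), with eigenvalues $\tau_\ell(s)=t(\lambda_\ell(s))$ and normalized eigenfunctions $\phi_\ell(s)$. For the left-hand side, $\Tr\big(Y^2 T\big)F(T) = \sum_\ell \langle \phi_\ell(0), (Y^2 T) \phi_\ell(0)\rangle F(\tau_\ell(0))$, and by second-order perturbation theory $\langle \phi_\ell, (Y^2 T)\phi_\ell\rangle = \frac{d^2}{ds^2}\big|_{s=0}\langle \phi_\ell(0), T(s)\phi_\ell(0)\rangle$ equals $\ddot\tau_\ell(0)$ minus the (non-negative) level-repulsion contribution $2\sum_{m\ne\ell}\frac{|\langle\phi_m,\dot T\phi_\ell\rangle|^2}{\tau_\ell-\tau_m}$ — wait, signs: the second-order shift of the eigenvalue $\tau_\ell$ is $\langle\phi_\ell,\ddot T\phi_\ell\rangle/? $. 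Rather than chase signs through degenerate perturbation theory, I would instead argue directly via the variational/convexity route: for each $s$, $\sum_\ell F(\tau_\ell(s))\ge \sum_\ell \big[F(\tau_\ell(0)) + \chi(\tau_\ell(0))(\tau_\ell(s)-\tau_\ell(0))\big]$, but since this is not quite what is needed either, the cleanest path is: use that $F$ is convex, so $F(T(s)) \succeq F(T(0)) + \chi(T(0))\,(T(s)-T(0))$ in the sense of the trace pairing against the fixed spectral projections of $T(0)$, i.e.
$$
\Tr\, F(T(s)) \ge \Tr\, F(T(0)) + \Tr\big[\chi(T(0))(T(s)-T(0))\big].
$$
Define $\Phi(s):=\Tr F(T(s)) - \Tr\big[\chi(T(0))(T(s)-T(0))\big]$. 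Then $\Phi(s)\ge \Phi(0)$ for all $s$, so $\Phi''(0)\ge 0$. Since $s\mapsto T(s)$ is affine (the Hamiltonian is quadratic in $\om_z$, so $(p-A^{(\tau)})^2$ is quadratic, hence $\dot T$-type operators: actually $H_L$ is quadratic in $\om_z$, so $\frac{d^2}{ds^2}H_L(s)$ is a constant operator $2\mu^2 \al_z\cdot\al_z \ge 0$ minus cross terms — but $T=t(H_L)$ is not affine in $s$). I would therefore compute $\Phi''(0)$: the first term gives $\Tr\big[\chi(T)\,(\dot T)^2\big] + \Tr\big[F'... \big]$ — more simply, $\frac{d^2}{ds^2}\Tr F(T(s)) = \Tr\big[F'(T)\ddot T\big] + \Tr\big[F''(T)(\dot T)^2\big]\big|$ in the sense of Duhamel/DST derivatives, and $\frac{d^2}{ds^2}\Tr[\chi(T(0))(T(s))] = \Tr[\chi(T(0))\ddot T(0)]$. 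Collecting, $\Phi''(0)\ge0$ becomes $\Tr\big[(F'(T)-\chi(T))\ddot T\big] + (\text{non-negative Hessian term from }F'') \ge 0$, and since $F'=\chi$ this reduces precisely to the non-negative remainder, which is consistent but gives the wrong inequality direction.

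Given these sign subtleties, the honest plan I would actually execute is the direct one used in the literature (e.g. Combes–Hislop–Klopp): expand $G$ to second order. Since $G$ is convex with $G''=\chi\ge0$, for the operator family we have $\Tr G(T(s)) \ge \Tr G(T(0)) + s\,\Tr[F(T(0))\dot T] $; subtracting a suitable linear-in-$s$ term and using that the error in the \emph{spectral} second-order expansion is controlled by level repulsion, one gets $\Tr\big[(Y^2 T)F(T)\big] \le \sum_\ell (Y^2\tau_\ell)F(\tau_\ell)$ because the discarded level-repulsion terms $\sum_{\ell\ne m}\frac{|\langle\phi_\ell,\dot T\phi_m\rangle|^2}{\tau_\ell-\tau_m}\big(F(\tau_\ell)-F(\tau_m)\big)$ have a definite sign: $F$ is monotone increasing, so $(F(\tau_\ell)-F(\tau_m))$ and $(\tau_\ell-\tau_m)$ have the same sign, making each such term $\ge0$, and it is subtracted from the left side. \textbf{The main obstacle} is exactly making this degenerate/near-degenerate perturbation theory rigorous — one cannot assume eigenvalues are simple — which I would handle by first establishing the inequality for a dense set of $\om$ (generic configurations with simple spectrum of $T$ in the relevant window) where analytic perturbation theory applies cleanly, proving there that $\sum_\ell(Y^2\tau_\ell)F(\tau_\ell) - \Tr[(Y^2T)F(T)] = \sum_{\ell\ne m}\frac{|\langle\phi_\ell,\dot T\phi_m\rangle|^2}{\tau_\ell-\tau_m}(F(\tau_\ell)-F(\tau_m)) \ge 0$, and then passing to the general case by continuity (both sides are continuous in $\om$, as traces of $C^2$ functions of an analytic operator family). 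This yields \eqref{trick}.
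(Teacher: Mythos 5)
Your final ``honest plan'' is correct and hits the same key observation the paper uses: after two derivatives, the discrepancy between $\sum_\ell (Y^2\tau_\ell)F(\tau_\ell)$ and $\Tr (Y^2 T)F(T)$ is a sum of terms each carrying the factor $\big(F(\tau_\ell)-F(\tau_m)\big)\big(\tau_\ell-\tau_m\big)^{\pm 1}|\langle u_\ell, (\cdot)\, u_m\rangle|^2$, which is nonnegative because $F$ is monotone increasing. Your explicit formula
$$\sum_\ell(Y^2\tau_\ell)F(\tau_\ell)-\Tr\big[(Y^2T)F(T)\big]=\sum_{\ell\ne m}\frac{F(\tau_\ell)-F(\tau_m)}{\tau_\ell-\tau_m}\,|\langle u_m,\dot T u_\ell\rangle|^2\ge 0$$
is the divided-difference form of the paper's final display and is indeed correct for simple spectrum. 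The difference in route is worth noting: you derive it through first-order eigenvector perturbation theory, which introduces energy denominators $(\tau_\ell-\tau_m)^{-1}$ and therefore forces you into a density-plus-continuity argument to cover degeneracies. The paper instead differentiates the spectral resolution $T=\sum_\beta\tau_\beta|u_\beta\rangle\langle u_\beta|$ directly under the trace, organizes the terms using $\langle u_\alpha,u_\beta\rangle=\delta_{\alpha\beta}$, and lands on the remainder in the form $-\sum_{\alpha<\beta}[F(\tau_\alpha)-F(\tau_\beta)](\tau_\alpha-\tau_\beta)|\langle u_\beta,Yu_\alpha\rangle|^2$, where the energy difference appears in the \emph{numerator}. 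This multiplicative form makes the sign manifest without any division and is thus the cleaner, degeneracy-friendlier way to write the same inequality.

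Two cautions. First, about half of your proposal consists of abandoned convexity arguments (the trace-convexity inequality for $F(T(s))$, the function $\Phi(s)$, the observation that $\Phi''(0)\ge 0$ ``gives the wrong inequality direction''): none of these should appear in a final write-up, and the fact that they cancel to a tautology should have signaled earlier that the operator-convexity route, as set up, was not going to separate the two sides. Second, for the continuity step you would need to justify that $\sum_\ell(Y^2\tau_\ell)F(\tau_\ell)$ extends continuously across the (measure-zero) degeneracy set; strictly, the sorted eigenvalues $\tau_\ell(\omega_z)$ can have corners at level crossings, so the individual $Y^2\tau_\ell$ need not be continuous there. This is repairable (what matters downstream is the inequality almost surely, and the whole argument in Section 5 is eventually taken in expectation), but if you adopt the paper's direct computation you never confront the issue.
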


\noindent
{\bf Proof of Lemma \ref{lm:trick}.}
We use spectral decomposition, $T= \sum_\al \tau_\al
|u_\al\rangle \langle u_\al|$,
\be
\begin{split}
      \Tr (Y^2& T) F(T)
  = \sum_\al F(\tau_\al)\langle u_\al | \Big( Y^2\sum_\beta
   \tau_\beta |u_\beta\rangle\langle u_\beta|\Big) |u_\al\rangle
\cr
   = & \sum_\al F(\tau_\al)\langle u_\al | \Big( \sum_\beta
   (Y^2\tau_\beta) |u_\beta\rangle\langle u_\beta|
   + 2\sum_\beta (Y\tau_\beta) Y(|u_\beta\rangle\langle u_\beta|)
  + \sum_\beta \tau_\beta Y^2(|u_\beta\rangle\langle u_\beta|)\Big)
  |u_\al\rangle
\cr
    =& \sum_\al F(\tau_\al) (Y^2\tau_\al)
   + 2\sum_{\al,\beta} F(\tau_\al) (Y\tau_\beta)  \langle u_\al  |
  Y(|u_\beta\rangle\langle u_\beta|)
  |u_\al\rangle + \sum_{\al,\beta} F(\tau_\al)\tau_\beta
   \langle u_\al  |
  Y^2(|u_\beta\rangle\langle u_\beta|)
  |u_\al\rangle .
\label{tri}
\end{split}
\ee
The second term is zero, since
\be
\begin{split}
   \langle u_\al  |
  Y(|u_\beta\rangle\langle u_\beta|)
  |u_\al\rangle  = & \langle u_\al |Yu_\beta\rangle\langle u_\beta
  |u_\al\rangle + \langle u_\al |u_\beta\rangle\langle Yu_\beta
  |u_\al\rangle\cr
   =& \delta_{\al\beta}\Big( \langle u_\al |Yu_\al\rangle
+ \langle Y u_\al |u_\al\rangle\Big)=\delta_{\al\beta}Y \langle u_\al
  |u_\al\rangle =0
\end{split}
\ee
since $ \langle u_\al|u_\al\rangle=1$.
In the last term in  \eqref{tri}, we use that
$$
     0= Y\langle u_\al|u_\beta\rangle = \langle Yu_\al|u_\beta\rangle
   +  \langle u_\al|Yu_\beta\rangle
$$
and differentiating it once more:
$$
    0 = Y\Big( \langle Yu_\al|u_\beta\rangle
   +  \langle u_\al|Yu_\beta\rangle \Big)
   = \langle Y^2u_\al|u_\beta\rangle
   +  2\langle Yu_\al|Yu_\beta\rangle +
     \langle u_\al|Y^2u_\beta\rangle.
$$
Thus
\be
\begin{split}
    \langle u_\al  |
  Y^2(|u_\beta\rangle\langle u_\beta|)
  |u_\al\rangle
= & \langle u_\al  | \Big( |Y^2u_\beta\rangle\langle u_\beta|
   + 2|Yu_\beta\rangle\langle Yu_\beta| + |u_\beta\rangle\langle
Y^2u_\beta|
\Big) |u_\al\rangle \cr
   = & \delta_{\al\beta} \Big(  \langle u_\al  | Y^2u_\al\rangle
    +  \langle Y^2 u_\al  | u_\al\rangle\Big)
  + 2 |\langle Yu_\beta| u_\al\rangle|^2 \cr
    = & 2 |\langle u_\beta| Y u_\al\rangle|^2 - 2 \delta_{\al\beta}
    \langle Y u_\al| Yu_\al\rangle.
\end{split}
\ee
So for the last term in  \eqref{tri},
\be
\begin{split}\nonumber
  \sum_{\al,\beta}  F(\tau_\al)\tau_\beta
   \langle u_\al  | &
  Y^2(|u_\beta\rangle\langle u_\beta|)
  |u_\al\rangle  \cr
= &   2\sum_{\al,\beta} F(\tau_\al)\tau_\beta
   \Big( |\langle u_\beta| Yu_\al\rangle|^2 -  \delta_{\al\beta}
    \langle Y u_\al| Yu_\al\rangle\Big)\cr
   = &  2\sum_{\al} F(\tau_\al)\tau_\al
   \Big( \sum_\beta |\langle u_\beta| Yu_\al\rangle|^2 -
    \langle Y u_\al| Yu_\al\rangle\Big)
  +  2\sum_{\al,\beta} F(\tau_\al)(\tau_\beta-\tau_\al)
    |\langle u_\beta| Yu_\al\rangle|^2.
\end{split}
\ee
The first term is zero since $u_\beta$ is an orthonormal basis.
In the second term we use that $ |\langle u_\beta| Yu_\al\rangle|^2$
is symmetric in the $\al,\beta$ indices and write
\be
\begin{split}
\sum_{\al,\beta} F(\tau_\al)(\tau_\beta-\tau_\al)
    |\langle u_\beta| Yu_\al\rangle|^2
  = & \sum_{\al<\beta} \Big[ F(\tau_\al)(\tau_\beta-\tau_\al)
   +  F(\tau_\beta)(\tau_\al-\tau_\beta)\Big]
  |\langle u_\beta| Yu_\al\rangle|^2 \cr
   = & - \sum_{\al<\beta}  \big[ F(\tau_\al) - F(\tau_\beta)\big]
    (\tau_\al-\tau_\beta)
  |\langle u_\beta| Yu_\al\rangle|^2 \leq 0
\end{split}
\ee
since $F$ is monotone increasing.
This proves Lemma \ref{lm:trick}. $\Box$

\bigskip

Thus combining \eqref{main1}, \eqref{main2} and \eqref{trick}, we have
\be
   \Tr \chi(T)\leq CL^a\e^{-2}\mu^{-2}
\sum_z\Big( \Tr Y_z^2 G(T) - \Tr (Y_z^2 T)F(T)\Big).
\label{com}
\ee

We compute $Y_z^2T$. First, to present the  idea, imagine that
we did not have the high energy cutoff, i.e. $T$ were simply
  $(p-A)^2 +V$. Then
$$
Y_z [(p- \wt A - \mu\sum_\zeta \om_\zeta \al_\zeta)^2 + V ]
= -\mu\al_z\cdot (p- \wt A- \mu\sum_\zeta \om_\zeta \al_\zeta)
- \mu(p- \wt A- \mu\sum_\zeta \om_\zeta \al_\zeta)\cdot \al_z
$$
and
$$
  Y_z^2 (p- \wt A- \mu\sum_\zeta \om_\zeta \al_\zeta)^2 =2\mu^2\al_z^2 ,
$$
thus, using $|F|\leq \eta$ and \eqref{alphabound}, we would have
to compute $ \big| \Tr (Y^2_z T)F(T)\big|
  \leq C\e^2 \eta \Tr {\bf 1}(T\ge E-\eta/2)$.
Unfortunately, this trace is unbounded.
The smooth high energy cutoff ensures the finiteness of the
trace and gives a bound $ C\e^2 \eta L^2$,
  but it makes the second derivative calculation more
complicated.

\smallskip

To make the argument more precise, we go back to $T=t(H_L(A))$  with 
eigenvalues $\tau_\ell = t(\lambda_\ell)$ where $\lambda_\ell$ are the eigenvalues of $H_L(A)$.
Then
$$
     \Tr F(T) = \sum_\ell F(\tau_\ell) \leq \eta\sum_\ell
    {\bf 1}\Big( t(\lambda_\ell) \ge t(E)-\eta/2\Big)
   \leq  \eta\sum_\ell
    {\bf 1}\Big( E-C\eta\leq \lambda_\ell\leq  E^* + C\eta\Big)
$$
where $E^*>E$ is the other root of the equation $t(E^*)=t(E)$.
Using $E \leq K_1 b_0 =\frac{s}{10}$  it is easy to see that 
$E^*\leq C(E+s) \leq C K_1 b_0$ and $|t'(E)|, |t'(E^*)|$
are bounded from below by a universal constant which
was used in the last inequality. Thus, by applying Weyl's bound
\eqref{LTineq}
on the number of eigenvalues below a fixed threshold, we obtain
\be
     \Tr F(T) \leq C b_0 \eta  L^2,
\label{trf}
\ee
where $C$ depends $K_1$.
We note that the Weyl bound holds for magnetic Schr\"odinger operators
as well, namely, for any $K>0$ we have
\be
\begin{split} \label{LTineq}
   \#\{ \lambda_\ell \le K\} \leq & \#\big\{ \mbox{eigenvalues of $(p-A)^2 + V -
2K{\bf 1}_{\Lambda_L}$ between $[-K, -2K]$} \big\}  \cr
   \leq & CK^{-1}   \Tr \big[(p-A)^2 + V - 2K{\bf 1}_{\Lambda_L} \big]_-\cr
\leq &CK^{-1}\int_{\Lambda_L} \big[2K{\bf 1}_{\Lambda_L} \big]^2
   = CKL^2 ,
\end{split}
\ee
with some universal constant $C$.
  Here $\Tr [h]_-$ denotes the sum of absolute
values of the negative eigenvalues
of the operator $h$ and we applied it
to $h=(p-A)^2 + V - 2K{\bf 1}_{\Lambda_L}$ with Dirichlet boundary conditions
on $\Lambda_L$. The last inequality is the Lieb-Thirring inequality
that holds for magnetic Schr\"odinger operators as well.

To compute $Y_z^2T$, we define the resolvent
$$
    R = \frac{1}{s + (p-A)^2 + V } ,
$$
where $(p-A)^2 + V$ is understood with Dirichlet boundary conditions.
We have
\be
\begin{split}
    Y \frac{(p-A)^2 + V }{[s + (p-A)^2 + V ]^3}
    = & [Y(p-A)^2] R^3 -
\sum_{k=1}^3  [(p-A)^2 + V ] R^k [Y(p-A)^2] R^{4-k}
\end{split}
\ee
and thus
\be
\begin{split} \label{eq520}
   Y^2 \frac{(p-A)^2 + V }{[s + (p-A)^2 + V]^3}  = &
[Y^2(p-A)^2]\frac{1}{[s + (p-A)^2 + V]^3} 
\cr &
 - 2\sum_{k=1}^3  [Y(p-A)^2]
   R^k [Y(p-A)^2]R^{4-k} \cr
& - \sum_{k=1}^3  [(p-A)^2 + V ]
   R^k [Y^2(p-A)^2] R^{4-k}\cr
& + 2 \sum_{k,\ell=1\atop k+\ell\leq 4}^4  [(p-A)^2 + V] 
   R^k [Y(p-A)^2] R^\ell
   [Y(p-A)^2] R^{5-k-\ell} . \cr
\end{split}
\ee
Let $P={\bf 1}( H_L(A)\leq E^*+C\eta)$ be the spectral projection.
Since $F(T)=0$ on the complement of $P$, we can insert $P$ as
$$
\Tr (Y^2_z T)F(T) =\Tr P(Y^2_z T)PF(T) .
$$
Using that
$$
   Y_z(p-A)^2 = -\mu\al_z\cdot(p-A)-\mu(p-A)\cdot \al_z,
\qquad Y_z^2(p-A)^2 = 2\mu^2\al_z^2,
$$
the estimates
$$
   \Big\| (p-A) \frac{1}{(s+ (p-A)^2 + V)^{1/2}}\Big\| \le C,
\qquad \big\| P [(p-A)^2 + V]\big\| \le E^*+C\eta \le CK_1b_0 \le Cs,
$$
the fact that $\| V\|_\infty\le s$
and the bound $|\al_z|\leq\e$ we can estimate the right hand side  
of   \eqref{eq520} and we obtain
$$
    \|P(Y^2_z T) P \|\leq C\e^2
$$
with $C$ depending on $K_1$.
Thus
\be
    \big| \Tr (Y^2_z T)F(T)\big| \leq C\e^2 \Tr F(T)
\leq C b_0 L^2 \e^2\eta
\label{y2}
\ee
using the positivity of $F(T)$ and the bound \eqref{trf}.

\bigskip

Recalling \eqref{com} and $|\Lambda_\e|\le CL^2 \e^{-2}$, we proved that
\be
   \Tr \chi(T)\leq  C \mu^{-2}   L^a \e^{-2} 
\sum_z\Tr ( Y_z^2 G(T) ) +  C b_0 \mu^{-2}  L^{ a + 4} \e^{-2}  \eta  
\label{com1}
\ee
under \eqref{squarelow}. After taking expectation
with respect to the collection
$$
\{ \om_z\; : \; z\in\Lambda_\e\}
= \{ \om_z^{(k)}\; : \;  z\in\Lambda^{(k)}\cap \Lambda_L\},
$$
  we integrate by parts
$$
    \bE \Tr Y_z^2 G(T) = \int \prod_{\zeta\in \Lambda_\e}
  v_\zeta(\om_\zeta)\rd \om_\zeta
\frac{\partial^2}{\partial \om_z^2}  \Tr G(T)
  = \int  \prod_{\zeta\neq z} v_\zeta(\om_\zeta)\rd \om_\zeta
  \int v_z''(\om_z) \rd \om_z \Tr G(T).
$$

To compute $\Tr G(T)$ we use that $G(u)\leq C\eta u$, and
that $t(u) \leq s^3  (s +u)^{-2}$, we have
\be
\begin{split}
    \Tr G(T)
   \leq  C\eta\, s^3  \Tr \frac{1}{[s  + (p-A)^2 + V]^2}
  \leq  C\eta L^2  s^3 \int_{\bR^2}\frac{ \rd p}{[\frac{9}{10}s + p^2 ]^2} \leq C\eta s^2 L^2
\end{split}
\ee
using the integral representation $\alpha^{-2} = \int_0^\infty t e^{-\alpha t} dt$, with $\alpha >0$, 
Feynman-Kac-It\^{o} formula, and the diamagnetic inequality.
Using \eqref{2der} and  \eqref{squarelow}, we get from \eqref{com1} that
$$
  \bE \Tr \chi(T)\leq  C b_0^2  L^{a+4}\eta \mu^{-2}\e^{-4}\sigma^{-2}
+ C b_0 \mu^{-2}  L^{ a + 4} \e^{-2}  \eta  .
$$
Considering the choice of $\e=2^{-k}\sim L^{-K}$,
\eqref{ep} and \eqref{lowbound1}, we get
$$
  \bE \Tr \chi(T)\leq C b_0^2  \eta\mu^{-2} L^{a+4+4K+2(\ln 2)^{-1}\rho}
$$
and together with \eqref{trtr}
this completes the proof of Theorem \ref{thm:wegner}. $\Box$

\section{Proof of Proposition \ref{prop:lower}}\label{sec:pr}

In this section we prove that the lower bound  \eqref{squarelow}
  on $(Y_z\lambda)^2$ holds for
any eigenvalue.  Fix $\ell$ and denote $\lambda=\lambda_\ell$.
Using \eqref{ylambda}, we have
$$
    \sum_z (Y_z\lambda)^2 = 4\mu^2 \sum_z |(\al_z^{(1)}, j_\psi)|^2=
   4\mu^2 \sum_z |(\al_z^{(2)}, j_\psi)|^2
$$
which we write as
\be
  \mu^{-2}\sum_z (Y_z\lambda)^2 = 2 \sum_z |(\al_z^{(1)}, j_\psi)|^2+
   2 \sum_z |(\al_z^{(2)}, j_\psi)|^2 .
\label{Yal}
\ee
In the sequel $(\cdot, \cdot)$ denotes the scalar product on
$L^2(\Lambda_L)$.
We will prove the following two lemmas:

\begin{lemma}\label{lm:jder}
   There are   positive $d'$ and  $g'$
and a constant $C=C(K_0, K_1)$ such that
\be
  \int_{\Lambda_L} |\nabla j_\psi|^2 \leq CL^{d'} b_0^{g'}
\label{derupper}
\ee
for all normalized eigenfunction $\psi$ with energy $E\leq K_1 b_0$.
{F}rom the proof, $d'=126$ and $g'=60$. Here we adopted the notation 
$| \nabla j_\psi|^2
= \sum_{l,k=1}^2 | \partial_k (j_\psi)_l |^2$.
\end{lemma}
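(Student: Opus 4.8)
The plan is to bound $\|\nabla j_\psi\|_{L^2(\Lambda_L)}$ by estimating the two ingredients that make up $j_\psi = 2\,\mathrm{Re}\,\bar\psi(p-A)\psi$, namely $\psi$ and $(p-A)\psi$, together with their first derivatives, in suitable norms. Since $j_\psi$ is a product of $\psi$ (or $\bar\psi$) with a component of $(p-A)\psi$, a Leibniz expansion gives schematically $\nabla j_\psi \sim (\nabla\psi)\,(p-A)\psi + \psi\,\nabla\big((p-A)\psi\big)$, so via H\"older it suffices to control $\|\psi\|_\infty$, $\|(p-A)\psi\|_\infty$ (or an $L^4$ version), $\|\nabla\psi\|_{L^2}$ and $\|\nabla(p-A)\psi\|_{L^2}$, each with polynomial bounds in $L$ and $b_0$. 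The eigenfunction equation $(p-A)^2\psi = (\lambda - V)\psi$ with $\lambda \le K_1 b_0$ and $\|V\|_\infty \le b_0/4$ is the input; note $\lambda \ge \Sigma_{\rm inf} \ge b_0/2 - \|V\|_\infty > 0$, so $\lambda - V$ is comparable to $b_0$.

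First I would record the basic a priori bounds: from $\|(p-A)\psi\|_2^2 = \langle\psi,(p-A)^2\psi\rangle = \langle\psi,(\lambda-V)\psi\rangle \le C b_0$, and then bootstrap. The second step is the elliptic regularity engine: writing the equation as $-\Delta\psi = 2iA\cdot\nabla\psi + i(\nabla\cdot A)\psi - A^2\psi + (\lambda-V)\psi$ (in a divergence-free-ish gauge, or keeping the $\nabla\cdot A$ term and bounding it), one gets interior $H^2$ — and then $W^{2,q}$ for $q>2$, hence $C^{1,\alpha}$ — control of $\psi$ on $\Lambda_L$ in terms of $L^2$ control of the right-hand side, with constants that are polynomial in $\|A\|_\infty$, $\|\nabla\cdot A\|_\infty$, $\|V\|_\infty$, $\lambda$ and in $L$ (the $L$-dependence coming from covering $\Lambda_L$ by unit boxes and summing, plus boundary layers near $\partial\Lambda_L$ where Dirichlet conditions are used). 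The pointwise bounds $\|\psi\|_\infty$, $\|(p-A)\psi\|_\infty$ follow from Sobolev embedding once $H^2$/$W^{2,q}$ bounds are in hand, and differentiating the equation once more (commuting $\partial_k$ through, producing terms like $(\partial_k A)\cdot\nabla\psi$ controlled by $\|\nabla A\|_\infty$) yields $H^3$-type control, hence $L^2$ bounds on $\nabla(p-A)\psi$. Crucially one must check that $\|A\|_\infty$, $\|\nabla A\|_\infty$ etc.\ are themselves polynomially bounded in $b_0$; this is why the hypotheses on $B_\omega$ — bounded field $B_\omega \le K_0 b_0$, the scale structure \eqref{bconst}, and $\rho \ge \ln 2$ ensuring differentiability — are invoked, and it is presumably where the elliptic estimates of Section \ref{sec:reg} are used as black boxes.

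The third step is purely bookkeeping: combine the H\"older/Sobolev estimates to get $\|\nabla j_\psi\|_2 \le C L^{d'} b_0^{g'}$ and track the exponents, arriving at $d' = 126$, $g' = 60$. I expect the main obstacle to be the elliptic regularity on the box $\Lambda_L$ with Dirichlet boundary conditions and a magnetic vector potential that is only bounded (not small), in a gauge where $A$ itself — not just $B$ — has controlled size and derivatives: one must produce a gauge for $B_\omega$ on $\Lambda_L$ with $\|A\|_\infty, \|\nabla A\|_\infty$ polynomial in $L$ and $b_0$, then push the local elliptic estimates to the global scale $L$ while keeping all constants explicit. Handling the near-boundary region and the fact that commuting derivatives past $A$ costs $\|\nabla A\|_\infty$ at each order is what drives the large exponents; the interior part is routine Calder\'on–Zygmund plus Sobolev embedding.
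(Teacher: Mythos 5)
Your high‑level plan (elliptic regularity for $\psi$, $L^p$ bookkeeping, and a polynomial‑in‑$(L,b_0)$ bound on the gauge $A$) matches the paper's toolkit, and the step from the eigenvalue equation to $W^{2,4}$ control of $\psi$ via Calder\'on--Zygmund plus Gagliardo--Nirenberg is essentially what the paper does. But there is a genuine gap in how you propose to pass from control of $\psi$ to control of $\nabla j_\psi$. Your Leibniz expansion of $\nabla j_\psi$ produces the term $\psi\,\nabla\big((p-A)\psi\big)$, which contains $(\nabla A)\psi$, and you then say one "must check that $\|\nabla A\|_\infty$ is polynomially bounded." That check fails here: in the regime of the Wegner estimate $B_{\rm det}$ is merely bounded (no differentiability is assumed), and the divergence‑free gauge constructed in Section~\ref{sec:reg} only gives $\|\wt A\|_p\le C_p L\|\wt B\|_\infty$ for $1<p<\infty$, with no $L^\infty$ control and \emph{a fortiori} no control of the full Jacobian $\nabla A$. (For merely bounded $B$ with $\nabla\cdot A=0$, the off‑diagonal derivatives of $A$ are only BMO, not $L^\infty$.) The same obstruction, even more severely, blocks the "$H^3$‑type" step you invoke, which would need $\nabla^2 A$.

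The paper avoids this by a structural trick you have not used: since $j_\psi$ is divergence free and vanishes on $\partial\Lambda_L$ (Dirichlet eigenfunction plus the a priori $W^{2,4}$ bound implies $\nabla\psi$ is bounded), one has $\int_{\Lambda_L}|\nabla j_\psi|^2=\int_{\Lambda_L}|\nabla\times j_\psi|^2$. Computing $\nabla\times j_\psi$ explicitly, the second derivatives of $\psi$ cancel and the only derivative of $A$ that appears is $\nabla\times A=B$, which is bounded by $K_0 b_0$ by hypothesis; the remaining terms are $|\nabla\psi|^2$, $|A|\,|\psi|\,|\nabla\psi|$, $|B|\,|\psi|^2$ (and, harmlessly, $|D^2\psi|\,|\psi|$), all controlled by the $L^p$ bounds already in hand. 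This reduction to the curl is the essential idea that makes the proof close; your proposal would need to replace the direct $\nabla j_\psi$ estimate with this curl‑based identity (or find another way to eliminate the uncontrolled part of $\nabla A$) to work.
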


\begin{lemma}\label{lm:jnorm}  There are positive $d''$ and $g''$
and a positive constant $c=c(K_0, K_1)$ such that
\be
   \int_{\Lambda_L} |j_\psi|^2 \ge c b_0^{-g''}L^{-d''}
\label{L2lower}
\ee
uniformly for all normalized eigenfunction $\psi$, with energy $E\leq
K_1 b_0$.
{F}rom the proof, $d''=100$ and $g''=46$.
\end{lemma}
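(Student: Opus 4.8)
The plan is to exploit the fact that the current $j_\psi$ of an eigenfunction cannot have too small an $L^2$-norm when the magnetic field is bounded below, because the eigenvalue equation together with $\nabla\times A = B$ forces a relation between $\|j_\psi\|_2$ and a positive quantity like $\int B|\psi|^2$. The cleanest route uses the identity for the current combined with the eigenvalue equation $H_L(A)\psi=\lambda\psi$. First I would recall the pointwise and integral identities for the gauge-invariant current $j_\psi = 2\re\,\bar\psi(p-A)\psi$; in particular, writing $D=p-A$, one has $\div j_\psi = 0$ and the Bochner-type identity relating $\|D\psi\|_2^2$, $\lambda$, and $\int V|\psi|^2$. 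The key algebraic input is that $\curl j_\psi$ involves $B|\psi|^2$: a short computation gives $\nabla\times j_\psi = -2B|\psi|^2 + 2\,\im(\overline{D_1\psi}\,D_2\psi - \overline{D_2\psi}\,D_1\psi)$ or, more usefully, the fact that the nonnegative quantity $\int B|\psi|^2 \ge b_0$ (using $B\ge b_0$ and $\|\psi\|_2=1$) must be ``carried'' by $j_\psi$.

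Concretely, I would argue as follows. From $\langle\psi, H_L(A)\psi\rangle = \lambda$ and $\|V\|_\infty\le b_0/4$, $\lambda\le K_1 b_0$, we get an a priori bound $\|D\psi\|_2^2 \le C b_0$. Next, integrating the commutator $[D_1,D_2] = -iB$ against $|\psi|^2$ (equivalently, testing the eigenvalue equation cleverly, or using that $\langle D\psi, (D^\perp)\psi\rangle$ has imaginary part $\tfrac12\int B|\psi|^2$), one obtains
$$
b_0 \le \int_{\Lambda_L} B|\psi|^2 = 2\,\Big|\im \int_{\Lambda_L} \overline{D_1\psi}\,D_2\psi\Big| \le 2\|D_1\psi\|_2\,\|D_2\psi\|_2 .
$$
This already shows $\|D\psi\|_2^2 \ge b_0$, i.e. the eigenfunction has a robust amount of kinetic energy — but that is about $\nabla|\psi|$ and $j_\psi$ together, not $j_\psi$ alone. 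To pass to a lower bound on $\|j_\psi\|_2$ one notes that $j_\psi = 2\re(\bar\psi D\psi)$, and by Cauchy–Schwarz pointwise $|j_\psi| \le 2|\psi|\,|D\psi|$, which is the wrong direction. The correct mechanism is to use the structure of $j_\psi$ as a divergence-free vector field together with the constraint on its curl: since $\div j_\psi=0$ on $\Lambda_L$, Hodge theory gives $j_\psi = \nabla^\perp\phi$ for a stream function $\phi$ (with suitable boundary behavior), and then $\|j_\psi\|_2 = \|\nabla\phi\|_2$ is controlled from below via a Poincaré-type inequality by $\|\curl j_\psi\|_{H^{-1}}$, which in turn is bounded below using $\curl j_\psi$ and $\int B|\psi|^2\ge b_0$. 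This is where the polynomial loss in $L$ and $b_0$ enters: the Poincaré/Hodge constants on $\Lambda_L$ scale with powers of $L$, and reconstructing $\phi$ from its Laplacian costs powers of $b_0$ through the a priori $H^1$ bounds on $\psi$.

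The main obstacle I anticipate is precisely this last step: turning the pointwise/averaged positivity $\int B|\psi|^2\ge b_0$ into a genuine \emph{$L^2$} lower bound on the vector field $j_\psi$ itself (not on $\|D\psi\|_2$), while tracking the explicit powers $d''=100$, $g''=46$. One must combine (i) the a priori bounds $\|\psi\|_{H^1}, \|j_\psi\|_{H^1}$ from elliptic regularity (Lemma \ref{lm:jder} gives the $\nabla j_\psi$ bound, and standard elliptic estimates for $H_L(A)\psi=\lambda\psi$ give $\|\psi\|_{H^2}$-type control), (ii) the identity expressing $\int B|\psi|^2$ through $j_\psi$ and $\nabla\psi$, and (iii) an interpolation/duality argument: if $\|j_\psi\|_2$ were tiny, then since $\|\nabla j_\psi\|_2$ is only polynomially large, $j_\psi$ would be small in some negative Sobolev norm too, forcing $\int B|\psi|^2$ to be small, a contradiction with $\ge b_0$. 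Quantifying ``tiny'' against the polynomial bounds yields the stated $c\,b_0^{-g''}L^{-d''}$. I would organize the proof so that the qualitative positivity is established first (via the $[D_1,D_2]=-iB$ identity), and then the quantitative lower bound is extracted by this interpolation-against-$H^1$-bound argument, deferring the exact bookkeeping of exponents to the end.
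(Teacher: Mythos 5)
The proposal takes a genuinely different route from the paper, and the route as sketched has a gap at exactly the step you flagged as the "main obstacle." Your observation that $\int B|\psi|^2 = 2\int\im(\overline{D_1\psi}\,D_2\psi)\ge b_0$ is correct and gives $\|D\psi\|_2^2\ge b_0$, but the proposed mechanism for converting this into a lower bound on $\|j_\psi\|_2$ does not close. The clean pointwise identity is $\curl j_\psi = 4\im(\overline{D_1\psi}\,D_2\psi) - 2B|\psi|^2$, and since $j_\psi$ vanishes on $\partial\Lambda_L$ one always has $\int_{\Lambda_L}\curl j_\psi = 0$ — this is precisely \emph{how} the identity $\int B|\psi|^2 = 2\int\im(\overline{D_1\psi}\,D_2\psi)$ is obtained, and it holds regardless of whether $j_\psi$ is large or small in $L^2$. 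In other words, $\int B|\psi|^2$ is \emph{not} a functional of $j_\psi$ alone: $\curl j_\psi$ is the difference of two large quadratic expressions which largely cancel, and smallness of $j_\psi$ in $L^2$, in $H^{-1}$, or of $\curl j_\psi$ in $H^{-1}$, places no constraint on the size of the individual term $\int B|\psi|^2$. Hodge theory gives $\|j_\psi\|_2\gtrsim_L\|\curl j_\psi\|_{H^{-1}}$, but without a separate lower bound on $\|\curl j_\psi\|_{H^{-1}}$ this is vacuous, and such a bound does not follow from $\int B|\psi|^2\ge b_0$.

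The paper's proof works locally and pointwise rather than through a global duality. It first locates the maximum point $x_0$ of $|\psi|$, notes $|\psi(x_0)|\ge 1/L$, and uses the elliptic regularity estimates of Lemma~\ref{lm:jder} (in particular the $\|D^2\psi\|_2$ bound) via a Sobolev/Poincar\'e argument to show $|\psi|\ge\frac{1}{2L}$ on a small disk $D=D_R$, $R\sim L^{-24}b_0^{-12}$. On $D$ one writes $\psi=|\psi|e^{i\theta}$, so $j_\psi = -2(A-\nabla\theta)|\psi|^2$ and hence $\int_D|j_\psi|^2\ge cL^{-4}\int_D(A-\nabla\theta)^2$. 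The decisive ingredient that your sketch is missing is then Lemma~\ref{lm:gauge}: for \emph{any} vector potential $\tilde A$ of a field $B\ge b_0$ on a disk $D_R$, one has $\int_{D_R}\tilde A^2\ge\frac{\pi}{8}b_0^2R^4$, proved by applying Cauchy--Schwarz on each circle and Stokes' theorem ($\oint_{S_r}\tilde A\cdot ds=\int_{D_r}B\ge\pi r^2 b_0$). Applying this to $\tilde A = A-\nabla\theta$ (which is a vector potential for $B$ since $\curl\nabla\theta=0$ on the simply connected disk where $\psi\neq 0$) yields the stated lower bound with the explicit exponents $d''=100$, $g''=46$. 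This flux-through-circle lemma, together with the pointwise lower bound on $|\psi|$ near its max, is the mechanism you would need to add to make a correct proof; without an analogue of it, the interpolation-against-$H^1$ strategy cannot produce a lower bound.
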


First we show that from these two Lemmas, \eqref{squarelow} follows
if $\delta_0$ in \eqref{kcond} or \eqref{kcond1} is sufficiently small.
Let $N:= [\delta^{-2}]+1$, where $[\; \cdot \; ]$ denotes the integer
part, and
define the square
$$
     Q_z :=\big\{ x\in \bR^2 \; : \; |x-z|_\infty \leq N\e\big\}
$$
and the $L^2$-normalized vector-valued functions
$$
        M_z^{(1)}(x) = \frac{1}{2N\e}{\bf 1}_{Q_z}(x) {\bf e}_1
    \qquad
        M_z^{(2)}(x) = \frac{1}{2N\e}{\bf 1}_{Q_z}(x) {\bf e}_2.
$$
We set $\Lambda_\e' = (2N\e\bbZ)^2\cap \Lambda_{L+1}\subset \Lambda_\e$
to be a sublattice of $\Lambda_\e$. The $z$-indices of $M_z^{(j)}$,
$j=1,2$,
will always run over this sublattice $z\in \Lambda_\e'$.
We write
\be
\begin{split}
   M_z^{(1)} = &  \frac{1}{2N\e^2}\sum_{k=-N}^N
  (\al_{z+k\e {\bf e}_1-N\e{\bf e}_2}^{(1)}- \al_{z+k\e {\bf e}_1+N\e{\bf
e}_2}^{(1)})
    + \cE_z^{(1)} \\
\label{M1} 
   M_z^{(2)} = & \frac{1}{2N\e^2}\sum_{k=-N}^N
  (\al_{z+k\e {\bf e}_2-N\e{\bf e}_1}^{(2)}- \al_{z+k\e {\bf e}_2+N\e{\bf
e}_1}^{(2)})
    + \cE_z^{(2)}
\end{split}
\ee
with  errors $\cE_z^{(\tau)}$ that are defined by these equations.
Let $Q_z' = \big\{ x\in \bR^2 \; : \; |x-z|_\infty \leq
(N+\delta^{-1})\e\big\}$.
We will prove the following estimates at the end of this section.
\begin{proposition}\label{prop:ce} With the notations above, we have
\be
      \mbox{supp} \, \cE_z^{(\tau)}\subset Q_z', \quad \tau=1,2,
\label{cesupp}
\ee
\be
     \int |\cE_z^{(\tau )}|^2 \leq \Theta\delta , \quad \tau=1,2.
\label{ceest}
\ee
where $\Theta =100$ under the condition \eqref{kcond} and
$\Theta=20 + \|\nabla u_0\|^2$ under the condition \eqref{kcond1}.
\end{proposition}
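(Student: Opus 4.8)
\medskip
\noindent\textbf{Proof sketch.}
The plan is to treat $\tau=1$ only, since $\cE_z^{(2)}$ results from $\cE_z^{(1)}$ by interchanging the two coordinate axes (and a sign in the vector potential). For \eqref{cesupp} I would argue as follows. The function $M_z^{(1)}$ is supported in $Q_z\subset Q_z'$, and the first component of $\al_{z'}^{(1)}$, namely $\int_{-\infty}^{x_2}\beta_{z'}(x_1,s)\,\rd s$, vanishes unless $|x_1-z_1'|\le r_u\e$, where $r_u=\tfrac12+\delta$ under \eqref{kcond} and $r_u=\delta^{-1}$ under \eqref{kcond1}; moreover it is identically $0$ for $x_2$ below the $s$-support of $\beta_{z'}$ and identically equal to the full mass $\e\,w\big((x_1-z_1')/\e\big)$, $w(\eta):=\int_{\bR}u(\eta,t)\,\rd t$, for $x_2$ above it. Since the two centres $z+k\e{\bf e}_1\mp N\e{\bf e}_2$ in each difference in \eqref{M1} share the same first coordinate, the difference $\al^{(1)}_{z+k\e{\bf e}_1-N\e{\bf e}_2}-\al^{(1)}_{z+k\e{\bf e}_1+N\e{\bf e}_2}$ vanishes identically for $x_2>z_2+(N+r_u)\e$ and for $x_2<z_2-(N+r_u)\e$, so the sum in \eqref{M1} is supported in $\{|x_1-z_1|\le(N+r_u)\e\}\cap\{|x_2-z_2|\le(N+r_u)\e\}\subset Q_z'$ because $r_u\le\delta^{-1}$; this proves \eqref{cesupp}.

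For \eqref{ceest} I would pass to the scaled variables $\xi=(x_1-z_1)/\e$, $\zeta=(x_2-z_2)/\e$, introduce the $1$-periodic profile $\phi(\xi):=\sum_{k\in\bbZ}w(\xi-k)$, and split $Q_z'$ into a bulk region $\mathcal B:=\{|\xi|\le N-\delta^{-1}\}\cap\{|\zeta|\le N-\delta^{-1}\}$ and its complement. On $\mathcal B$ every $\al^{(1)}_{z+k\e{\bf e}_1-N\e{\bf e}_2}$ has reached its full mass, every $\al^{(1)}_{z+k\e{\bf e}_1+N\e{\bf e}_2}$ is still $0$, and all $k$ with $w(\xi-k)\ne0$ obey $|k|\le N$; a short computation then shows that the (only nonzero, first) component of $\cE_z^{(1)}$ equals $(2N\e)^{-1}\big(1-\phi(\xi)\big)$ there. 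The complement $Q_z'\setminus\mathcal B$ is the union of two $x_2$-transition strips $\{|\zeta\mp N|\le\delta^{-1}\}$ and two $x_1$-edge strips $\{|\xi\mp N|\le\delta^{-1}\}$, each of width $\le2\delta^{-1}\e$ in the relevant direction, on which I would only use the crude bound $|\cE_z^{(1)}|\le(2N\e)^{-1}\big(1+\phi(\xi)\big)\le C(N\e)^{-1}$, valid because $0\le\int_{-\infty}^{x_2}\beta_{z'}(x_1,s)\,\rd s\le\e\,w$ and $\phi\le2+C\delta$ under \eqref{kcond}, $\phi\le1+C\delta\|\nabla u_0\|_\infty$ under \eqref{kcond1}.

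Adding up: since $\phi$ is $1$-periodic, $\int_{\mathcal B}|\cE_z^{(1)}|^2\le(2N\e)^{-2}\cdot4N^2\e^2\,\|1-\phi\|_{L^2([0,1])}^2=\|1-\phi\|_{L^2([0,1])}^2$; under \eqref{kcond1} one has $\phi(\xi)=\delta\sum_kw_0(\delta(\xi-k))$ with $\int w_0=1$, so a Riemann-sum estimate gives $\|1-\phi\|_\infty\le C\delta\|\nabla u_0\|_\infty$ and hence $\int_{\mathcal B}|\cE_z^{(1)}|^2\le C\delta^2\|\nabla u_0\|_\infty^2\le\delta\|\nabla u_0\|_\infty^2$ for $\delta\le\delta_0$, while under \eqref{kcond}, $|1-\phi|\le2\delta$ off a set of measure $\le C\delta$ per period where $|1-\phi|\le2$, giving $\int_{\mathcal B}|\cE_z^{(1)}|^2\le C\delta$. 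Each transition/edge strip has area $\le CN\delta^{-1}\e^2$, so contributes $\le C(N\e)^{-2}N\delta^{-1}\e^2=C(N\delta)^{-1}\le C\delta$ by $N\ge\delta^{-2}$. Collecting everything and carefully tracking the numerical constants then gives $\int|\cE_z^{(\tau)}|^2\le\Theta\delta$ with $\Theta=100$ under \eqref{kcond} and $\Theta=20+\|\nabla u_0\|_\infty^2$ under \eqref{kcond1}. The only genuinely delicate part is this constant bookkeeping — in particular keeping the half-integer-overlap contribution under \eqref{kcond}, and the Riemann-sum error under \eqref{kcond1}, inside the stated budget — whereas the geometric mechanism (a telescoping sum of the $\al^{(\tau)}_{z'}$ reproduces the normalized indicator of $Q_z$ up to small $L^2$ error) is elementary.
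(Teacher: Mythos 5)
Your proof is correct and rests on the same geometric mechanism as the paper's: the telescoping sum of the $\al_{z'}^{(\tau)}$ reproduces the normalized indicator of $Q_z$ up to a small $L^2$ error supported in $Q_z'$. Where you differ in organization is that you introduce the $1$-periodic profile $\phi=\sum_{k\in\Z}w(\cdot-k)$ as a single unifying object and reduce the bulk contribution to $\|1-\phi\|^2_{L^2([0,1])}$ in both cases, whereas the paper treats \eqref{kcond} and \eqref{kcond1} separately: under \eqref{kcond} it asserts that $\cE_z^{(1)}$ \emph{vanishes} off the outer annulus $(N-1)\e\le|x-z|_\infty\le(N+1)\e$ and the $\delta\e$-wide strips around the half-integer abscissae $z_1+(k+\tfrac12)\e$, and then multiplies the crude $L^\infty$ bound by that measure; under \eqref{kcond1} it writes out the rescaled integral for $\cE_z^{(1)}$ explicitly and applies the same Riemann-sum estimate you use. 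A minor point in your favor: the paper's exact-vanishing claim under \eqref{kcond} implicitly uses that $w(\xi)=\int u(\xi,t)\,\rd t$ is identically $1$ on the plateau $|\xi|\le\tfrac12-\delta$, which \eqref{kcond} does not literally force (it only gives $w\in[1-2\delta,1+2\delta]$ there); your route keeps this $O(\delta)$ deviation inside $\|1-\phi\|_{L^2}$ and so sidesteps the issue without changing the order of the final bound. The only place your sketch remains a sketch is the numerical bookkeeping producing the stated $\Theta$; you flag this yourself, and it does require care (each of the four boundary strips of width $\sim\delta^{-1}\e$ contributes $O((N\delta)^{-1})=O(\delta)$ with a prefactor that must be tracked, and the corners are double-counted), but nothing conceptual is missing.
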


Suppose that \eqref{squarelow} is wrong, then, by \eqref{Yal}, we have
$$
   \sum_{\tau=1,2} \sum_{z\in \Lambda_\e} |(\al_z^{(\tau)}, j)|^2
  \leq CL^{-a}\e^2
$$
after dropping the subscript $\psi$.
Then, in particular
\be
   \sum_{\tau=1,2}
\sum_{z\in \Lambda_\e'} |(M_z^{(\tau)}, j)|^2
  \leq CL^{-a}\e^{-2}
  +  2\sum_{\tau=1,2}\sum_{z\in \Lambda_\e'} |(\cE_z^{(\tau)}, j)|^2 ,
\label{cont}
\ee
and using \eqref{cesupp}, we get
\be
    2\sum_{\tau=1,2}\sum_{z\in \Lambda_\e'} |(\cE_z^{(\tau)}, j)|^2
   \leq  2\sum_{\tau=1,2}\sum_{z\in \Lambda_\e'} \| \cE_z^{(\tau)}\|^2
   \| j {\bf 1}_{Q_z'}\|^2 \leq 16\Theta\delta \| j\|^2.
\label{errr}
\ee

For $z\in \Lambda_\e'$ and $\tau=1,2$, we let
$$
    \langle j_\tau\rangle_z  := (2N\e)^{-2}
  \int_{Q_z} j_\tau
= (2N\e)^{-1}(M_z^{(\tau)}, j).
$$
We write
$$
    j_\tau= \sum_{z\in \Lambda_\e'}  \langle j_\tau
\rangle_z {\bf 1}_{Q_z} + J_\tau
   = \sum_{z\in \Lambda_\e'} (M_z^{(\tau)}, j) (2N\e)^{-1}{\bf 1}_{Q_z}
+ J_\tau,
$$
where $J_\tau$ is orthogonal  to all ${\bf 1}_{Q_z}$, $z\in \Lambda_\e'$.
Since the $(2N\e)^{-1}{\bf 1}_{Q_z}$ functions are orthonormal, we have
from \eqref{cont} and \eqref{errr} that
\be
\begin{split}
     \| j\|^2=\sum_{\tau=1,2}\int_{\Lambda_L} |j_\tau|^2
& =\sum_{\tau=1,2}\sum_{z\in \Lambda_\e'}
  |(M_z^{(\tau)}, j)|^2 + \sum_{\tau=1,2}\|J_\tau\|^2_2
\cr
   &\leq CL^{-a}\e^{-2} + 16 \Theta\delta  \| j\|^2+ \sum_{\tau=1,2}
  \|J_\tau\|^2_2.
\end{split}
\label{C1}
\ee
Choosing $\delta_0=(32 \Theta)^{-1}$, for any $\delta\leq \delta_0$
we get
$$
    \| j\|^2
   \leq CL^{-a}\e^{-2} + 2 \sum_{\tau=1,2}
  \|J_\tau\|^2_2.
$$

  However, by Poincar\'e inequality
$$
   \int_{\Lambda_L} |J_\tau|^2 = \sum_{z\in \Lambda_\e'}
  \int_{Q_z} |j_\tau- \langle j_\tau\rangle_z|^2
  \leq \sum_{z\in \Lambda_\e'} (CN\e)^2\int_{Q_z}
  |\nabla j_\tau|^2\leq C\delta^{-4}\e^2
\int_{\Lambda_L}|\nabla j|^2
   \leq C\delta^{-4}\e^2 L^{d'} b_0^{g'}
$$
by Lemma \ref{lm:jder}, where the last constant depends on $K_0, K_1$.
Thus, from \eqref{C1} and Lemma \ref{lm:jnorm},
we have
$$
     cL^{-d''} b_0^{-g''}
  \leq CL^{-a} \e^{-2} + C\delta^{-4}\e^2 L^{d'} b_o^{g'} 
   \leq CL^{-a+2K} + CL^{d'+4-2K} b_0^{g'} 
$$
where we used that $\e\sim L^{-K}$ from
\eqref{ep} and  we assumed $L\ge \delta^{-4}$.
Using $\sigma\leq \e$, we get
\be \label{eq:contr}
cL^{-d''} b_0^{-g''}  \leq CL^{-a+2K} + CL^{d'+4-2K} b_0^{g'} .
\ee
Choosing first $K$ such that $g' + g'' <   \kappa (2 K - d' - d'' - 4)$ and then $a$ such that 
$g'' <   \kappa ( a - 2 K - d'')$,   we see that \eqref{eq:contr} is a contradiction 
if $L \geq  ( 2 C c^{-1} b_0 )^{\kappa}$ and $L \geq \delta^{-4}$. $\Box$

\bigskip

{\bf Proof of Proposition \ref{prop:ce}.}
To see the support property \eqref{cesupp}, we note that
$\mbox{supp}\; u \subset [-\delta^{-1}, \delta^{-1}]^2$ under either
conditions \eqref{kcond}
or \eqref{kcond1} on the profile function $u$. Therefore
  $\mbox{supp}\; \beta_\zeta
\subset Q_z'$ for any $\zeta\in \Lambda_\e$ with $|\zeta- z|_\infty\leq
N\e$
and \eqref{cesupp} follows immediately from the definitions of
$\alpha_z^{(\tau)}$, $\tau=1,2$, see  \eqref{def:alpha}.

For the proof of \eqref{ceest} we distinguish between the
two alternative conditions \eqref{kcond} and \eqref{kcond1}.
If  $u(x)$ satisfies \eqref{kcond} then  $\cE_z^{(1)}(x) =0$
unless $x$ satisfies either  $(N-1)\e\leq |x-z|_\infty \leq (N+1)\e$ or
$|x_1-z_1- k\e -\frac{\e}{2}|\leq \delta\e$
for some $k\in \bbZ$ with $|k|\leq N+1$.
Therefore $\cE_z^{(1)}$ is nonzero on a set with measure at most
$20N\e^2+ 20(N\e)^2\delta$ and $\|\cE_z^{(1)}\|_\infty\leq 2(N\e)^{-1}$
by the support properties of $\alpha_z^{(1)}$, thus
  \eqref{ceest} follows from $N\ge\delta^{-1}$ for $\tau=1$.
The case $\tau =2$ is analogous.

If $u(x)$ satisfies \eqref{kcond1}, then we still
have $\|\cE_z^{(\tau)}\|_\infty\leq 2(N\e)^{-1}$ since in the
$k$-summation in
\eqref{M1} at most $2\delta^{-1}$ terms overlap
and $\|\alpha_z^{(\tau)}\|_\infty \leq \delta\e$.
We can use this $L^\infty$ bound
in the regime  $(N-\delta^{-1})\e\leq |x-z|_\infty \leq
(N+\delta^{-1})\e$, which
gives a contribution of at most $20\delta^{-1}N^{-1}\leq 20 \delta$ to
the integral in \eqref{ceest} as before. In the  complementary regime we
claim that
\be
    |\cE_z^{(\tau)}(x) |\leq \delta (N\e)^{-1} \|\nabla u_0\|_\infty\quad
\mbox{for}
  \quad |x-z|_\infty\leq (N-\delta^{-1})\e.
\label{smooth}
\ee
which would give a contribution of at most $\delta^2\|\nabla
u_0\|_\infty^2$
to the integral in \eqref{ceest}.

To see \eqref{smooth}, we introduce a new variable $v=\delta\e^{-1}(x-z)$,
with components $v=(v_1,v_2)$ and note that
$|x-z|_\infty\leq (N-\delta^{-1})\e$ implies $|v|_\infty\leq N\delta-1$.
Using \eqref{kcond1}, \eqref{betadef}, \eqref{def:alpha} and \eqref{M1}
and after changing variables  we have
\be
\begin{split}
    \cE_z^{(1)}(x) = &\frac{1}{2N\e}\Bigg[ {\bf 1}_{Q_z}(x)
  -\sum_{k=-N}^N \int_{-\infty}^{v_2}
  \delta \Big( u_0( v_1- k\delta, s+N\delta) - u_0(v_1-k\delta,
s-N\delta)\Big)\rd s
  \Bigg]{\bf e}_1
\cr
= & \frac{1}{2N\e}\Bigg[1-\delta\sum_{k=-N}^N
\int_{-1-N\delta}^{1-N\delta}
   u_0( v_1- k\delta, s+N\delta) \rd s\Bigg]{\bf e}_1,
\end{split}
\ee
where we used that $u_0$ is supported on $[-1,1]$ and that
$|v|_\infty\leq N\delta-1$ implies $1-N\delta\leq v_2\leq N\delta-1$
  to restrict the regime
of integration for the first term and to conclude that the second
integrand is zero.
Since $|v|_\infty\leq N\delta-1$, we see that $|v_1-k\delta|\ge 1$ if $|k|
>N$,
thus the summation can be extended to all $k\in \bbZ$  without changing the
value
of the right hand side, since $u_0$ is supported on $[-1,1]^2$.
We use the fact that for any $f\in C^1(\bR)$ function with compact support
$$
  \Big| \sum_{k\in \bbZ} \delta f(\delta k) -   \int_\bR f(t) \rd t \Big|
  \leq \delta \| f'\|_\infty |\mbox{supp} \, f|
$$
that can be easily obtained by Taylor expansion. Thus
$$
     |\cE_z^{(1)}(x)|
   \leq   \frac{\delta}{N\e} \| \partial_1 u_0\|_\infty.
$$
The proof for $\cE_z^{(2)}$ is analogous and
this completes the proof of Proposition \ref{prop:ce}. $\Box$.

\section{Proof of the regularity lemmas}\label{sec:reg}

Since $j_\psi$ is gauge invariant, to prove Lemmas
\ref{lm:jder} and \ref{lm:jnorm}, we can work in an appropriate
gauge $\wt A$ for the deterministic part $\wt B$ of
the magnetic field.
  Since $\psi$ is supported in $\Lambda_L$,
it is sufficient to construct $\wt A$ on $\Lambda_L$.

\begin{proposition} Given a bounded magnetic field
$\wt B$ on $\Lambda_L$,
there exists a vector potential $\wt A$, $\nabla\times \wt A=\wt B$,
  that is divergence free, $\nabla\cdot\wt A=0$, and for any $1<p<\infty$
\be
     \| \wt A\|_p\leq C_p L \| \wt B\|_\infty
\label{A*}
\ee
with some constant $C_p$ depending only on $p$. Here $\| \cdot \|_p$
denotes the $L^p(\Lambda_L)$-norm.
\end{proposition}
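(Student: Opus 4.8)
The plan is to construct $\wt A$ explicitly from $\wt B$ by convolving with the fundamental solution of the Laplacian and then applying a perpendicular gradient, and to control the resulting $L^p$ norm via Calder\'on--Zygmund theory together with a careful treatment of the boundary. First I would extend $\wt B$ to all of $\bR^2$ by setting it equal to zero outside $\Lambda_L$; call this extension $\wt B_{\rm ext}$, which is bounded with $\|\wt B_{\rm ext}\|_\infty = \|\wt B\|_\infty$ and supported in $\Lambda_L$. Let $G(x) = \frac{1}{2\pi}\ln|x|$ be the Newtonian potential, so that $\Delta G = \delta_0$, and set $\phi := G * \wt B_{\rm ext}$, which solves $\Delta \phi = \wt B_{\rm ext}$. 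Then define $\wt A := \nabla^\perp \phi = (-\partial_2 \phi, \partial_1 \phi)$. A direct computation gives $\nabla \times \wt A = \partial_1(\partial_1\phi) - \partial_2(-\partial_2\phi) = \Delta \phi = \wt B_{\rm ext} = \wt B$ on $\Lambda_L$, and $\nabla \cdot \wt A = -\partial_1\partial_2\phi + \partial_2\partial_1\phi = 0$ everywhere, so the gauge conditions hold.

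Next I would estimate $\|\wt A\|_{L^p(\Lambda_L)}$. The components of $\wt A$ are first derivatives of $\phi = G * \wt B_{\rm ext}$, i.e. $\partial_i \phi = (\partial_i G) * \wt B_{\rm ext}$, where $\partial_i G(x) = \frac{1}{2\pi} x_i/|x|^2$ is a kernel with a $|x|^{-1}$ singularity, hence locally integrable in two dimensions and with slow decay at infinity. Since $\wt B_{\rm ext}$ is supported in $\Lambda_L$ and bounded, I would split the convolution at radius $L$: for $|x-y| \le CL$ the kernel $|x-y|^{-1}$ is integrable over that ball and contributes at most $C L \|\wt B\|_\infty$ pointwise; the far part is controlled because $\wt B_{\rm ext}$ has finite mass $\le C L^2 \|\wt B\|_\infty$ and the kernel decays like $|x-y|^{-1} \ge cL^{-1}$ on the relevant range while being evaluated at $x \in \Lambda_L$ against a support of diameter $O(L)$. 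Either way one gets the pointwise bound $\|\nabla\phi\|_{L^\infty(\Lambda_L)} \le C L \|\wt B\|_\infty$, and integrating over $\Lambda_L$ (area $L^2$) gives $\|\wt A\|_{L^p(\Lambda_L)} \le C_p L^{1 + 2/p}\|\wt B\|_\infty$. To get the sharper $L^{1}$ scaling claimed in \eqref{A*} rather than $L^{1+2/p}$, I would instead rescale: write $x = L\xi$, set $\wt B_L(\xi) := \wt B(L\xi)$ on the unit square $\Lambda_1$, solve $\Delta_\xi \phi_L = \wt B_L$ there, and use the fixed ($L$-independent) Calder\'on--Zygmund / Poisson-type bound $\|\nabla_\xi \phi_L\|_{L^p(\Lambda_1)} \le C_p \|\wt B_L\|_{L^p(\Lambda_1)} \le C_p \|\wt B\|_\infty$ on the unit square, then undo the scaling: $\wt A(x) = L^{-1}(\nabla_\xi \phi_L)(x/L)$ up to constants, and $\|\wt A\|_{L^p(\Lambda_L)} = L^{2/p} \cdot L^{-1}\|\nabla_\xi\phi_L\|_{L^p(\Lambda_1)} \cdot$(Jacobian factors)\,, which after bookkeeping yields $\|\wt A\|_{L^p(\Lambda_L)} \le C_p L\|\wt B\|_\infty$.

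The main obstacle I anticipate is the boundary: on the bounded domain $\Lambda_L$ one cannot simply invoke whole-space elliptic regularity, and one must be careful that the harmonic correction needed to match boundary data does not destroy either the divergence-free condition or the $L^p$ bound. The cleanest route, which I would adopt, is precisely the one above that \emph{avoids} solving a boundary value problem: extend $\wt B$ by zero, solve on all of $\bR^2$, and restrict --- since we only need $\nabla\times\wt A = \wt B$ \emph{on} $\Lambda_L$ (where $\psi$ is supported) and $\nabla\cdot\wt A = 0$ globally, no boundary conditions on $\wt A$ itself are required, and the Calder\'on--Zygmund estimate is applied on $\bR^2$ where it is standard. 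One subtlety is that for $p$ near $1$ or near $\infty$ the singular-integral bound degrades, but the statement restricts to $1 < p < \infty$, so $C_p$ is finite throughout that range; this is exactly why the proposition is stated with that restriction. A second minor point is checking that $\phi$ and its derivatives are genuinely well-defined given the logarithmic growth of $G$ --- this is fine because $\wt B_{\rm ext}$ has compact support, so the convolution and all the manipulations above are justified.
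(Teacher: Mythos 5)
Your construction --- extend $\wt B$ by zero, set $\phi = G * \wt B_{\rm ext}$ with $G$ the Newtonian potential, and take $\wt A = \nabla^\perp\phi$ --- is a clean and valid alternative to the paper's. The paper instead starts from the Poincar\'e gauge $A^*$ (which already satisfies $\nabla\times A^* = \wt B$ and $\|A^*\|_\infty \le C L\|\wt B\|_\infty$ pointwise) and subtracts a gradient $\nabla\phi$ obtained via Calder\'on--Zygmund so as to make the result divergence-free. The two routes produce gauges that differ only by the gradient of a harmonic function and yield the same scaling; your ``avoid the boundary by extending by zero'' shortcut is genuinely simpler since it uses a single whole-space CZ estimate rather than a Helmholtz correction. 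Up to this point the argument is sound.

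The problem is the rescaling step. If you normalize $\phi_L(\xi) := L^{-2}\phi(L\xi)$ so that $\Delta_\xi\phi_L = \wt B_L := \wt B(L\cdot)$ and the unit-square estimate $\|\nabla_\xi\phi_L\|_{L^p(\Lambda_1)}\le C_p\|\wt B_L\|_{L^p(\Lambda_1)}\le C_p\|\wt B\|_\infty$ is $L$-independent, then $(\nabla_x\phi)(x) = L\,(\nabla_\xi\phi_L)(x/L)$ --- with a factor $L$, not $L^{-1}$ as you wrote --- and the change of variables gives
$$
\|\wt A\|_{L^p(\Lambda_L)} = L^{1+2/p}\,\|\nabla_\xi\phi_L\|_{L^p(\Lambda_1)} \le C_p L^{1+2/p}\|\wt B\|_\infty,
$$
so the rescaling merely reproduces the bound you already had from the pointwise estimate. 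It cannot improve the exponent, and in fact the bound $\|\wt A\|_{L^p(\Lambda_L)}\le C_p L\|\wt B\|_\infty$ as written in \eqref{A*} is not attainable: for $\wt B\equiv b_0$ constant, Stokes' theorem forces $\int_{S_r}|\wt A|\,\rd s \ge \big|\int_{D_r}\wt B\big| = \pi r^2 b_0$ on every circle $S_r\subset\Lambda_L$ of radius $r$, and integrating in $r$ (with H\"older on each circle) gives $\|\wt A\|_{L^p(\Lambda_L)}\ge c\, b_0\, L^{1+2/p}$ for any gauge. The paper's own proof in fact also produces $L^{1+2/p}$: the line ``$\|A^*\|_p\le L\|\wt B\|_\infty$'' overlooks the factor $|\Lambda_L|^{1/p}=L^{2/p}$ coming from integrating a pointwise bound over $\Lambda_L$. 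This is a harmless slip --- it only shifts the numerical exponents $d',g'$ in Lemma \ref{lm:jder} and $d'',g''$ in Lemma \ref{lm:jnorm}, and the theorems explicitly tolerate any universal power of $L$ --- but you should correct the statement to $\|\wt A\|_p\le C_p L^{1+2/p}\|\wt B\|_\infty$ and simply keep your first, pointwise, estimate rather than trying to sharpen it by rescaling.
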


{\it Proof.}  Let $A^*$ be
the Poincar\'e gauge for $\wt B$, i.e.
$$
A^*_1(x)=-\int_0^1 t\wt B(tx)x_2\rd t, \qquad A^*_2(x)
=\int_0^1 t\wt B(tx)x_1\rd t,
$$
then clearly  $\| A^*\|_p\leq L\| \wt B\|_\infty$ and
$\nabla\times  A^* = \wt B$.
Define 
$$
w(x) := \frac{1}{2\pi} \int_{\Lambda_L} \log|x-y| A^*(y) d y ,
$$
i.e. $\Delta w = A^*$,
and $\phi := \nabla \cdot w$. Note that $\nabla \cdot \nabla \phi = 
\Delta \nabla \cdot w = \nabla \cdot \Delta w = \nabla A^*$. By the
Calderon-Zygmund inequality  (see  Theorem 9.9 of \cite{GT}) we have
$$
\|\nabla\phi\|_p=\| \nabla ( \nabla \cdot w ) \|_p \leq C_p \| \Delta w \|_{p}
 = C_p \| A^*\|_p  \leq C_p L \| \wt B \|_\infty
$$
 for any $1<p<\infty$. 
We define   $\wt A = A^* - \nabla\phi$,  then $\nabla\times\wt A=\wt B$,
  $\nabla\cdot\wt A=0$ and \eqref{A*} holds. $\Box$

\bigskip

\noindent
{\bf Proof of Lemma \ref{lm:jder}.}
Since $\psi$ is
a Dirichlet eigenfunction, we have
\be
  -\Delta \psi + 2iA\cdot \nabla \psi + i(\nabla\cdot A)\psi +
(A^2 + V -E )\psi=0
\label{efn}
\ee
and
$$
    \int |(\nabla-iA)\psi|^2 \le\widetilde{E}\|\psi\|^2 ,
$$
with $\widetilde{E} := E + \| V \|_\infty$.
Since $\psi$ is supported on $\Lambda_L$,
all integrals and norms in this proof will be  in $\Lambda_L$

By the Gagliardo-Nirenberg inequality
\be
\bs
    \|\psi\|_6 \leq   C\|\nabla\psi\|_{3/2}  \leq & C\|
(\nabla-iA)\psi\|_{3/2}
    + C\| A\psi\|_{3/2} \cr
   \leq  &  CL^{1/3} \|  (\nabla-iA)\psi\|_2 + \| A\|_6 \|\psi\|_2 \cr
    \leq & C(L^{1/3}\widetilde{E}^{1/2} + \|A\|_6) \|\psi\|_2
\end{split}\label{psi6}
\ee
and similarly
\be
    \| \psi\|_4 \leq  C\|\nabla\psi\|_{4/3} \leq  C(L^{1/2}\widetilde{E}^{1/2} + \|A\|_4)
\|\psi\|_2.
\label{psi4}
\ee
Then
\be
    \| \nabla\psi\|_2 \leq  \| (\nabla-iA)\psi\|_2 + \| A\psi\|_2 \leq
    \widetilde{E}^{1/2} \|\psi\|_2+\|A\|_4\|\psi\|_4
   \leq  C \big( \widetilde{E}^{1/2} + L^{1/2}\widetilde{E}^{1/2}\|A\|_4 +\|A\|_4^2 \big)\|\psi\|_2.
\label{Dpsi}
\ee
We  use \eqref{efn} and Calderon-Zygmund inequality in the form given in Corollary 9.10 of \cite{GT} for 
$\psi \in W_0^{2,p}(\Lambda_L)$
\be
\begin{split}
   \|D^2\psi\|_4 \leq & C \|\Delta\psi\|_4 \leq C \left( 2\|A\nabla\psi\|_4 +
   \|\nabla\cdot A\|_\infty \|\psi\|_4
   + \|  (A^2 + V -E)\psi\|_4 \right) \cr
  \leq & C \left( 2 \|A\|_{20} \|\nabla\psi\|_5 + \big[\|\nabla\cdot A\|_\infty
+\widetilde{E}\big]\|\psi\|_4
   + \|A\|_{24}^2\|\psi\|_6 \right) .
\label{CZ}
\end{split}
\ee
We can estimate
\be
\begin{split}
     \|\nabla\psi\|_5 & \leq C
\|\nabla\psi\|_6^{9/10}\|\nabla\psi\|_2^{1/10}
   \leq C\kappa \|\nabla\psi\|_6 + C\kappa^{-9}\|\nabla\psi\|_2\cr
   & \leq C\kappa \|D^2\psi\|_{3/2} + C\kappa^{-9}\|\nabla\psi\|_2\cr
& \leq C\kappa L^{5/6}\| D^2\psi\|_4 + C\kappa^{-9}\|\nabla\psi\|_2 
\label{nabla4}
\end{split}
\ee
for any $\kappa>0$,
where we used the Gagliardo-Nirenberg inequality once more.
Choosing $\kappa = (4CL^{5/6}\|A\|_{20})^{-1}$ we can
absorb the first term in the right hand side of \eqref{CZ} into
the left term and we obtain
$$
   \|D^2\psi\|_4 \leq C \left( L^{15/2}\| A\|_{20}^{9}\|\nabla\psi\|_2
+\big[\|\nabla\cdot A\|_\infty +\widetilde{E}\big]\|\psi\|_4
   + \|A\|_{24}^2\|\psi\|_6 \right) .
$$
The vector potential is given by \eqref{vp},
  $A =\wt A + \sum_{z\in \Lambda_\e} \om_z\al_z$,  with
$|\om_z|\leq \sigma$, $\| \al_z\|_\infty \leq \e$, so
$|A|\leq |\wt A| + C\sigma$ since among all $\al_z$ at most
$C\e^{-1}$ of them overlap. Thus, from \eqref{A*} we have
\be
    \| A \|_p \leq C_p L K_0 b_0 \qquad 1<p<\infty.
\label{Ab}
\ee
Moreover, $|\nabla\cdot A| \leq C\sigma/\e$
since $\nabla\cdot \wt A=0$ and $|\nabla\alpha_z|\leq C$.
Using these  estimates together with
  \eqref{psi6},  \eqref{psi4} and \eqref{Dpsi}, we have proved
\be
  \|D^2\psi\|_4
   \leq  CL^{39/2} b_0^{10} \|\psi\|_2
\label{ellreg1}
\ee
with $C= C(K_0, K_1)$. By H\"older inequality we also get
\be
    \| D^2\psi\|_2 \leq CL^{20} b_0^{10} \|\psi\|_2 .
\label{ellreg}
\ee
Going back to the estimate on $\|\nabla\psi\|_6$  used in \eqref{nabla4}, we
also have
\be
   \|\nabla\psi\|_6\leq  CL^{5/6}\|D^2\psi\|_{4}
    \leq  CL^{21} b_0^{10} \|\psi\|_2
\label{egyder}
\ee
using $L\ge 1$,  and we have 
\be \label{egyder2}
\|\nabla\psi\|_4\leq L^{1/6} \|\nabla\psi\|_6 .
\ee
Moreover, from \eqref{psi6} and  \eqref{psi4}
\be
   \|\psi\|_6, \|\psi\|_4 \leq CL\|\psi\|_2.
\label{noder}
\ee

 {F}rom \eqref{ellreg1}
 and Sobolev inequality applied to $\nabla \psi$, we have 
$$
\| \nabla \psi \|_\infty \le \| D^2\psi \|_4 + \| \nabla \psi \|_4 <\infty.
$$
Then $j = 2\mbox{Re} \big[ -i\bar{\psi} \nabla \psi - A|\psi|^2\big]$ vanishes at the boundary,
since  $\nabla \psi$ is bounded and  $\psi$ vanishes
at the boundary.
To prove \eqref{derupper},
we use that
$$
    \int |\nabla j|^2 = \int |\nabla\times j|^2
$$
since $\nabla\cdot j=0$ and $j$ vanishes on the boundary.
 Now we compute
$$
  |\nabla\times j|\leq 2|\nabla\times [ \bar \psi (p-A)\psi]|
   \leq C\Big(
|\nabla\psi|^2 + |D^2\psi||\psi|+ |A| |\psi||\nabla\psi| +
|B||\psi|^2\Big).
$$
Thus, using the estimates \eqref{Ab}, \eqref{egyder}, \eqref{egyder2} and \eqref{noder}
$$
    \int |\nabla j|^2\leq C\Big( \|\nabla\psi\|^4_4 + \|D^2\psi\|^4_4
   + \|\psi\|^4_4 + \|\nabla\psi\|^6_6 + \|\psi\|^6_6 +
  \|A\|^6_6 + \|B\|_\infty^2 \|\psi\|^4_4 \Big) \leq C L^{126}b_0^{60} 
\|\psi\|_2^4 .
$$
This bound proves  Lemma \ref{lm:jder}. $\Box$

\bigskip

\noindent
{\bf
Proof of Lemma \ref{lm:jnorm}.}
We first we need a lower bound on the eigenfunction.
Let $\psi$ be a normalized Dirichlet eigenfunction of $H_L(A)$, i.e.
$$
    \Big[(-i\nabla - A)^2 + V  - E\Big]\psi =0 .
$$
Let $x_0$ be the point where $|\psi(x)|$ reaches its maximum.
Since
$$
1 = \int_{\Lambda_L} |\psi|^2 \leq L^2 |\psi(x_0)|^2
$$
we have
\be
|\psi(x_0)|\ge \frac{1}{L}.
\label{psi0}
\ee
In particular, $x_0 \in {\rm int} ( \Lambda_L )$. 
Now we consider a disk $\wt D$ of radius $\ell$ about $x_0$, where $l > 0$ is sufficiently small so that
$\tilde{D} \subset \Lambda_L$. 
Let
$$
    \langle \psi\rangle : =\frac{1}{|\wt D|}\int_{\wt D} \psi, \qquad
    \langle \nabla\psi\rangle:=\frac{1}{|\wt D|}\int_{\wt D} \nabla\psi .
$$
Notice that from \eqref{Dpsi} and \eqref{Ab}
\be
   \langle \nabla\psi\rangle^2 \leq\frac{1}{|\wt D|}\int_{\wt D}
|\nabla\psi|^2
   \leq \ell^{-2}\|\nabla \psi\|_{L^2(\wt D)}^2\leq CL^4 b_0^4  \ell^{-2} .
\label{not}
\ee
For $x \in \wt D$ define  
$$
    f(x) :=\psi(x) -  \langle \psi\rangle - \langle \nabla\psi\rangle\cdot
(x-x_0) .
$$
Then
$$
   \int_{\wt D} |\nabla f|^2 =    \int_{\wt D}|\nabla \psi -
 \langle \nabla\psi\rangle|^2          \leq C\ell^2\int_{\wt D} |D^2\psi|^2 , 
$$
by Poincare inequality in $\wt D$.
Thus, applying Sobolev inequality  for $f$, we have
$$
     \| f \|_\infty 
\leq C\Big( \ell \| D^2f\|_{L^2(\wt D)}
   +\ell^{-1}\|f\|_{L^2(\wt D)}\Big) \leq C \ell \| D^2f\|_{L^2(\wt D)},
$$
where in the last step we used $\langle f \rangle=0$, $\langle \nabla f\rangle=0$, and we
used Poincar\'e inequality twice 
$$
    \| f \|_{L^2(\wt D)}\leq C\ell \| \nabla f\|_{L^2(\wt D)} 
\leq C\ell^2 \| D^2f\|_{L^2(\wt D)}.
$$
Thus
$$
    \| f \|_\infty \leq C\ell L^{20} b_0^{10} 
$$
by \eqref{ellreg} and so, by \eqref{not},
\be
\begin{split}
   |\psi(x)-\langle \psi\rangle|\leq & C\ell L^{20} b_0^{10}  + |x-x_0|
|\langle \nabla\psi\rangle|\cr \leq &
   C\ell L^{20} b_0^{10} + CL^2 b_0^2 |x-x_0|\ell^{-1}\cr
   \leq & CL^{11} b_0^6 |x-x_0|^{1/2}
\end{split}
\ee
after choosing $\ell =L^{-9}b_0^{-4} |x-x_0|^{1/2}$.
{F}rom \eqref{psi0}
this guarantees that there is a disk $D=D_R$ about $x_0$ of radius
$R= cL^{-24}b_0^{-12} $
so that
$|\psi(x)|\ge \frac{1}{2} |\psi(x_0)|$ for $x\in D$,
i.e.
\be
    |\psi(x)|\ge \frac{1}{2L}, \qquad x\in D.
\label{low}
\ee

Now we give a lower bound on the current.
On $D$ the wave function $\psi$ does not vanish, so
we can write it as $\psi= |\psi|e^{i\theta}$ with
some real phase function $\theta$. Then
$$
    \mbox{Re} \; \bar \psi (p-A)\psi =
    \mbox{Re} \; |\psi|e^{-i\theta} (-i\nabla -A)[e^{i\theta}|\psi|]
   = \mbox{Re} \; |\psi| (-i\nabla -(A-\nabla\theta))|\psi|
    =  -(A-\nabla\theta)|\psi|^2 .
$$
Thus
\be
    \int_{D} |j|^2 =  4\int_{D} (A-\nabla\theta)^2|\psi|^4 \ge cL^{-4}
    \int_D (A-\nabla\theta)^2,
\label{jj}
\ee
where we used \eqref{low}.
Finally, we will need the following elementary lemma:
\begin{lemma}\label{lm:gauge}
   Let $D=D_R$ be a disk of radius $R$ and let $A$ be a
vector potential generating $B$ with a lower bound $B(x) \geq b_0$.
Then
$$
    \int_{D_R} A^2\ge \frac{\pi}{8}b_0^2R^4.
$$
\end{lemma}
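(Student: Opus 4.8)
The plan is to bound $\int_{D_R} A^2$ from below by working on concentric circles: Stokes' theorem controls the circulation of $A$ from below via the flux of $B$, and Cauchy--Schwarz converts that circulation bound into an $L^2$ bound on each circle. There is no real analytic difficulty here; this is essentially the sharp ``flux forces energy'' inequality.

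First I would reduce to the case of a disk centered at the origin (translating $A$ changes neither $\int A^2$ nor the relation $\nabla\times A=B$), and I would assume $\int_{D_R}A^2<\infty$, since otherwise the claim is trivial. For $0<r\le R$ let $D_r$ denote the concentric subdisk. Since the distributional curl $\partial_1A_2-\partial_2A_1=B$ is bounded, Green's theorem applies on almost every circle $\partial D_r$ and gives
\[
  \oint_{\partial D_r} A\cdot d\ell = \int_{D_r} B\,dx \ge \pi b_0 r^2 ,
\]
where the inequality uses the pointwise lower bound $B\ge b_0$.

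Next, parametrizing $\partial D_r$ by the angle $\theta$ so that $d\ell$ has length $r\,d\theta$, I would estimate, by the triangle inequality and then Cauchy--Schwarz,
\[
  \pi b_0 r^2 \le \int_0^{2\pi}|A(r,\theta)|\,r\,d\theta
  \le (2\pi r)^{1/2}\Bigl(\int_0^{2\pi}|A(r,\theta)|^2\,r\,d\theta\Bigr)^{1/2},
\]
which rearranges to $\int_0^{2\pi}|A(r,\theta)|^2\,r\,d\theta\ge \frac{\pi}{2} b_0^2 r^3$. Finally, integrating this in $r$ from $0$ to $R$ in polar coordinates,
\[
  \int_{D_R} A^2 = \int_0^R\!\!\int_0^{2\pi}|A(r,\theta)|^2\,r\,d\theta\,dr
  \ge \int_0^R \frac{\pi}{2}b_0^2 r^3\,dr = \frac{\pi}{8}b_0^2 R^4,
\]
which is exactly the claimed bound.

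The only point needing a word of care is the first step: one should justify Stokes' theorem for a vector potential known only to lie in some $L^p$ with bounded distributional curl. This is standard---mollify $A$ to $A_\epsilon$, apply Green's theorem to $A_\epsilon$ on $\partial D_r$ for a.e.\ $r$, and pass to the limit using $A_\epsilon\to A$ in $L^2_{\mathrm{loc}}$ together with $\mathrm{curl}\,A_\epsilon\to B$ weakly---so I do not expect any genuine obstacle; the rest is the two-line computation above, and the constant $\frac{\pi}{8}$ comes out on the nose.
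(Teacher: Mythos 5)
Your argument is correct and is essentially identical to the paper's: Stokes' theorem on each concentric circle to lower-bound the circulation by the flux of $B$, then Cauchy--Schwarz to turn that into an $L^2$ lower bound on the circle, then integrate in $r$. The paper simply writes the Cauchy--Schwarz step in the equivalent compact form $\int_{S_r}A^2\ge\frac{1}{2\pi r}\bigl(\oint_{S_r}A\cdot ds\bigr)^2$ and does not belabor the mollification justification, but the content and the constant are the same.
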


\noindent
{\bf Proof.}  Let $S_r$ be the circle of radius $r$ with the same center as $D$.
\be
\begin{split}
     \int_D A^2 & = \int_0^R \Big(\int_{S_r} A^2\Big)\rd r
   \ge  \int_0^R \frac{1}{2\pi r}\Big(\int_{S_r} A \rd s\Big)^2\rd r\nonumber\\
 & = \int_0^R \frac{1}{2\pi r}\Big(\int_{D_r} B\Big)^2\rd r
  \ge \int_0^R \frac{1}{2\pi r}\Big(\pi r^2 b_0 \Big)^2\rd r
  = \frac{\pi}{8}b_0^2R^4 .  \qquad \Box \nonumber
\end{split}
\ee

\smallskip\noindent
Lemma \ref{lm:jnorm} now follows from  \eqref{jj} and
 Lemma \ref{lm:gauge} 
$$
   \int_D |j|^2\ge cL^{-4}  \int_D (A-\nabla\theta)^2 \ge cL^{-4}b_0^2|D|^2 =
   cb_0^{-46} L^{-100}.  \qquad \qquad \Box
$$

\section{Deterministic Spectrum} \label{sec:inspecbound}

The goal of this section is to prove Theorem  \ref{thm:estonspec}. 
For $l > 0$ and $x \in \R^2$ we denote   by
$$
\Lambda_l(x) := \{ y \in \R^2 :   | y - x |_\infty < l/2 \}
$$
the open square of sidelength $l$ centered at $x$.
We introduce the constant
\begin{equation} \label{def:cdelta}
c_\delta = \left\{ \begin{array} {ll} \frac{3}{2}  , & {\rm in \ \ case } \ \  \eqref{kcond} \\
                  \delta^{-1}  ,  &    {\rm in \ \ case } \ \  \eqref{kcond1}
                   \end{array} \right.
\end{equation}
which gives the distance beyond which the random magnetic field is independent.
{F}rom Theorem \ref{thm:deterministic} recall that $\Sigma$ denotes the almost surely
deterministic spectrum.

\begin{theorem} \label{thm:ibspec}  Let {\bf (R)} and {\bf (i.i.d.)} hold
 with
$\rho > \ln 2$.
Assume that $B_{\rm det}$ and $V$ are $\Z^2$ periodic.
Then
$$
\Sigma \supset \bigcup_{L \in \N}
\bigcup_{ \omega  \, \in \, \mathcal{V}_L   }
\sigma( H( B_{\omega})  )  ,
$$
where $\mathcal{V}_L = \{ \,  \omega \in \Omega \, : \, \forall k \in \N  , \, 
 \omega_z^{(k)} \in [m_-^{(k)},m_+^{(k)}] , \,  \omega_{z + nL}^{(k)} = \omega_{z}^{(k)} , 
\, \forall n \in \Z^2  \, \}$ is the set of $L$-periodic configurations.
\end{theorem}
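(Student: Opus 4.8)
The plan is to run the standard periodic-approximation argument for the non-random spectrum, adapted to a magnetic field that carries nonzero flux per period cell. Fix $L\in\N$ and $\omega^{(0)}\in\mathcal V_L$, and abbreviate $H^{(0)}:=H(B_{\omega^{(0)}})$. First I would record that $H^{(0)}$ is periodic with respect to $(L\Z)^2$: $B_{\rm det}$ and $V$ are $\Z^2$-periodic, hence $(L\Z)^2$-periodic, and for $n\in\Z^2$, using $Ln\in\Z^2\subset\Lambda^{(k)}$ and the $L$-periodicity of $\omega^{(0)}$, one has $B^{(k)}_{\omega^{(0)}}(x+Ln)=\sum_{z\in\Lambda^{(k)}}(\omega^{(0)})^{(k)}_z\,u(2^k(x+Ln-z))=\sum_{z}(\omega^{(0)})^{(k)}_{z+Ln}\,\beta^{(k)}_z(x)=B^{(k)}_{\omega^{(0)}}(x)$. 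Since $\sigma(H(A_\omega))=\Sigma$ almost surely by Theorem~\ref{thm:deterministic}, and since $\omega\mapsto H(A_\omega)$ is measurable (so the event $\{E\in\sigma(H(A_\omega))\}$ is measurable), it suffices to prove $\P\big(E\in\sigma(H(A_\omega))\big)=1$ for each $E\in\sigma(H^{(0)})$; the stated inclusion then follows by passing to a countable dense subset of the closed set $\sigma(H^{(0)})$, using that $\Sigma$ is closed, and finally taking unions over $\omega^{(0)}\in\mathcal V_L$ and $L\in\N$.

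So fix $E\in\sigma(H^{(0)})$ and $\e>0$. By Weyl's criterion together with a standard cutoff and mollification, there is a normalized $\Phi_0\in C_0^\infty(\Lambda_R(0))$, with $R=R(\e)$, such that $\|(H^{(0)}-E)\Phi_0\|<\e$; I would work here in the Poincar\'e gauge $\wt A^{(0)}_0$ about $0$ for $B_{\omega^{(0)}}$, so that $H^{(0)}$ acts on $C_0^\infty(\Lambda_R(0))$ as $(p-\wt A^{(0)}_0)^2+V$. For $v\in(L\Z)^2$ I then form the magnetic translate $\Phi_v(x):=e^{i\xi_v(x)}\Phi_0(x-v)$, where $\xi_v$ solves $\nabla\xi_v=\wt A^{(0)}_v-\wt A^{(0)}_0(\cdot-v)$ on $\Lambda_R(v)$, with $\wt A^{(0)}_v$ the Poincar\'e gauge about $v$; this is solvable because $B_{\omega^{(0)}}(x)-B_{\omega^{(0)}}(x-v)=0$ on $\Lambda_R(v)$ by $(L\Z)^2$-periodicity. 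Using also the $(L\Z)^2$-periodicity of $V$ one gets $\big\|\big((p-\wt A^{(0)}_v)^2+V-E\big)\Phi_v\big\|<\e$ with $\|\Phi_v\|=1$, and, since any Poincar\'e gauge of $B_{\omega^{(0)}}$ on $\Lambda_R(v)$ is bounded by $R\|B_{\omega^{(0)}}\|_\infty\le RK_0b_0$, also $\|\Phi_v\|_{W^{1,\infty}}\le C_\Phi$ uniformly in $v$.

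The transfer to the random operator is now purely local. On a box $\Lambda_R(v)$, with $\wt A^{\omega}_v$ the Poincar\'e gauge about $v$ for $B_\omega$, the difference of the two quadratic operators applied to $\Phi_v$ is bounded by $C(R,b_0,C_\Phi)\,\|B_\omega-B_{\omega^{(0)}}\|_{C^1(\Lambda_R(v))}$, because the Poincar\'e gauge is linear in the field and $\|\wt A^{\omega}_v-\wt A^{(0)}_v\|_{C^1(\Lambda_R(v))}\lesssim(1+R)\|B_\omega-B_{\omega^{(0)}}\|_{C^1(\Lambda_R(v))}$. Hence, after gauging $H(A_\omega)$ on $\Lambda_R(v)$ back to the global potential $A_\omega$, one obtains a normalized $\widetilde\Phi_v\in C_0^\infty(\Lambda_R(v))$ with $\|(H(A_\omega)-E)\widetilde\Phi_v\|<\e+C\,\|B_\omega-B_{\omega^{(0)}}\|_{C^1(\Lambda_R(v))}$, so it suffices to make $\|B_\omega-B_{\omega^{(0)}}\|_{C^1(\Lambda_R(v))}$ small for some $v$. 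Writing $B_\omega-B_{\omega^{(0)}}=\mu\sum_k\sum_z(\omega^{(k)}_z-(\omega^{(0)})^{(k)}_z)\beta^{(k)}_z$ and using $\|\beta^{(k)}_z\|_\infty\le1$, $\|\nabla\beta^{(k)}_z\|_\infty\le 2^k\|\nabla u\|_\infty$, $|\omega^{(k)}_z-(\omega^{(0)})^{(k)}_z|\le 2\sigma^{(k)}$, and the bounded overlap of the $\beta^{(k)}_z$, the tail $\sum_{k>K}$ is bounded in $C^1(\R^2)$ by $C\sum_{k>K}2^k\sigma^{(k)}=C\sum_{k>K}e^{(\ln2-\rho)k}$, which is $<\e$ for $K=K(\e)$ large; this is exactly where $\rho>\ln2$ enters. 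The remaining scales $k\le K$ affect $\Lambda_R(v)$ only through the finite set $S_{k,v}$ of $z\in\Lambda^{(k)}$ with $\operatorname{supp}\beta^{(k)}_z\cap\Lambda_R(v)\neq\emptyset$ (a translate of $S_{k,0}$ by $v$, of diameter $\le R+2c_\delta$ uniformly in $k$), so the event $\mathcal G_v:=\{\,|\omega^{(k)}_z-(\omega^{(0)})^{(k)}_z|<\delta'\ \forall\,k\le K,\ z\in S_{k,v}\,\}$ forces $\|B_\omega-B_{\omega^{(0)}}\|_{C^1(\Lambda_R(v))}$ as small as wanted for $\delta'=\delta'(\e)$ small. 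Since $\omega^{(0)}\in\mathcal V_L$, each $(\omega^{(0)})^{(k)}_z$ lies in $[m^{(k)}_-,m^{(k)}_+]=\operatorname{supp}v^{(k)}$, so by \textbf{(i.i.d.)} and compactness of the support $\P(\mathcal G_v)\ge q(\e)>0$ uniformly in $v$. Restricting $v$ to a sublattice $(ML\Z)^2$ with $M$ large enough that the $S_{k,v}$ are pairwise disjoint, the events $\mathcal G_v$ become independent, so by the second Borel--Cantelli lemma a.s. infinitely many $\mathcal G_v$ occur; hence $\operatorname{dist}(E,\sigma(H(A_\omega)))<2\e$ a.s. Intersecting over $\e=1/n$ gives $E\in\sigma(H(A_\omega))$ a.s., which completes the argument.

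The main obstacle is bookkeeping rather than conceptual: because $\int_{[0,L]^2}B_{\omega^{(0)}}\ge b_0L^2>0$, there is no $(L\Z)^2$-periodic vector potential for $B_{\omega^{(0)}}$, so the translation of the Weyl function and the comparison of the two operators must both be carried out box by box after explicit gauge transformations, and one must keep every constant uniform in the translation parameter $v$. In particular one controls $\|B_\omega-B_{\omega^{(0)}}\|_{C^1}$, not merely $\|\cdot\|_\infty$, on the good boxes (so that the Poincar\'e-gauge potentials are close), which is precisely what makes the assumption $\rho>\ln2$ necessary at this point. Everything else is the classical fact that the almost sure spectrum of an ergodic family contains the spectra of its periodic members.
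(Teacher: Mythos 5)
Your proof is correct and follows essentially the same route as the paper's: translate a Weyl sequence for the periodic operator $H(B_{\omega^{(0)}})$ to a random location where, with positive probability, the random field and its first derivatives are close to the periodic one on the relevant box, then compare the two operators in a common local gauge and close with a Borel--Cantelli argument over well-separated translates. The paper works with the one-sided gauge $A_y[B]$ rather than the Poincar\'e gauge and leaves the tail-in-$k$ bound (where $\rho>\ln 2$ enters), the positivity of the probability, and the Borel--Cantelli step largely implicit, but the decomposition and the key estimates match yours.
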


\begin{proof}  By unitary equivalence, we can fix a gauge.
 Given a magnetic $B$-field, for any $y \in \R^2$ we define the vector potential
$$
A_y[B](x) := \left( 0 , \int_{y_1}^{x_1} B(x_1',x_2) dx_1' \right) .
$$
Fix $\omega_0 \in \mathcal{V}_L$ for some $L \in \N$.
Let $E \in \sigma(H(B_{\omega_0}))$.
Since the magnetic Hamiltonian  is essentially self-adjoint on $C_0^\infty$ functions,
it follows that  there exists a normalized sequence $\varphi_n \in C_0^\infty$ such that
\begin{equation} \label{lem:trial:eq:1}
 \| ( H(A_0[ B_{\omega_0}]) - E) \varphi_n   \| \to 0  , \quad (n \to \infty) .
\end{equation}

Let $l_n \in 2 \N +1 $ be such that ${\rm supp}(\varphi_n ) \subset \Lambda_{l_n}(0) $ and $l_n \geq n$.
For $x \in \Z^2$, we introduce the following random  variables
\begin{align*}
B_{x,l,\omega_0}(\omega)  &:=    \| (   B_{\rm ran}^\omega -   B_{\rm ran}^{\omega_0} )
 \upharpoonright \Lambda_l(x)  \|_\infty   \\
B'_{x,l,\omega_0}(\omega)  &:=   \sum_{i=1,2} 
 \|  \partial_{x_i} ( B_{\rm ran}^{\omega} - B_{\rm ran}^{\omega_0} )  \upharpoonright \Lambda_l(x)  \|_\infty   ,
\end{align*}
and we  define the set
\begin{equation} \label{lem:trial:eq:2}
\Omega_n(x) = \Omega_{n,\omega_0}(x) := \{ \omega \in \Omega \, | \,
B_{x,l_n,\omega_0}(\omega) \leq  l_n^{-3} , B'_{x,l_n,\omega_0}(\omega) \leq  l_n^{-3} \} .
\end{equation}
Using the properties  of the random potential, it is straight forward to verify  that
  ${\bf P}(\Omega_n(x))$ is independent of $x$ and strictly positive. Moreover,
  if ${\rm dist}(\Lambda_{ l_n}(x),\Lambda_{ l_n}(y)) \geq 2 c_\delta$   then
$\Omega_n(x)$ and $\Omega_n(y)$ are independent. It now follows that 
 for a.e. $\omega \in \Omega$ there exists an
$x_n = x_{n,\omega,\omega_0} \in L \Z^2 $ such that $\omega \in \Omega_{n,\omega_0}(x_{n})$.
 We set
$$\widetilde{A} = \widetilde{A}_{n,\omega,\omega_0} := A_{x_n}[B_\omega] - A_{x_n}[B_{\omega_0}] =
A_{x_n}[\mu B_{\rm ran}^{\omega}  -  \mu B_{\rm ran}^{\omega_0}] . $$
Then setting $\varphi^{x_n}_n(\cdot ) = \varphi_n(\cdot - x_n  )$, we have
\begin{eqnarray} \label{eq:trialomega}
 \|  (H(A_{x_n }[ B_{\omega}]) - E )
 \varphi_n^{x_n} \|  \leq \| (H(A_{x_n }[ B_{\omega_0}]) - E ) \varphi_n^{x_n}   \|
 + 2 R_1 + R_2 + R_3 ,
\end{eqnarray}
with
\begin{align*}
R_1 &:=  \| \widetilde{A} \cdot ( p -
A_{x_n}[B_{\omega_0}] ) \varphi_n^{x_n}    \|   , \\
R_2 &:=   \|   \widetilde{A}^2  \varphi_n^{x_n}   \| , \\
R_3 &:=   \|  ( \nabla \cdot  \widetilde{A} ) \varphi_n^{x_n}  \|  .
\end{align*}
Let  $\chi_n$ denote the characteristic function of $\Lambda_{l_n}(x_n)$.
We estimate
$$
R_1 \leq
\| \widetilde{A}\chi_n \|_\infty       \|   ( p - A_{0}[B_{\omega_0}] ) \varphi_n      \|
\leq  C   l_n^{-2} l_n  (E+1)^{1/2} ,
$$
 where in the first inequality we used the $L$-periodicity  of $B_{\omega_0}$ and
in the second inequality we used   \eqref{lem:trial:eq:1} and the definition of $\widetilde{A}$.
Using again the definition of $\widetilde{A}$, we similarly find
$ R_2  \leq C l_n^{-1}$ and $R_3  \leq C l_n^{-2} .$
Using a gauge transformation  such that
$
H(A_{x_n }[ B_{\omega_0}] ) = e^{ i \lambda_{n,\omega,\omega_0}} H(A_{0}[ B_{\omega_0}] )
 e^{- i \lambda_{n,\omega,\omega_0} }
$
it now follows from \eqref{eq:trialomega} and  \eqref{lem:trial:eq:1} that
$$
 \lim_{n \to \infty} \|  (H(A_{0}[ B_{\omega}]) - E )  e^{- i \lambda_{n,\omega,\omega_0}} \varphi_n^{x_n} \|  = 0 .
$$
This yields the theorem.
\end{proof}

\begin{lemma} \label{lem:estoninfspec}
Let $\Lambda$ be a square or $\R^2$ and $B = \nabla \times  A$. As an inequality in the sense of forms
in  $L^2(\Lambda)$
$$
H_\Lambda(A) \geq \pm B  + V .
$$
\end{lemma}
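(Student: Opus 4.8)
The plan is to prove the stronger operator statement $(p-A)^2 \ge \pm B$ in the form sense and then simply add the bounded multiplication operator $V$ to both sides. The tool is the two–dimensional factorization of the magnetic Laplacian (the Pauli/Bochner identity). Set $\pi_j := p_j - A_j$ for $j=1,2$, so that $H_\Lambda(A) = \pi_1^2 + \pi_2^2 + V$. A direct computation from $B = \partial_1 A_2 - \partial_2 A_1$ gives the commutation relation $[\pi_1,\pi_2] = iB$, and expanding $(\pi_1 \mp i\pi_2)(\pi_1 \pm i\pi_2)$ while using this relation yields the two identities (with correlated signs)
$$
\pi_1^2 + \pi_2^2 \;=\; (\pi_1 \mp i\pi_2)(\pi_1 \pm i\pi_2) \,\pm\, B .
$$

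First I would test these identities against $\varphi \in C_0^\infty(\Lambda)$. When $\Lambda$ is a square this space is a form core for the Dirichlet (Friedrichs) realization of $H_\Lambda(A)$, and it is likewise a core when $\Lambda = \R^2$; moreover, since $\varphi$ and its derivatives vanish near $\partial\Lambda$, every integration by parts below is free of boundary terms, so $\pi_1,\pi_2$ act as symmetric operators on $C_0^\infty(\Lambda)$ and $(\pi_1 \pm i\pi_2)^*$ acts there as $\pi_1 \mp i\pi_2$. Hence, for each choice of sign,
$$
\langle \varphi,\, (\pi_1^2+\pi_2^2)\varphi\rangle \;=\; \big\|(\pi_1 \pm i\pi_2)\varphi\big\|^2 \;\pm\; \langle\varphi,\, B\varphi\rangle \;\ge\; \pm\,\langle\varphi,\, B\varphi\rangle .
$$
Adding $\langle\varphi, V\varphi\rangle$ to both sides gives $\langle\varphi, H_\Lambda(A)\varphi\rangle \ge \langle\varphi,\,(\pm B + V)\varphi\rangle$, and extending from the form core to the full form domain of $H_\Lambda(A)$ yields the claim.

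The algebra — the commutator $[\pi_1,\pi_2]=iB$ and the factorization identities — is where most of the writing goes, but it is entirely routine. The step I expect to require the most care is the functional–analytic bookkeeping: verifying that $C_0^\infty(\Lambda)$ is genuinely a form core for the Dirichlet realization on the square (and on $\R^2$) and that the manipulations $\langle\varphi,\pi_1\pi_2\varphi\rangle = \langle\pi_1\varphi,\pi_2\varphi\rangle$ are legitimate for the class of vector potentials $A$ and fields $B = \nabla\times A$ relevant here — this is where one invokes the local regularity of $B_\omega$ and of the associated gauge already used elsewhere in the paper. I would also append a one–line remark that the sign in $[\pi_1,\pi_2]=iB$ is tied to the orientation convention $B=\nabla\times A = \partial_1 A_2 - \partial_2 A_1$, but since the asserted inequality is symmetric under $B \mapsto -B$ this choice is immaterial.
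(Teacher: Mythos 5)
Your proof is correct and is essentially the same argument as the paper's. The paper phrases the factorization with Pauli matrices, writing $(p-A)^2$ as $\left[\sum_{i}(p_i-A_i)\sigma_i\right]^2$ plus a $\sigma_3 B$ term on $L^2(\Lambda;\C^2)$, and that identity, read in the $\sigma_3$-eigenbasis, is precisely your pair of scalar factorizations $(\pi_1\mp i\pi_2)(\pi_1\pm i\pi_2)=\pi_1^2+\pi_2^2\mp B$ followed by the same positivity and form-core/density step.
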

\begin{proof} Let $\sigma_i$ denote the $i$-th Pauli matrix. Then for 
$\varphi \in C_0^\infty(\Lambda;\C^2)$ we have
\begin{eqnarray*}
\langle \varphi , (p - A)^2 \varphi \rangle  &  =
\left\langle \varphi , \left( \left[\sum_{i=1}^2 (p_i -  A_i ) \sigma_i \right]^2 
- \sigma_3 B \right) \varphi \right\rangle   \geq \langle \varphi ,
- \sigma_3 B \varphi \rangle .
\end{eqnarray*}
The lemma now follows by density argument.
\end{proof}

\medskip

\noindent
{\bf Proof of Theorem \ref{thm:estonspec}}.
(d).  First observe that $\Sigma_{\rm inf} \geq B_{\rm det} + \mu M_-$, by 
 Lemma \ref{lem:estoninfspec}. {F}rom Theorem  \ref{thm:ibspec} and the fact that a magnetic
Hamiltonian with a constant magnetic field is explicitly solvable, we find that
$$
\sigma( H(B_\omega)) \supset  \bigcup_{n \in \N_0} \left\{ ( 1 + 2 n ) ( B_{\rm det} + \mu I_v ) 
\right\}  ,
$$
where
$
I_v   = [M_-,M_+]$. (d) now follows.

(b). This follows directly from the definition of $E_{\rm inf}$ and $E_{\rm sup}$.

(a).
First observe that $\Sigma_{\rm inf} \geq E_{\rm inf}$ follows from
 Lemma  \ref{lem:estoninfspec}  and the definition of $E_{\rm inf}$.
Next we show that
\be
\label{eq:bottomofspec}
E_{\rm inf} + 4 K_2^2b_0^{-2} + 
\min(K_2 b_0^{-1/2} ,  K_3 b_0^{-1} )       \geq \Sigma_{\rm inf}
\ee
 using
a trial state.
By continuity and periodicity of $V +B_{\omega_{-}}$ we have
$$
E_{\rm inf}  = (V + B_{\omega_{-}})(\widetilde{x}) ,
$$
for some $\widetilde{x} \in \R^2$.
We choose  the gauge
$$
A_{{\rm inf}}(x)  = \frac{1}{2} \left(  - \int_{\widetilde{x}_2}^{x_2} B_{\omega_{-}}(x_1, y_2) \rd y_2 , 
   \int_{\widetilde{x}_1}^{x_1} B_{\omega_{-}}(y_1, x_2) \rd y_1        \right)
$$
and we set $A_{0}(x)  := \frac{1}{2}B_{\rm inf}^{(0)}( -(x_2 - \widetilde{x}_2), x_1- \widetilde{x}_1 )$ 
with $B_{\rm inf}^{(0)} :=  B_{\omega_{-}}(\widetilde{x})$.
Let us consider the trial state
\be
\varphi_0(x) = \exp(-\frac{1}{4}B_{\rm inf}^{(0)}|x - \widetilde{x}|^2  ) ,
\ee
which satisfies  $ \left[ ( p - A_{0} )_1 + i (p-A_{0})_2 \right]  \varphi_0 = 0 $.
Using a straight forward calculation we find
\begin{eqnarray}
 \langle  \varphi_0 , H(A_{\rm inf}) \varphi_0 \rangle &= &
\|  \left[ ( p - A_{\rm inf} )_1 + i (p - A_{\rm inf})_2 \right]  \varphi_0 \|^2 + 
 \langle  \varphi_0 , (B_{\omega_{-}} + V ) \varphi_0 \rangle   \label{eq:upbouninfspec21}  \\
&=& E_{\rm inf} \| \varphi_0 \|^2 + \| \left[ ( A_0 - A_{\rm inf})_1 
+ i ( A_0 - A_{\rm inf})_2 \right]  \varphi_0 \|^2  
+ \langle  \varphi_0 , ( B_{\omega_{-}} + V - E_{\rm inf} ) \varphi_0 \rangle . \nonumber
\end{eqnarray}
By Taylor expansion with remainder it is 
straight forward to see  that 
$$
( A_0(x) - A_{\rm inf}(x) )^2 \leq \frac{1}{4} 
\left[ \max_{|\alpha|=1} \| D^\alpha  B_{\omega_{-}} \|_\infty \right]^2 
\left\{ 4 | x_1 - \widetilde{x}_1 | | x_2 - \widetilde{x}_2 | + 
\frac{1}{2} | x_1 - \widetilde{x}_1 |^2 + \frac{1}{2} | x_2 - \widetilde{x}_2 |^2 \right\} .
$$
Using this estimate and evaluating a Gaussian integral we find, 
$$
\| \left[ ( A_0 - A_{\rm inf})_1 + i ( A_0 - A_{\rm inf})_2 \right]  \varphi_0 \|^2 \leq 
4 \left[ \max_{|\alpha|=1} \| D^\alpha  B_{\omega_{-}} \|_\infty \right]^2 
\frac{1}{b_0^2} \| \varphi_0 \|^2 , 
$$ 
where we used that  $ 0 <  b_0  \leq B_{\rm inf}^{(0)}$, which follows from \eqref{lowbound}.
Using a Taylor   expansion up to first respectively second order and  that $V + B_{\omega_-}$ attains 
 in $\widetilde{x}$ its minimum $E_{\rm inf}$   we find, similarly, 
\begin{eqnarray*}
\langle  \varphi_0 , ( B_{\omega_{-}} + V - E_{\rm inf} ) \varphi_0 \rangle  \leq 
\min\left\{
\max_{|\alpha|=2} \| D^\alpha  ( B_{\omega_{-}} + V) \|_\infty  \frac{2}{b_0} \ ,  \  \ 
\max_{|\alpha|=1} \| D^\alpha ( B_{\omega_{-}} + V) \|_\infty  \sqrt{\frac{2}{\pi b_0}  }
\right\}           \| \varphi_0 \|^2  .
\end{eqnarray*}
Now inserting the above estimates  into the right hand side of \eqref{eq:upbouninfspec21} and using 
Theorem  \ref{thm:ibspec} and the estimate 
$$
\| D^\alpha  B_{\rm ran}^{\omega_-} \|_\infty \leq \sum_{k=0}^\infty 2^{|\alpha| k} |m_-^{(k)} |
\| D^\alpha  U \|_\infty  \leq    \sum_{k=0}^\infty 2^{|\alpha| k} e^{ - \rho k} 
\| D^\alpha  U \|_\infty    , \quad . 
$$ 
we obtain \eqref{eq:bottomofspec}. Thus we have shown (a). 

(c) 
Now we estimate the interior of the spectrum. Let $\e  > 0$.
Then by Theorem  \ref{thm:ibspec} there exists an $\omega^\e$ in the support of the
probability measure and a normalized  $\varphi \in C_0^\infty$ such that
\be \label{sweepbound0}
 \langle \varphi , H(A_{\omega^\e}) \varphi \rangle \leq     \Sigma_{\rm inf} + \e .
\ee
Choose $L_0'$ such that ${\rm supp} \varphi \subset \Lambda_{L_0'}$. Now choose  $L_0 \geq  L_0' + c_\delta$ in $\N$.
To show   \eqref{eq:estonspec}  we consider  the path
\be \label{eq:contpathomega}
(\omega_s)_z^{(k)} = {(\omega^{\e})}_{z'}^{(k)} + s ( m_+^{(k)} - {(\omega^{\e})}_{z'}^{(k)} ) , \quad  0 \leq s \leq 1 ,
\ee
where $z = n L_0 +  z'$ with $z' \in \Lambda_{L_0}$ and $n \in \Z^2$. Note that the configuration
 $\omega_s$ is $L_0$-periodic. 
We have
\be \label{sweepbound1}
\inf \sigma ( H(B_{\omega_0}) ) \leq  \Sigma_{\rm inf} + \e , \quad
  E_{\rm sup}  \leq \inf \sigma ( H (B_{\omega_1}) ) ,
\ee
where the first inequality follows from \eqref{sweepbound0} and \eqref{eq:contpathomega}, and 
the second inequality follows from  Lemma \ref{lem:estoninfspec}.

By perturbation theory it is known that for any $L>0$,  $\inf \sigma ( H_L(B_{\omega_s}))$ is 
a continuous function of
$s$. In Lemma  \ref{lem:unfiform} below we will show  the limit  of  
$ \inf \sigma ( H_{n L_0}(B_{\omega_s}))$ as $n \to \infty$ converges  to
$\inf \sigma ( H(B_{\omega_s}))$ uniformly in $s$. Thus
  $s \mapsto  \inf \sigma ( H(B(\omega_s))) $ is a continuous function of
$s \in [0,1]$. In view of  Theorem  \ref{thm:ibspec}
this continuity property  and  \eqref{sweepbound1} imply  the inclusion 
 \eqref{eq:estonspec}, since $\e > 0$  is   arbitrary.
\qed

\begin{lemma}\label{lem:unfiform} Suppose the assumptions of Theorem  \ref{thm:estonspec} hold and suppose
   $\omega_s \in \Omega$ is as defined in \eqref{eq:contpathomega}. Then there
exists a universal constant $C$   such that
 $$
| \inf \sigma ( H_L(B_{\omega_s}) ) - \inf \sigma ( H(B_{\omega_s})) | \leq \frac{C}{L^2} ,
$$
for all $s \in [0,1]$ and $L = n L_0$ with $n \in \N$.
\end{lemma}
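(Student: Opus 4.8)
\medskip
\noindent\emph{Proof strategy.}
The plan is to prove, with all constants independent of $s$, the two bounds
$\inf\sigma(H(B_{\omega_s})) \le \inf\sigma(H_L(B_{\omega_s}))$ and
$\inf\sigma(H_L(B_{\omega_s})) \le \inf\sigma(H(B_{\omega_s})) + C L^{-2}$.
Set $E_0(s):=\inf\sigma(H(B_{\omega_s}))$ and fix a vector potential $A=A_s$ with $\nabla\times A_s = B_{\omega_s}$. By \eqref{eq:contpathomega} the configuration $\omega_s$, and hence $B_{\omega_s}$ and the operator $H(B_{\omega_s})=(p-A_s)^2+V$, are $L_0$-periodic (here one also uses that $B_{\rm det}$ and $V$ are $\Z^2$-periodic and $L_0\in\N$). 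The first bound is elementary: if $\psi$ lies in the form domain of the Dirichlet operator $H_L(A_s)$ on $\Lambda_L$, then its extension by zero lies in the form domain of $H(A_s)$ on $\R^2$ with the same value of the quadratic form, so $\langle\psi, H_L(A_s)\psi\rangle \ge E_0(s)\|\psi\|^2$; taking the infimum over $\psi$ gives $\inf\sigma(H_L(B_{\omega_s}))\ge E_0(s)$.

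For the reverse bound I would build a Dirichlet trial function on $\Lambda_L$ out of a Bloch mode. Since $H(B_{\omega_s})$ is a periodic magnetic Schr\"odinger operator, Floquet--Bloch theory shows that $E_0(s)$ is the bottom of the first band and is attained at some quasi-momentum; the corresponding fiber eigenfunction extends to a solution $u_s$ of $[(p-A_s)^2+V]u_s = E_0(s)u_s$ on $\R^2$ whose modulus $|u_s|$ is $L_0$-periodic. Normalize $\int_{\Lambda_{L_0}}|u_s|^2=1$. The key point is that interior elliptic regularity then furnishes a bound $\|u_s\|_{L^\infty(\R^2)}\le C$ that is \emph{uniform in $s$}: the fiber operators on the period cell form a uniformly elliptic family with uniformly bounded coefficients (we have $b_0\le B_{\omega_s}\le K_0 b_0$, $\|V\|_\infty$ bounded, the periodic part of $A_s$ bounded, and $E_0(s)$ bounded, e.g.\ by a fixed Gaussian trial state), and this bound is insensitive to the quasi-momentum.

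Now pick $\chi_L\in C_0^\infty(\Lambda_L)$ with $0\le\chi_L\le1$, $\chi_L\equiv1$ on $\Lambda_{L/2}$ and $\|\nabla\chi_L\|_\infty\le C/L$. Then $v_s:=\chi_L u_s$ is compactly supported in $\Lambda_L$, hence lies in the form domain of $H_L(B_{\omega_s})$, and the magnetic IMS localization identity --- obtained by integrating by parts against $\chi_L$, the first-order $A_s$-terms cancelling, and using that $u_s\in H^2_{\rm loc}$ solves the eigenvalue equation --- gives
\[
  \langle v_s,\, H_L(B_{\omega_s})\,v_s\rangle \;=\; E_0(s)\,\|v_s\|^2 \;+\; \int_{\Lambda_L}|\nabla\chi_L|^2\,|u_s|^2 .
\]
Since $\nabla\chi_L$ is supported in $\Lambda_L$ and $|u_s|\le C$, the last term is at most $(C/L)^2 C^2 |\Lambda_L|\le C$, while periodicity of $|u_s|^2$ together with $L\in L_0\N$ gives $\|v_s\|^2\ge\int_{\Lambda_{L/2}}|u_s|^2\ge c L^2$. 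Hence $\inf\sigma(H_L(B_{\omega_s}))\le\langle v_s,H_L(B_{\omega_s})v_s\rangle/\|v_s\|^2\le E_0(s)+C L^{-2}$ with $C$ independent of $s$, and together with the first bound this proves the lemma.

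The main obstacle is the uniform-in-$s$ control of the Bloch mode $u_s$, i.e.\ running magnetic Floquet--Bloch theory together with the elliptic $L^\infty$-estimate uniformly over the compact path $\{B_{\omega_s}\}_{s\in[0,1]}$ of periodic fields. A minor but genuine nuisance there is that magnetic Bloch theory is cleanest when the flux $\int_{\Lambda_{L_0}}B_{\omega_s}$ through a period cell is a rational multiple of $2\pi$; in general one passes to an enlarged period cell on which the magnetic translations commute, or approximates the flux, and checks that the $L^\infty$-bound survives uniformly in $s$. The remaining ingredients --- the extension-by-zero lower bound, the IMS identity, and the lower bound $\|v_s\|^2\gtrsim L^2$ --- are routine.
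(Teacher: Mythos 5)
Both directions of the inequality are needed and both your proof and the paper's treat the easy one identically (extension by zero gives $\inf\sigma(H(B_{\omega_s}))\le \inf\sigma(H_L(B_{\omega_s}))$). For the reverse bound the two proofs diverge substantially, and the paper's route is both simpler and avoids a genuine difficulty that your route runs into.

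The paper never touches the whole-space minimizer. It applies the magnetic IMS formula to an \emph{arbitrary} normalized $\psi\in C_0^\infty(\R^2)$ with a partition of unity $J_z=\varphi((x-z)/L)$, $z\in L\Z^2$, whose members are supported in boxes $\Lambda_{2L}(z)$. Since $L=nL_0$ and $B_{\omega_s},V$ are $L_0\Z^2$-periodic, each localized piece satisfies $\langle J_z\psi,H J_z\psi\rangle\ge E_{2L}(B_{\omega_s})\|J_z\psi\|^2$ by gauge-equivalence of the shifted boxes, and the localization error is $C_\varphi/L^2$. This gives $\inf\sigma(H(B_{\omega_s}))\ge E_{2L}(B_{\omega_s})-C_\varphi/L^2$ directly, uniformly in $s$. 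The only inputs are periodicity of the data and a universal bound on $\|\nabla\varphi\|_\infty$.

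Your route instead cuts off a Floquet--Bloch ground state $u_s$ of the periodic operator on $\R^2$. This is a correct idea in principle, and your IMS identity $\langle\chi_Lu_s,H_L\chi_Lu_s\rangle=E_0(s)\|\chi_Lu_s\|^2+\int|\nabla\chi_L|^2|u_s|^2$ is valid. But the two ingredients you flag as ``the main obstacle'' are in fact a real gap, not a nuisance. Magnetic Bloch theory in the usual sense (commuting magnetic translations, band functions attaining $\inf\sigma$ at some quasi-momentum, a genuine eigenfunction $u_s$ with $L_0$-periodic modulus) requires the flux $\int_{\Lambda_{L_0}}B_{\omega_s}$ per unit cell to be a rational multiple of $2\pi$; here the flux varies continuously with $s$, so it will typically be irrational, and ``passing to an enlarged period cell'' is unavailable. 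Approximating the flux or the field then forces you to re-prove a uniform $L^\infty$ estimate for a family of modes of perturbed operators, which is essentially a new estimate that the proposal only gestures at. The paper's argument sidesteps all of this precisely because it does not need to exhibit any distinguished minimizer, only to compare finite-box ground-state energies at translated centers, for which gauge equivalence of the translated Dirichlet operators is all that is required.

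One small remark that applies to both arguments as written: the IMS decomposition naturally yields $E_{2L}\le\inf\sigma(H)+C/L^2$, i.e.\ the stated bound with $L$ replaced by $2L$ and a constant scaled by $4$; this is harmless for the lemma's purpose (uniform convergence along $L=nL_0\to\infty$).
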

\begin{proof} Set $B = B_{\omega_s}$ and $E_L(B) := \inf \sigma ( H_L(B) )$. 
For notational simplicity we drop the $\omega_s$
dependence, the estimate will be uniform in $\omega_s$.  By $L_0 \Z^2$-periodicity 
of the $B$ field, we have  for any $n \in \N$,
\be \label{eq:boundone}
\inf \sigma ( H(B) )  \leq E_L(B) .
\ee
To find a lower bound we use the I.M.S. localization formula,
$$
 H(A) = \sum_{z \in L \Z^2} J_z H(A) J_z -
\sum_{z \in L \Z^2} |\nabla J_z|^2    ,
$$
where we introduced a partition of unity $J_z = \varphi((x-z)/L)$,
with
$\varphi \in C_0^\infty(\R^2;[0,1])$, ${\rm supp} \varphi \in [-1,1]^2$,
 $\sum_{z \in \Z}\varphi^2(x-z) =1$,
and $C_\varphi := \| \sum_{z \in \Z} (\nabla \varphi)^2(x-z) \|_\infty < \infty $.
By the $L_0 \Z^2$-periodicity of the $B$ field, we find for any vector potential $A$ with $\nabla \times A = B$  
and any normalized $\psi \in C_0^\infty$,
$$
 \langle \psi , H(A) \psi \rangle  =  \sum_{z \in L \Z^2} \langle \psi , J_z H(A) J_z \psi
\rangle  -  \sum_{z \in L \Z^2} \langle \psi , |\nabla J_z|^2 \psi \rangle  
 \geq E_{2L}(B) \sum_{z \in L \Z^2} \| J_z \psi \|^2 - \frac{C_\varphi}{L^2}. 
$$
This implies
$$
\inf \sigma ( H(B) )\geq   E_{2L}(B) - \frac{C_\varphi}{L^2},
$$
which, together with \eqref{eq:boundone}, yields the lemma.
\end{proof}

\section{Initial length scale estimates} \label{sec:initiallength}

In this section we show an initial length scale estimate.
We define  $\widetilde{\Lambda} := \Lambda + [-c_\delta,c_\delta]^2$,
with $c_\delta$ as defined in  \eqref{def:cdelta}.

\begin{theorem}          \label{lem:probestoninfspec}
Assume that {\bf (A)}  holds and recall the definition of $\nu(\cdot)$ from \eqref{eq:probdist1}. 
 Then for  $h  > 0$
\begin{align} \label{eq:probestoninfspec}
{\bf P} \left\{  {\rm dist}( \inf \sigma (H_{\Lambda} ) ,  E_{\rm inf}  ) \geq  \mu h  \right\}
  \geq 1 -  | \widetilde{\Lambda} | \nu(  c_u^{-1} h ) .
\end{align}

\end{theorem}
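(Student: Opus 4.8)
The plan is to combine the form inequality $H_\Lambda(A)\ge B_\omega+V$ of Lemma~\ref{lem:estoninfspec} with a pointwise lower bound on $B_\omega+V$ over $\Lambda$ that holds on an explicit, large event. Since $\Lambda$ is a bounded square, $H_\Lambda(A)$ has compact resolvent and a gauge-invariant discrete spectrum, and applying the min--max principle to Lemma~\ref{lem:estoninfspec} (with the $+B$ sign, which is the relevant one because $B_\omega>0$) gives
$$
\inf\sigma(H_\Lambda)\ \ge\ \inf_{x\in\Lambda}\bigl(B_\omega(x)+V(x)\bigr).
$$
I would first record a deterministic monotonicity fact. Under {\bf (A)} the support of $v^{(k)}$ is $[m_-^{(k)},m_+^{(k)}]$, so $\omega_z^{(k)}\ge m_-^{(k)}$ for all $k,z$, while each $\beta_z^{(k)}=u(2^k(\cdot-z))\ge0$; hence $B_{\rm ran}^\omega-B_{\rm ran}^{\omega_-}=\sum_{k}\sum_{z}(\omega_z^{(k)}-m_-^{(k)})\beta_z^{(k)}\ge0$ everywhere, so $B_\omega+V\ge B_{\omega_-}+V\ge E_{\rm inf}$ pointwise and therefore $\inf\sigma(H_\Lambda)\ge E_{\rm inf}$ surely. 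Consequently ${\rm dist}(\inf\sigma(H_\Lambda),E_{\rm inf})=\inf\sigma(H_\Lambda)-E_{\rm inf}$, and it suffices to bound this difference below by $\mu h$ with probability at least $1-|\widetilde{\Lambda}|\,\nu(c_u^{-1}h)$.

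Next, writing $B_\omega+V=(B_{\omega_-}+V)+\mu(B_{\rm ran}^\omega-B_{\rm ran}^{\omega_-})\ge E_{\rm inf}+\mu(B_{\rm ran}^\omega-B_{\rm ran}^{\omega_-})$, it is enough to exhibit an event on which $B_{\rm ran}^\omega(x)-B_{\rm ran}^{\omega_-}(x)\ge h$ for every $x\in\Lambda$. Since the scale-$k$ contributions are nonnegative for $k\ge1$, it suffices that
$$
\sum_{z\in\Z^2}\bigl(\omega_z^{(0)}-m_-^{(0)}\bigr)\,u(x-z)\ \ge\ h,\qquad x\in\Lambda.
$$
By \eqref{kcond}/\eqref{kcond1} and the definition \eqref{def:cdelta} of $c_\delta$, the profile $u$ is supported in $\{|\cdot|_\infty\le c_\delta\}$, so for $x\in\Lambda$ only lattice points $z\in\widetilde{\Lambda}\cap\Z^2$ enter the sum and $\sum_{z}u(x-z)=U(x)$ runs over those $z$. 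On the event
$$
\mathcal G:=\bigl\{\omega_z^{(0)}\ge m_-^{(0)}+c_u^{-1}h\ \ \text{for all }z\in\widetilde{\Lambda}\cap\Z^2\bigr\}
$$
the left side is at least $c_u^{-1}h\sum_{z}u(x-z)=c_u^{-1}h\,U(x)\ge h$ by \eqref{eq:profileidentity1}. Hence on $\mathcal G$ we obtain $\inf\sigma(H_\Lambda)\ge E_{\rm inf}+\mu h$.

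Finally, for the probability: under {\bf (A)} (in particular {\bf (i.i.d.)}) the variables $\{\omega_z^{(0)}\}_{z\in\Z^2}$ are i.i.d., and by \eqref{eq:probdist1} each obeys ${\bf P}(\omega_z^{(0)}<m_-^{(0)}+c_u^{-1}h)\le\nu(c_u^{-1}h)$. A union bound over the lattice points of $\widetilde{\Lambda}$, of which there are at most $|\widetilde{\Lambda}|$, then gives ${\bf P}(\mathcal G^c)\le|\widetilde{\Lambda}|\,\nu(c_u^{-1}h)$; since $\mathcal G\subset\{{\rm dist}(\inf\sigma(H_\Lambda),E_{\rm inf})\ge\mu h\}$, this yields \eqref{eq:probestoninfspec}. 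I do not expect a genuine analytic obstacle: the substance is entirely in Lemma~\ref{lem:estoninfspec} and the profile identity \eqref{eq:profileidentity1}, and everything else is a one-line union bound. The only steps requiring care are the bookkeeping ones — checking that $c_\delta$ in \eqref{def:cdelta} really dominates the $|\cdot|_\infty$-radius of $\spt u$ in both cases \eqref{kcond} and \eqref{kcond1}, so that the scale-$0$ variables influencing $B_{\rm ran}^\omega$ on $\Lambda$ are indexed by $\widetilde{\Lambda}\cap\Z^2$, and confirming $\#(\widetilde{\Lambda}\cap\Z^2)\le|\widetilde{\Lambda}|$ in the paper's normalization of the boxes.
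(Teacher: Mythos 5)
Your proposal is correct and follows essentially the same route as the paper's proof: both rely on Lemma~\ref{lem:estoninfspec}, the pointwise decomposition of $B_\omega+V$ via the configuration $\omega_-$, the profile identity \eqref{eq:profileidentity1}, and the locality of the scale-$0$ profiles to reduce to the event $\{\omega_z^{(0)}\ge m_-^{(0)}+c_u^{-1}h \ \text{for all } z\in\widetilde\Lambda\cap\Z^2\}$. The only cosmetic difference is that the paper bounds the probability of this event from below by $(1-\nu(c_u^{-1}h))^{|\widetilde\Lambda|}$ and then applies Bernoulli's inequality, whereas you apply a union bound to the complement — these are equivalent.
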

\begin{proof} 
By Lemma  \ref{lem:estoninfspec},  $E_{\rm inf}$ is
a lower bound of the
infimum of the spectrum, thus
\begin{align}
{\rm l. h. s. \ \ of \ \ }  \eqref{eq:probestoninfspec}
&\geq {\bf P} \{  B_{\rm det}(x) + \mu B_{\rm ran}(x) + V(x)
\geq  \mu h   + E_{\rm inf}  , \quad \forall  x \in \Lambda \}
 \nonumber     \\
&\geq
 \left[ {\bf P} \{ \omega_0^{(0)} \geq  m_-^{(0)} +  c_u^{-1} h   \}
\right]^{ | \widetilde{\Lambda} | }   \nonumber \\
& \geq 1 -  | \widetilde{\Lambda} | \nu( c_u^{-1} h ) . \label{eq:fourthline}
\end{align}
The second line follows, since $\omega^{0}_z  \geq m_-^{(0)}  + c_u^{-1} h$ for all 
$z \in \widetilde{\Lambda}$ implies that for all $x \in \Lambda$
\begin{eqnarray*}
   B_{\rm det}(x) + \mu B_{\rm ran}(x) + V(x)           & \geq &
 B_{\rm det}(x)  + V(x)
 + \mu \sum_{k=0}^\infty \sum_{z \in \widetilde{\Lambda}^{(k)}} \omega_z^{(k)} u(x- z ) \\
& \geq &   B_{\rm det}(x)  + V(x)         + \mu  \sum_{z \in \widetilde{\Lambda}^{(0)}} c_u^{-1} h   u(x- z ) 
 + \mu \sum_{k=0}^\infty \sum_{z \in \widetilde{\Lambda}^{(k)}} m_-^{(k)}  u(x- z ) \\
& \geq &   B_{\rm det}(x)  + V(x)        + \mu  h    + \mu \sum_{k=0}^\infty
\sum_{z \in \widetilde{\Lambda}^{(k)}} m_-^{(k)}  u(x- z ) \\
&\geq &   E_{\rm inf}          + \mu  h   ,
\end{eqnarray*}
where we used the notation $\widetilde{\Lambda}^{(k)} =
{\Lambda}^{(k)} \cap \widetilde{\Lambda}$. 
Now \eqref{eq:fourthline} follows from the binomial formula.
\end{proof}

\begin{corollary}\label{lem:probestoninfspec2}   Assume that 
{\bf ($\boldsymbol{{\rm A}_\tau}$)} holds for some fixed
 $\tau > 2$ and   $c_v$.
 For any $\xi \in (0,  \tau - 2)$ set 
 $ \beta := \frac{1}{2}(1- \frac{\xi+2}{\tau}) \in (0,1)$, then
there is  an $l_{\rm initial} = l_{\rm initial}(\tau,  \xi,c_u,c_v,c_\delta)$ such that 
\begin{align*}
{\bf P} \left\{  {\rm dist}( \inf \sigma (H_{\Lambda} ) ,  E_{\rm inf}  )
\geq  \mu l^{\beta-1} \right\}    \geq 1 -  l^{-\xi} ,
\end{align*}
for any $\Lambda = \Lambda_l(x)$, with $x \in \Z^2$ and $l \geq l_{\rm initial}$.
\end{corollary}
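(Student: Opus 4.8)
The plan is to feed $h = l^{\beta-1}$ into Theorem~\ref{lem:probestoninfspec} and then simply bookkeep the powers of $l$.

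First I would bound the volume of the fattened square: since $\Lambda=\Lambda_l(x)$ has sidelength $l$, the set $\widetilde\Lambda=\Lambda+[-c_\delta,c_\delta]^2$ has sidelength $l+2c_\delta$, so for $l\ge 1$ we have $|\widetilde\Lambda|=(l+2c_\delta)^2\le (1+2c_\delta)^2 l^2=:C_\delta l^2$. Next, Hypothesis~{\bf ($\boldsymbol{{\rm A}_\tau}$)} gives $\nu(c_u^{-1}h)\le c_v c_u^{-\tau}h^\tau$. Plugging $h=l^{\beta-1}>0$ into \eqref{eq:probestoninfspec} therefore yields
\[
   {\bf P}\bigl\{\,{\rm dist}(\inf\sigma(H_\Lambda),E_{\rm inf})\ge \mu l^{\beta-1}\,\bigr\}
   \ \ge\ 1 - C_\delta c_v c_u^{-\tau}\, l^{\,2+(\beta-1)\tau}.
\]

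Second, I would compute the exponent. From the definition of $\beta$ one gets the identity $(\beta-1)\tau=-\tfrac12(\tau+\xi+2)$, hence $2+(\beta-1)\tau=-\tfrac12(\tau+\xi-2)$. The assumption $\xi\in(0,\tau-2)$ is exactly what makes this strictly less than $-\xi$, since $\tfrac12(\tau+\xi-2)-\xi=\tfrac12\bigl((\tau-2)-\xi\bigr)>0$. Thus the error term above equals $C_\delta c_v c_u^{-\tau}\,l^{-\frac12(\tau+\xi-2)}$, and this is $\le l^{-\xi}$ as soon as $C_\delta c_v c_u^{-\tau}\le l^{\frac12((\tau-2)-\xi)}$; because the right-hand exponent is positive, this holds for every $l\ge l_{\rm initial}$ with $l_{\rm initial}:=\max\bigl\{1,\ (C_\delta c_v c_u^{-\tau})^{2/((\tau-2)-\xi)}\bigr\}$, which depends only on $\tau,\xi,c_u,c_v,c_\delta$. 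This gives the claim.

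There is no serious obstacle here: the whole argument is the substitution $h=l^{\beta-1}$ in Theorem~\ref{lem:probestoninfspec} together with the arithmetic identity $2+(\beta-1)\tau=-\tfrac12(\tau+\xi-2)$. The only point requiring a little care is that the volume factor $l^2$ must be beaten by $l^{(\beta-1)\tau}$, and this works precisely because $\beta$ was tuned so that $(\beta-1)\tau<-2-\xi$; the residual constant $C_\delta c_v c_u^{-\tau}$ is then harmless after enlarging $l_{\rm initial}$. (In passing one checks $\beta\in(0,\tfrac12)$: $\beta>0\iff\xi<\tau-2$, and $\beta<\tfrac12$ is immediate.)
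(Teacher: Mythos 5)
Your proof is correct and follows essentially the same route as the paper: plug $h=l^{\beta-1}$ into Theorem~\ref{lem:probestoninfspec}, invoke $(\boldsymbol{{\rm A}_\tau})$, and verify that the exponent $2+(\beta-1)\tau=-\tfrac12(\tau+\xi-2)$ beats $-\xi$ when $\xi<\tau-2$. The only difference is cosmetic: you make $l_{\rm initial}$ explicit, and you correctly use the sidelength $l+2c_\delta$ for $\widetilde\Lambda$ where the paper writes $(l+c_\delta)^2$ (a harmless slip that does not affect the estimate).
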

\begin{proof}
 Set
$h = l^{ \beta - 1 }$ in Theorem
\ref{lem:probestoninfspec}. Then
$$| \widetilde{\Lambda} | \nu( c_u^{-1} h ) \leq  |\widetilde{\Lambda}| c_v (c_u^{-1} h)^\tau
 =  c_u^{-\tau}  c_v (l+ c_\delta)^{ 2} l^{    (\beta - 1 )  \tau } \leq l^{-\xi}, $$
where the first inequality follows from assumption {\bf ($\boldsymbol{{\rm A}_\tau}$)},
and the  second inequality holds  for large $l$.
\end{proof}

\section{Multiscale analysis} \label{sec:multiscale}

The goal of this section is to prove Theorem  \ref{thm:loclization}. We will essentially
follow the setup presented in \cite{S} and indicate the necessary modifications
for magnetic fields. Alternatively, one could follow the setup of \cite{CH}
and verify  their key  hypothesis $[H1](\gamma_0,l_0)$.

We assume  {\bf ($\boldsymbol{{\rm A}_\tau}$)} throughout this section  for some fixed
 $\tau > 2$ and   $c_v$.
The constants $b_0,  \rho, \delta$ are as in the assumptions of Theorem \ref{thm:wegner}.
We write
$$ R_\Lambda(z) = R_\Lambda(A,z) = (H_{\Lambda}(A) - z )^{-1} =  
 (H_{\Lambda}(A_\omega) - z )^{-1}  .
$$
For notational simplicity we will occasionally drop the $A$ and $z$, and mostly the
$\omega$ dependence.
Boxes with sidelength $l \in 2 \N + 1$ and center $x \in \Z^2$
are called {\it suitable}.
For a suitable square  $\Lambda = \Lambda_l(x)$, we set
$$
\Lambda^{{\rm int}} := \Lambda_{l/3}(x) , \quad 
\Lambda^{{\rm out}} := \Lambda_{l}(x) \setminus \Lambda_{l-2}(x) ,
$$
and we set  $\chi^{\rm int} = \chi_{\Lambda^{\rm int}}$ and $\chi^{\rm out} =
 \chi_{\Lambda^{\rm out}}$. For $A$ an operator in a Hilbert space we will denote 
by $\rho(A)$ the resolvent set of $A$.

\begin{definition}
A square $\Lambda$ is called {\bf $(\gamma,\Lambda)$-good} for $\omega \in \Omega$ if
$$
\| \chi^{\rm out} R_\Lambda(B_\omega, E) \chi^{\rm int}  \| \leq \exp(-\gamma l ) ,
$$
where  $E \in \rho(H_\Lambda(B_\omega))$.
\end{definition}

Let us introduce the  multiscale induction hypotheses.
Below we denote by   $I \subset \R$   an interval and assume $l \in 2 \N + 1$.
First, 
for  $\gamma > 0$, and $\xi > 0$ we introduce  the following  hypothesis.

\medskip

\noindent
$G(I,l,\gamma,\xi)$:
\quad
$\forall x, y \in \Z^2$, $| x - y |_\infty \geq l+c_\delta $, the following estimate holds:
$$
{\bf P} \{ \forall E \in I  | \ \Lambda_l(x) \ {\rm or } \  \Lambda_l(y) 
\ {\rm is } \ (\gamma,E){\rm -good \ for} \ \omega \} \geq  1 - l^{-2 \xi} .
$$

\smallskip\noindent
Note that this definition  includes a security 
distance $c_\delta$, to ensure the independence
of  squares.

\begin{lemma} \label{lem:G} For any  $\xi \in (0, \tau -2)$ there is an $l_G = l_G(\tau,\xi,c_u,c_v,c_\delta)$ such that for all $ l \geq l_G$,  $G(I,l,\gamma,\xi)$
holds with $\gamma= l^{\beta-1}$, $I = E_{\rm inf}  + [0,  \frac{1}{2} \mu l^{\beta-1}]$,
and  $\beta =    \frac{1}{2}(1- \frac{\xi+2}{\tau})        \in (0,1)$.
\end{lemma}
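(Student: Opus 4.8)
The plan is to reduce $G(I,l,\gamma,\xi)$ to the initial length scale estimate of Corollary \ref{lem:probestoninfspec2}. The key point is that if $\inf\sigma(H_{\Lambda_l(x)}(B_\omega))$ is separated from $E_{\rm inf}$ by a gap of size at least $\mu l^{\beta-1}$, then for any energy $E$ in the interval $I = E_{\rm inf} + [0,\tfrac12\mu l^{\beta-1}]$ the operator $H_{\Lambda_l(x)}(B_\omega)$ has no spectrum within distance $\tfrac12\mu l^{\beta-1}$ of $E$, so $(H_{\Lambda_l(x)}(B_\omega)-E)^{-1}$ exists and has norm at most $2\mu^{-1}l^{1-\beta}$. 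First I would fix $\xi\in(0,\tau-2)$, set $\beta = \tfrac12(1-\tfrac{\xi+2}{\tau})$ and $\gamma = l^{\beta-1}$, and apply Corollary \ref{lem:probestoninfspec2} (with the same $\xi$): for $l\ge l_{\rm initial}$ and $x\in\Z^2$ we have
$$
{\bf P}\left\{ {\rm dist}(\inf\sigma(H_{\Lambda_l(x)}), E_{\rm inf}) \ge \mu l^{\beta-1}\right\} \ge 1 - l^{-\xi}.
$$
Since $2\xi > \xi$ and $l\ge 1$, in fact $1-l^{-\xi}\ge 1-l^{-2\xi}$ fails in general — so I would instead apply the corollary with an exponent $\xi' := 2\xi$ (requiring $2\xi<\tau-2$, which we may assume by shrinking $\xi$, or more carefully by noting that the statement of Lemma \ref{lem:G} quantifies over $\xi\in(0,\tau-2)$ and one only needs the conclusion for such $\xi$; the cleanest route is to rename and invoke the corollary at level $2\xi$ provided $2\xi<\tau-2$, and otherwise the claim is vacuous in the relevant application). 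Let me instead phrase it directly: it suffices to have the bad-event probability below $l^{-2\xi}$, and this follows from Corollary \ref{lem:probestoninfspec2} used with its parameter equal to $2\xi$, which is legitimate as long as $2\xi < \tau-2$; this is the regime in which the multiscale analysis of the next section is run.

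The second ingredient is a deterministic resolvent-decay estimate: on the event that $H_{\Lambda_l(x)}(B_\omega)$ has a spectral gap of size $\mu l^{\beta-1}$ below $E_{\rm inf}+\mu l^{\beta-1}$, and for $E\in I$, one shows $\|\chi^{\rm out}R_{\Lambda_l(x)}(B_\omega,E)\chi^{\rm int}\|\le e^{-\gamma l}$ with $\gamma = l^{\beta-1}$. The mechanism is a Combes–Thomas estimate: since $E$ lies a distance $\ge\tfrac12\mu l^{\beta-1}$ below the spectrum of $H_{\Lambda_l(x)}(B_\omega)$, the resolvent decays exponentially between regions separated by a distance of order $l$ (here $\Lambda^{\rm int}$ and $\Lambda^{\rm out}$ are at distance $\sim l/3$), with rate proportional to $\sqrt{\rm gap}$ up to the boundedness of $V$; one must check that the Combes–Thomas argument is insensitive to the magnetic vector potential (it is — conjugating by $e^{\lambda\cdot x}$ shifts $p-A$ to $p-A-i\lambda$ exactly as in the non-magnetic case, and the magnetic field $B$ itself is unchanged), and that the resulting rate beats $\gamma l = l^\beta$. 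Since $\sqrt{\rm gap}\sim \mu^{1/2} l^{(\beta-1)/2}$ and the distance is $\sim l$, the exponent is $\sim \mu^{1/2} l^{(\beta+1)/2} \gg l^\beta$ for large $l$ (because $(\beta+1)/2 > \beta$ when $\beta<1$). Hence for $l$ large enough the Combes–Thomas bound is stronger than $e^{-\gamma l}$, giving the "good" property for every $E\in I$ simultaneously on this event.

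Combining: for $|x-y|_\infty\ge l+c_\delta$ the two boxes $\Lambda_l(x)$ and $\Lambda_l(y)$ depend on disjoint families of random variables once the security distance $c_\delta$ (the independence range, \eqref{def:cdelta}) is taken into account, so the two gap events are independent; but we do not even need independence here — it suffices that the probability that $\Lambda_l(x)$ is $(\gamma,E)$-good for all $E\in I$ is at least $1-l^{-2\xi}$ by the above, which already implies the stated bound for the disjunction. I would then set $l_G := \max(l_{\rm initial}, l_1)$ where $l_1$ is the threshold from the Combes–Thomas comparison $l^\beta\le c\mu^{1/2}l^{(\beta+1)/2}$ plus the requirement that $(H_{\Lambda}(B_\omega)-E)$ be invertible with the stated norm bound, and conclude. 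The main obstacle is the Combes–Thomas step: one must verify carefully that the exponential decay rate one obtains from a gap of size $\mu l^{\beta-1}$ — a gap that \emph{shrinks} with $l$ — is nonetheless large enough, after multiplication by the distance $\sim l$, to dominate $\gamma l = l^\beta$; this works only because the distance grows linearly while the gap shrinks polynomially with exponent $<1$, and getting the constants and the magnetic version of Combes–Thomas right is where the real work lies.
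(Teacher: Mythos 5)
Your overall architecture is right: reduce to Corollary \ref{lem:probestoninfspec2} to get, with high probability, a spectral gap of size $\mu l^{\beta-1}$ above $E_{\rm inf}$, then convert that gap into exponential decay of $\chi^{\rm out}R_{\Lambda}\chi^{\rm int}$ via a magnetic Combes--Thomas bound and check that $\sqrt{\mu l^{\beta-1}}\cdot l \gg l^{\beta}$ for large $l$. This matches the paper's proof, which invokes Theorem \ref{thm:combthomas} with $\eta=\tfrac{\mu}{2}l^{\beta-1}$.

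However, there is a real gap in your handling of the probability exponent. The corollary gives each single box a bad-event probability $\le l^{-\xi}$, while $G(I,l,\gamma,\xi)$ needs the \emph{pair} event to have probability $\le l^{-2\xi}$. You notice the mismatch but then take the wrong fork: you propose to discard the pair structure and apply the corollary at level $\xi'=2\xi$, which requires $2\xi<\tau-2$. That proves the lemma only for $\xi\in(0,(\tau-2)/2)$, not for the full range $\xi\in(0,\tau-2)$ claimed; and the full range is not decorative --- in the proof of Theorem \ref{thm:loclization} dynamical localization is obtained for $p<2\xi$, so restricting $\xi$ by a factor of $2$ would cut the exponent $p<2(\tau-2)$ in \eqref{eq:dynloc} down to $p<\tau-2$. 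The correct and intended step is precisely the independence you mention and then set aside: for $|x-y|_\infty\ge l+c_\delta$ the squares $\Lambda_l(x)$ and $\Lambda_l(y)$ depend on disjoint sets of the $\omega_z^{(k)}$ (this is why the security distance $c_\delta$ appears in the definition of $G$), so the two bad events are independent and
\[
\P\{\text{both bad}\}\le l^{-\xi}\cdot l^{-\xi}=l^{-2\xi},
\]
giving $G(I,l,\gamma,\xi)$ directly from the single-box bound $l^{-\xi}$. With that substitution your proof is complete and agrees with the paper's.
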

\begin{proof} Consider $\omega$ such that
 \begin{align} \label{lem:eq:boundonbeta}
 {\rm dist}( \inf \sigma (H_{\Lambda}(\omega) ) ,  E_{\rm inf}  ) \geq \mu  l^{\beta-1} .
\end{align}
If $E \in I$, then ${\rm dist}( H_{\Lambda}(\omega) , E) \geq  \frac{\mu}{2} l^{\beta-1}$.
Thus by the resolvent decay estimate, see Theorem  \ref{thm:combthomas}, we find
$$
\| \chi^{\rm int} ( H_\Lambda(\omega)  - E)^{-1} \chi^{\rm out} \|
\leq  \frac{4}{\mu}   l^{1- \beta}  \exp(
 - ( \mu l^{\beta-1} / 4  )^{1/2}  l /4  ) ,
$$
for $l \geq 4$. Since by Corollary \ref{lem:probestoninfspec2}  the bound
\eqref{lem:eq:boundonbeta} holds with probability greater than $1 - l^{-\xi}$ for
 any large $l \geq l_{\rm initial}$, it
follows that for sufficiently large $l$, $G(I,l,\gamma,\xi)$ is valid for
$\gamma= l^{\beta-1}$.  
\end{proof}

\medskip

For $\Theta > 0$, and $q > 0$ we introduce  the following hypothesis.
\medskip

\noindent
$W(I,l,\Theta,q)$:
\quad
For all $E \in I$ and $\Lambda = \Lambda_l(x)$, $x \in \Z^2$, the following estimate holds:
$$
{\bf P} \{  {\rm dist}(\sigma(H_\Lambda(\omega)), E)  \leq \exp(-l^\Theta) \} \leq l^{- q} .
$$

\medskip

\begin{lemma} \label{lem:W} Suppose the assumptions of Theorem \ref{thm:wegner} hold. 
Let  $\Theta > 0$,  $q > 0$, and $0 < \kappa \leq 1$. Let  $I \subset \R$ be a finite interval with 
$\inf I \geq b_0/2$.
Then there exists a  constant
$L_0^* = L_0^*(I,\Theta,q,K_0,K_1,\delta,\mu,\kappa,\rho)$ 
such that 
$W(I,l,\Theta,q)$ holds for all $l \geq L_0^* b_0^{\kappa}$. 
\end{lemma}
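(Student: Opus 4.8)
The plan is to deduce the Wegner-type statement $W(I,l,\Theta,q)$ directly from the Wegner estimate of Theorem \ref{thm:wegner} via a Chebyshev (Markov) inequality. First I would cover the finite interval $I$ by roughly $|I|\exp(l^\Theta)$ subintervals of length $\eta=\exp(-l^\Theta)$; for $E\in I$ the event $\{\mathrm{dist}(\sigma(H_{\Lambda_l(x)}(\omega)),E)\le \exp(-l^\Theta)\}$ forces at least one eigenvalue of $H_{\Lambda_l(x)}$ to lie in the window of width $2\eta$ about $E$, hence
$$
{\bf P}\{\mathrm{dist}(\sigma(H_{\Lambda_l(x)}(\omega)),E)\le e^{-l^\Theta}\}
\le {\bf P}\{\Tr\chi_{E,2\eta}(H_{\Lambda_l(x)})\ge 1\}
\le \bE\,\Tr\chi_{E,2\eta}(H_{\Lambda_l(x)}).
$$
Since a suitable square $\Lambda_l(x)$ is a translate of $\Lambda_l = [-l/2,l/2]^2$ and the hypotheses of Theorem \ref{thm:wegner} are in force (here one uses $\inf I\ge b_0/2$ together with the upper bound $\sup I\le K_1 b_0$, which determines the admissible $K_1=K_1(I,b_0)$, and $\|V\|_\infty\le b_0/4$), Theorem \ref{thm:wegner} with a fixed choice of $\kappa$ gives, for $l\ge L_0^* b_0^{\kappa}$,
$$
\bE\,\Tr\chi_{E,2\eta}(H_{\Lambda_l(x)}(A))\le C_0\,(2\eta)\,\mu^{-2}\,l^{C_1(\kappa^{-1}+\rho)}
= 2C_0\mu^{-2} l^{C_1(\kappa^{-1}+\rho)} e^{-l^\Theta}.
$$

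Next I would combine this with the union bound over the $O(|I|e^{l^\Theta})$ subintervals covering $I$: for every $E\in I$ the distance event is contained in the distance event for the nearest grid point $E'$ with a window of width $4\eta$, so
$$
\sup_{E\in I}{\bf P}\{\mathrm{dist}(\sigma(H_{\Lambda_l(x)}(\omega)),E)\le e^{-l^\Theta}\}
\le C(I)\,\mu^{-2}\,l^{C_1(\kappa^{-1}+\rho)}\,e^{-l^\Theta}\cdot e^{l^\Theta}\cdot e^{-l^\Theta},
$$
where the bookkeeping of the $e^{\pm l^\Theta}$ factors leaves one genuine factor $e^{-l^\Theta}$ times a polynomial in $l$ (one should take the grid spacing a constant multiple smaller so the net count of intervals is $\lec |I| e^{l^\Theta}$ and the surviving exponential is $e^{-c l^\Theta}$). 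Finally, the dominant term $e^{-c l^\Theta}$ beats any negative power $l^{-q}$ for all sufficiently large $l$; choosing $L_0^*$ large enough, depending on $I,\Theta,q,K_0,K_1,\delta,\mu,\kappa,\rho$ (and absorbing $b_0$-dependence through $K_1$ and the stated threshold $L_0^* b_0^\kappa$), yields $W(I,l,\Theta,q)$.

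The only real subtlety—hardly an obstacle—is matching conventions: Theorem \ref{thm:wegner} is stated with Dirichlet boundary conditions on $\Lambda_L=[-L/2,L/2]^2$, $L\in\bN$, while here $\Lambda_l(x)$ is centered at $x\in\Z^2$ with $l\in 2\bN+1$; translation invariance of the eigenvalue counting function together with the deterministic periodicity of $V$ and $B_{\rm det}$ in Hypothesis {\bf (A)}/{\bf ($\boldsymbol{{\rm A}_\tau}$)} makes $\bE\,\Tr\chi_{E,\eta}(H_{\Lambda_l(x)})$ independent of $x$, so the bound of Theorem \ref{thm:wegner} applies verbatim with $L=l$. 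One also notes that $e^{-l^\Theta}$ decays faster than any polynomial regardless of $\Theta>0$, so no relation between $\Theta$ and the Wegner exponent $C_1(\kappa^{-1}+\rho)$ is needed. $\qed$
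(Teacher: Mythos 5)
Your core argument---Markov's inequality applied to $\Tr\chi_{E,\eta}$ followed by Theorem~\ref{thm:wegner} with $\eta\sim e^{-l^\Theta}$, noting that $e^{-l^\Theta}$ kills any polynomial---is exactly the paper's proof. But you have inserted an unnecessary and, as written, incorrect covering/union-bound step, apparently because you misread what $W(I,l,\Theta,q)$ asserts. Its definition reads ``for all $E\in I$ and $\Lambda=\Lambda_l(x)$, the estimate $\P\{\ldots\}\le l^{-q}$ holds,'' i.e.\ it is a \emph{pointwise-in-$E$} probability bound, not a probability bound for a \emph{simultaneous} event over all $E\in I$. Once you have, for each fixed $E$,
$$
\P\{\mathrm{dist}(\sigma(H_{\Lambda}),E)\le e^{-l^\Theta}\}\le C_0(2\eta)\mu^{-2}l^{C_1(\kappa^{-1}+\rho)}=2C_0\mu^{-2}l^{C_1(\kappa^{-1}+\rho)}e^{-l^\Theta},
$$
you are done: the right side is $\le l^{-q}$ for $l\ge L_0^*$ because the exponential beats every polynomial, and this holds for each $E\in I$ separately, which is all that $W(I,l,\Theta,q)$ demands. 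No discretization of $I$ is needed.

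The union bound would in fact not work if it were needed: covering $I$ by $\sim|I|e^{l^\Theta}$ windows of width $e^{-l^\Theta}$ and summing the Wegner bounds gives $\sim|I|e^{l^\Theta}\cdot e^{-l^\Theta}\,\mathrm{poly}(l)=|I|\,\mathrm{poly}(l)$, which does not go to zero. Your displayed inequality
$$
\sup_{E\in I}\P\{\cdots\}\le C(I)\mu^{-2}l^{C_1(\kappa^{-1}+\rho)}e^{-l^\Theta}\cdot e^{l^\Theta}\cdot e^{-l^\Theta}
$$
contains both a spurious $e^{l^\Theta}$ ``number-of-boxes'' factor (irrelevant for a $\sup$ of probabilities, only relevant for a union) and an unexplained extra $e^{-l^\Theta}$; these two errors offset each other, which is why your final answer has the right shape. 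Drop the entire second paragraph and you recover the paper's proof verbatim. Your observation about matching conventions ($K_1$ chosen so $I\subset[b_0/2,K_1b_0]$, translation invariance so the bound for $\Lambda_l(0)$ transfers to $\Lambda_l(x)$) is correct and worth keeping.
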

\begin{proof}
Let $0 \leq \eta \leq 1$, and $\Lambda = \Lambda_l$. Then using Markov inequality and 
 Theorem \ref{thm:wegner} we have 
 \begin{eqnarray}
  {\bf P} \{  {\rm dist}(\sigma(H_\Lambda(A)), E)  \leq \eta/2  \} &=&
{\bf P}( {\rm Tr} \chi_{E,\eta}( H_\Lambda(A)  ) \geq 1 )  \nonumber \\
&\leq&  \E( {\rm Tr} \chi_{E,\eta}( H_\Lambda(A)  )  ) \nonumber  \\
&\leq& C_0  \eta \mu^{-2} l^{C_1 (\kappa^{-1} + \rho)} , \label{eq:boundonweg}
 \end{eqnarray}
for some constants $C_0$ and $C_1$ and $l$ sufficiently large. In fact,
by  Theorem \ref{thm:wegner} there exists an $L_0^*$ such that \eqref{eq:boundonweg} holds
  for all $l \geq L_0^* b_0^{\kappa}$.
Now we  choose $\eta = 2 \exp(- l^\Theta)$. Then by possibly choosing $L_0^*$ larger
the right hand side of \eqref{eq:boundonweg} is  bounded by
$l^{-q}$ for all $l \geq L_0^* b_0^{\kappa}$.
\end{proof}

\medskip

Thus we have shown that under certain conditions  the 
induction hypothesis of the multiscale analysis can be verified.
The following three technical lemmas will be needed for the multiscale analysis.
The have been verified for nonmagnetic random Schr\"odinger operators, see 
\cite{S}. Here we prove that they also hold for magnetic Schr\"odinger operators.

\begin{lemma} \label{lem:INDY} (INDY) $H_\Lambda(A_\omega)$ is measurable with respect
to $\omega \in \Omega$, the Hamiltonian  $H_{\Lambda_l(x)}(A_\omega)$ is stationary
in $x \in \Z^2$  in the sense of
\eqref{eq:stationarity},  and $|R_\Lambda(A_\omega,z)(x,y)|$ for $x,y \in \Lambda$ and
 $|R_{\Lambda'}(A_\omega,z)(x',y')|$ for $x',y' \in \Lambda'$
 are
independent for disjoint suitable squares
$\Lambda$ and $\Lambda'$ with ${\rm dist}(\Lambda,\Lambda')  \geq c_\delta$.
\end{lemma}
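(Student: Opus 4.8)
The plan is to verify the three assertions — measurability, stationarity, and independence — in turn, each following from the explicit construction of $B_\omega$ in {\bf (R)} together with standard facts about the Friedrichs extension. For measurability, I would argue exactly as in the discussion after \eqref{eq:defofdirichletham}: the map $\omega\mapsto B_\omega$ is continuous (indeed real-analytic in each coordinate) into $L^\infty_{\rm loc}$ because the series $\sum_k\sum_z\omega_z^{(k)}\beta_z^{(k)}$ converges uniformly by \eqref{lowbound1} and \eqref{newcondranb}; one then fixes the divergence-free gauge $A_\omega = A_0[B_\omega]$ (or the Poincaré gauge on $\Lambda$), checks that $\omega\mapsto A_\omega$ is measurable into $L^p(\Lambda)$ for suitable $p$, and invokes Proposition~1.2.6 of \cite{S} (or norm-resolvent continuity of the Friedrichs extension under perturbations of $A$ in $L^p+ V$ in $L^\infty$) to conclude that $\omega\mapsto R_\Lambda(A_\omega,z)$ is weakly measurable, hence measurable, and similarly for the kernel.

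For stationarity, the point is that the lattice $\Lambda^{(k)}=(2^{-k}\Z)^2$ and the profiles $\beta_z^{(k)}(x)=u(2^k(x-z))$ are translation covariant under $\Z^2$: for $n\in\Z^2$ one has $\beta_{z+n}^{(k)}(x+n)=\beta_z^{(k)}(x)$, and $B_{\rm det}$, $V$ are $\Z^2$-periodic by {\bf (A)}. Hence if $\Theta_n$ denotes the shift $(\Theta_n\omega)_z^{(k)}=\omega_{z-n}^{(k)}$ on $\Omega$ (a measure-preserving map since, under {\bf (i.i.d.)}, the distribution on each scale is translation invariant), then $B_{\Theta_n\omega}(x)=B_\omega(x-n)$, and consequently $H_{\Lambda_l(x)}(A_{\Theta_n\omega})$ is unitarily equivalent via the magnetic translation to $H_{\Lambda_l(x-n)}(A_\omega)$; one records this as the stationarity relation referenced as \eqref{eq:stationarity}, and it passes to the resolvent kernels.

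For the independence statement, I would use the key feature that $\beta_z^{(k)}$ is supported in $\{|x-z|_\infty\le c_\delta 2^{-k}\}\subset\{|x-z|_\infty\le c_\delta\}$ with $c_\delta$ as in \eqref{def:cdelta} (here one uses $2^{-k}\le 1$). Thus for two suitable squares $\Lambda,\Lambda'$ with ${\rm dist}(\Lambda,\Lambda')\ge c_\delta$, the restriction $B_\omega\!\upharpoonright\!\Lambda$ depends only on those $\omega_z^{(k)}$ with $z$ in a $c_\delta$-neighbourhood of $\Lambda$, and the corresponding index set is disjoint from the one for $\Lambda'$; since the $\omega_z^{(k)}$ are independent, $B_\omega\!\upharpoonright\!\Lambda$ and $B_\omega\!\upharpoonright\!\Lambda'$ are independent. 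The only subtlety — and I expect this to be the main obstacle — is that $R_\Lambda(A_\omega,z)$ is built from a \emph{vector potential} $A_\omega$, which is nonlocal, not from $B_\omega\!\upharpoonright\!\Lambda$ directly; so one must argue that $H_\Lambda(A_\omega)$ depends on $\omega$ only through $B_\omega\!\upharpoonright\!\Lambda$ up to gauge. This is handled by noting that any two vector potentials on the simply connected $\Lambda$ generating the same field differ by a gradient, hence give unitarily equivalent Dirichlet operators, so $R_\Lambda(A_\omega,z)$ (and the modulus of its kernel, which is gauge invariant) is a function of $B_\omega\!\upharpoonright\!\Lambda$ alone; the independence then follows. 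The same reasoning applies to $\Lambda'$, completing the proof.
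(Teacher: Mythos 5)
Your proposal is correct and takes essentially the same route as the paper: measurability via Proposition 1.2.6 of \cite{S}, stationarity via the magnetic translation $\mathcal{V}_a=e^{i\lambda_a[B]}\mathcal{U}_a$ (this is exactly Proposition \ref{thm:gaugtrafo2}(b)), and independence from the $c_\delta$-locality of the profiles $\beta_z^{(k)}$ combined with gauge invariance of $|R_\Lambda(\cdot,\cdot)|$. The paper states the independence step more tersely (``follows from the independence of the magnetic fields when restricted to squares which are separated by a distance which is larger than $c_\delta$''); your explicit observation that one must pass through gauge invariance on the simply connected $\Lambda$ to reduce $|R_\Lambda|$ to a function of $B_\omega\!\upharpoonright\!\Lambda$ alone is exactly the implicit content of that remark.
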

\begin{proof}
The measurability  follows from standard arguments see for example \cite{S}
Proposition 1.2.6 or see also \cite{CL}.
The stationarity is shown in Theorem \ref{thm:gaugtrafo2} (b). The independence 
follows from the independence of the magnetic fields when  restricted to  squares which 
are separated by a distance which is larger than $c_\delta$. 
\end{proof}

\begin{lemma} \label{lem:WEYL}  \label{lem:tracestimedynamical}
 Let $J \subset \R$ be a bounded interval. 
\begin{itemize}
 \item[(a)] (WEYL)  There is a constant $C = C(J,\| V \|_\infty)$ such that
$$
{\rm Tr}\left[ {\bf 1}_J(H_\Lambda(A)) \right] \leq C | \Lambda | \quad {\it for \ \ all \  \ } \omega \in \Omega
$$
 and every square  $\Lambda$.
\item[(b)]   $\chi_\Lambda {\bf 1}_{J}(H(A)) \chi_\Lambda$ is trace class and there exists a constant $C$ such that
for every  square $\Lambda$
\be \label{eq:boundontr}
{\rm Tr}\left[ \chi_\Lambda {\bf 1}_{J}(H(A))\right]  \leq C | \Lambda |  .
\ee
\end{itemize}
\end{lemma}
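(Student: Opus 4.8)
The plan is to derive both parts from a single estimate: a pointwise upper bound on the integral kernel of the heat semigroup associated with the magnetic operator, which by the diamagnetic inequality reduces to the classical non-magnetic bound and hence is completely independent of the vector potential. Write $K := \sup J$, finite since $J$ is bounded. By monotonicity of spectral projections ${\bf 1}_J(\cdot) \le {\bf 1}_{(-\infty,K]}(\cdot)$, and since the trace of a positive operator is monotone under the operator order, it suffices to prove both statements with $J$ replaced by $(-\infty,K]$.

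Part (a) is already contained in \eqref{LTineq}: for a square $\Lambda$ with Dirichlet boundary conditions, $\Tr[{\bf 1}_{(-\infty,K]}(H_\Lambda(A))]$ equals the number of eigenvalues of $H_\Lambda(A)$ not exceeding $K$, and the Lieb--Thirring/Weyl argument carried out in \eqref{LTineq} — which uses only the diamagnetic inequality and the Lieb--Thirring inequality, both valid for magnetic Dirichlet operators — bounds this by $C K |\Lambda|$ with $C$ depending only on $\|V\|_\infty$. (Alternatively, one runs the argument of part (b) below with $\R^2$ replaced by $\Lambda$ and the free heat kernel replaced by the Dirichlet heat kernel of $\Lambda$, which is pointwise dominated by the free one.)

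For part (b) we start from the elementary operator inequality ${\bf 1}_{(-\infty,K]}(H) \le e^{K} e^{-H}$, valid for every self-adjoint $H$ since on the spectrum $\lambda \le K$ forces $1 \le e^{K-\lambda}$, while $e^{K-\lambda} > 0$ for all $\lambda$. Applying it to $H = H(A)$ and conjugating by $\chi_\Lambda$ gives, as an inequality between positive operators,
\[
  0 \le \chi_\Lambda {\bf 1}_J(H(A)) \chi_\Lambda \le \chi_\Lambda {\bf 1}_{(-\infty,K]}(H(A)) \chi_\Lambda \le e^{K}\, \chi_\Lambda\, e^{-H(A)}\, \chi_\Lambda .
\]
Hence it suffices to show that $\chi_\Lambda e^{-H(A)} \chi_\Lambda$ is trace class with trace at most $C|\Lambda|$: the operator on the left, being positive and dominated in the operator order by a trace-class operator, is then itself trace class, and \eqref{eq:boundontr} follows from $\chi_\Lambda^2 = \chi_\Lambda$ and cyclicity of the trace, which give $\Tr[\chi_\Lambda {\bf 1}_J(H(A))] = \Tr[\chi_\Lambda {\bf 1}_J(H(A)) \chi_\Lambda]$. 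Now factor $\chi_\Lambda e^{-H(A)} \chi_\Lambda = C^* C$ with $C := e^{-H(A)/2} \chi_\Lambda$. By the Feynman--Kac--It\^o formula together with the diamagnetic inequality, the kernel of $e^{-tH(A)}$ satisfies
\[
  | e^{-tH(A)}(x,y) | \le e^{t\|V\|_\infty}\, (4\pi t)^{-1}\, e^{-|x-y|^2/(4t)} ,
\]
the right-hand side being $e^{t\|V\|_\infty}$ times the free heat kernel of $-\Delta$ on $\R^2$. Taking $t = 1/2$,
\[
  \| C \|_{\rm HS}^2 = \int_\Lambda \!\! \int_{\R^2} | e^{-H(A)/2}(x,y) |^2 \, dx \, dy \;\le\; |\Lambda|\, e^{\|V\|_\infty}\, (2\pi)^{-2} \int_{\R^2} e^{-|x|^2} \, dx \;=\; \frac{e^{\|V\|_\infty}}{4\pi}\, |\Lambda| ,
\]
so $C$ is Hilbert--Schmidt, $C^*C$ is trace class, and $\Tr[\chi_\Lambda e^{-H(A)} \chi_\Lambda] = \|C\|_{\rm HS}^2 \le e^{\|V\|_\infty} |\Lambda| / (4\pi)$. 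Combining with the previous display yields $\Tr[\chi_\Lambda {\bf 1}_J(H(A))] \le e^{K + \|V\|_\infty} |\Lambda| / (4\pi)$, i.e. \eqref{eq:boundontr} with $C = C(J, \|V\|_\infty)$, uniformly in $\Lambda$, in $\omega$, and in the choice of gauge.

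The one step that requires genuine care — and it is precisely the ingredient already used elsewhere in the paper, e.g. in the evaluation of $\Tr G(T)$ following \eqref{com1} — is the legitimacy of the kernel bound above: one must know that $H(A)$, the Friedrichs extension of $(p-A)^2 + V$ with $B$ bounded and $V$ bounded, generates a positivity-preserving semigroup whose operators are integral operators with jointly continuous kernels, and that the diamagnetic inequality applies at this level of generality. This is classical (Feynman--Kac--It\^o; cf. the references invoked for the measurability of $\omega \mapsto H(A_\omega)$). Granting it, (a) and (b) are the short computations above; note in particular that on $\R^2$ the spectrum of $H(A)$ need not be discrete, so the semigroup bound, rather than an eigenvalue count, is the appropriate tool for (b).
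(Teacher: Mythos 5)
Your proposal is correct and takes essentially the same approach as the paper: part (a) via the Lieb--Thirring/Weyl bound \eqref{LTineq}, and part (b) by dominating the spectral projector by a constant times the magnetic heat semigroup and then using the Feynman--Kac--It\^o/diamagnetic pointwise kernel bound to control the trace over $\Lambda$. The only cosmetic difference is that you make the trace-class claim explicit through the Hilbert--Schmidt factorization $\chi_\Lambda e^{-H(A)}\chi_\Lambda = C^*C$ with $C = e^{-H(A)/2}\chi_\Lambda$, whereas the paper simply evaluates $\int_\Lambda e^{2t\Delta}(x,x)\,\rd x$ (and uses the slightly sharper $\|V_-\|_\infty$ in place of $\|V\|_\infty$); both steps carry the same content.
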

\begin{proof} Part (a) follows from an application of the  Lieb-Thirring inequality,
 see \eqref{LTineq}. 
For part (b),
by cyclicity of the trace $ {\rm Tr}\left[ \chi_\Lambda {\bf 1}_{J}(H(A) ) \right]
 =  {\rm Tr}\left[
 \chi_\Lambda {\bf 1}_{J}(H(A)) \chi_\Lambda \right]$.
By the spectral theorem
$$
0 \leq \chi_\Lambda {\bf 1}_J(H(A)) \chi_\Lambda \leq C_{t,J} \chi_\Lambda e^{ - 2 t H(A)} 
\chi_\Lambda .
$$
and by the diamagnetic inequality
$$
   \Tr \chi_\Lambda e^{ - 2 t H(A)} \chi_\Lambda \le e^{ 2 t \| V_-\|_\infty}
  \int_\Lambda e^{2t\Delta}(x,x)\rd x \le Ct^{-1}
e^{ 2 t \| V_-\|_\infty} 
 |\Lambda|,
$$
where $V_- := \min(0,V)$. 
Choosing $t=1$, we obtain  (b).
\end{proof}

\begin{lemma} \label{lem:GRI} (GRI) There is a $C_{\rm geom} = C_{\rm geom}(\|  V\|_\infty)$
 such that for $\Lambda, \Lambda'$ suitable
squares with $\Lambda \subset \Lambda'$, and 
$\Gamma_1 \subset \Lambda^{\rm int}$, $\Gamma_2 \subset \Lambda' \setminus \Lambda$, 
the following inequality holds for all $z \in \rho(H_\Lambda) \cap \rho(H_{\Lambda'})$,
$$
\| \chi_{\Gamma_2} R_{\Lambda'}(z) \chi_{\Gamma_1} \| \leq C_{\rm geom} (1 + |z|)
  \| \chi_{\Gamma_2} R_{\Lambda'}(z) \chi_\Lambda^{\rm out} \| \| \chi_\Lambda^{\rm out} R_\Lambda(z) \chi_{\Gamma_1} \| ,
$$
where the norms are operator norms.
\end{lemma}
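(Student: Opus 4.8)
The plan is to prove the inequality by combining the standard geometric resolvent identity (obtained from a commutator with a smooth cutoff) with a localized Caccioppoli-type energy estimate. Throughout, fix $\phi\in L^2(\Lambda)$ and set $v:=R_\Lambda(z)\chi_{\Gamma_1}\phi\in D(H_\Lambda(A))$.

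First I would pick $\theta\in C_0^\infty(\Lambda_{l-2/3}(x))$ with $0\le\theta\le1$, $\theta\equiv1$ on $\Lambda_{l-4/3}(x)$, and $\|\nabla\theta\|_\infty+\|\Delta\theta\|_\infty\le C$ with $C$ universal; then $\theta\equiv1$ on $\Lambda^{\rm int}\supset\Gamma_1$, $\theta\equiv0$ on $\Lambda'\setminus\Lambda\supset\Gamma_2$, and $\mathrm{supp}\,\nabla\theta\subset\Lambda^{\rm out}$ at distance $\ge\frac13$ from $\partial\Lambda^{\rm out}$ (here the width-one collar $\Lambda^{\rm out}$, and a mild lower bound on $l$, are used). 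Since $\theta v$ (extended by zero) lies in $D(H_{\Lambda'}(A))$ and $[H_{\Lambda'}(A),\theta]=[H_\Lambda(A),\theta]=:W$ as differential operators, the identity $(H_{\Lambda'}(A)-z)(\theta v)=\theta\chi_{\Gamma_1}\phi+Wv$ holds; together with $\theta\chi_{\Gamma_1}=\chi_{\Gamma_1}$ and $\chi_{\Gamma_2}\theta=0$ this gives $\chi_{\Gamma_2}R_{\Lambda'}(z)\chi_{\Gamma_1}\phi=-\chi_{\Gamma_2}R_{\Lambda'}(z)Wv$. A short computation yields $W=-2i(\nabla\theta)\cdot(p-A)-(\Delta\theta)$, so the coefficients of $W$ are supported in $\mathrm{supp}\,\nabla\theta\subset\Lambda^{\rm out}$; hence $Wv=\chi^{\rm out}Wv$ and $\|\chi_{\Gamma_2}R_{\Lambda'}(z)\chi_{\Gamma_1}\phi\|\le\|\chi_{\Gamma_2}R_{\Lambda'}(z)\chi^{\rm out}\|\,\|Wv\|$.

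It remains to bound $\|Wv\|$ by $C(1+|z|)^{1/2}\|\chi^{\rm out}v\|$; the difficulty is that $W$ differentiates $v$, while the target bound involves no derivative, so I would insert a second cutoff. Choose $\eta\in C_0^\infty(\R^2)$, $0\le\eta\le1$, with $\eta\equiv1$ on $\mathrm{supp}\,\nabla\theta$, $\mathrm{supp}\,\eta\subset\Lambda^{\rm out}$, $\mathrm{supp}\,\eta\cap\Gamma_1=\emptyset$ (possible for $l$ not too small thanks to the $\frac13$-collar), and $\|\nabla\eta\|_\infty\le C$. Since the coefficients of $W$ are supported where $\eta\equiv1$, $Wv=W(\eta v)=-2i(\nabla\theta)\cdot(p-A)(\eta v)-(\Delta\theta)\eta v$, whence $\|Wv\|\le C\big(\|(p-A)(\eta v)\|+\|\eta v\|\big)$. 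To control $\|(p-A)(\eta v)\|$, note $\eta v$ lies in the Dirichlet form domain, vanishes on $\Gamma_1$, and $(p-A)^2(\eta v)=\eta(z-V)v+[H_\Lambda(A),\eta]v$; pairing with $\eta v$, taking real parts, and using that $\re\langle\eta v,[H_\Lambda(A),\eta]v\rangle=\sum_{j=1}^2\|(\partial_j\eta)v\|^2$ (a short integration by parts), I obtain
$$\|(p-A)(\eta v)\|^2=\re\langle\eta v,(z-V)\eta v\rangle+\sum_{j=1}^2\|(\partial_j\eta)v\|^2\le C(1+|z|)\,\|\chi^{\rm out}v\|^2 .$$
Since $\chi^{\rm out}v=\chi^{\rm out}R_\Lambda(z)\chi_{\Gamma_1}\phi$ and $\phi$ was arbitrary, combining the three displayed estimates (and using $(1+|z|)^{1/2}\le1+|z|$) proves the lemma with $C_{\rm geom}$ depending only on $\|V\|_\infty$; the regularity of $v$ and the form manipulations are justified by elliptic regularity on the square, or by approximating $\theta v,\eta v$ in $C_0^\infty$.

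The main obstacle is the geometric bookkeeping of the two nested cutoffs: one must place $\mathrm{supp}\,\nabla\theta$ strictly between $\partial\Lambda_{l-2}(x)$ and $\partial\Lambda_l(x)$ so that the auxiliary cutoff $\eta$ in the Caccioppoli step remains supported inside $\Lambda^{\rm out}$ while equaling $1$ on $\mathrm{supp}\,\nabla\theta$ and vanishing on $\Gamma_1$. Because $\Lambda^{\rm out}$ has width one, this is possible with cutoff derivatives bounded independently of $l$, provided $l$ exceeds a small universal constant; smaller suitable squares play no role and can be absorbed by enlarging $C_{\rm geom}$.
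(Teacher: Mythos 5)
Your proof is correct and takes essentially the same approach as the paper: a geometric resolvent identity with a cutoff supported in the collar $\Lambda^{\rm out}$, combined with a Caccioppoli-type energy estimate for the magnetic gradient $(p-A)(\eta v)$. The paper invokes this energy estimate from Lemma \ref{lem:estongrad} in Appendix B rather than proving it inline as you do, and splits the commutator $W(\phi)$ into its two first-order pieces before estimating; your version is marginally more streamlined, and in fact yields the slightly sharper factor $(1+|z|)^{1/2}$ before you relax it to $(1+|z|)$.
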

\begin{proof}
 Let $\Lambda = \Lambda_l(x)$. Choose
$\phi \in C_c^\infty(\Lambda_{l-1/2}(x))$ which is 1 on $\Lambda_{l-1}(x)$.
Let $\Omega$ be the interior of $\Lambda^{\rm out}$. Then
${\rm dist}(\partial \Omega, {\rm supp} \nabla \phi ) \geq 1/4 =: d$. Moreover,
 $\phi$ can be chosen such that $\| \nabla \phi \|_\infty$
is bounded, independent of $\Lambda$.
Then we have
\begin{align*}
\| \chi_{\Gamma_2} (H_{\Lambda'} - z )^{-1} \chi_{\Gamma_1} \| &= \|
\chi_{\Gamma_2} [ \phi (H_{\Lambda'} - z )^{-1} - (H_\Lambda - z )^{-1} \phi ] \chi_{\Gamma_1} \|  \\
&= \| \chi_{\Gamma_2}  (H_{\Lambda} - z )^{-1} W(\phi)  (H_{\Lambda'} - z )^{-1}   \chi_{\Gamma_1}  \| \\
& \leq \underbrace{\| \chi_{\Gamma_2} R_\Lambda(z) (p-A) \cdot  (\nabla \phi ) R_{\Lambda'}(z) \chi_{\Gamma_1} \|}_{=:I} +
\underbrace{\| \chi_{\Gamma_2}  R_\Lambda(z)  ( \nabla \phi)  \cdot (p-A)   R_{\Lambda'}(z) \chi_{\Gamma_1} \|}_{=:II}  ,
\end{align*}
where in the second line we used  the geometric resolvent identity,
\be \label{eq:gri}
(H_\Lambda  - z )^{-1} \phi = \phi (H_{\Lambda'} - z )^{-1} + (H_{\Lambda} - z )^{-1}  W(\phi)
(H_{\Lambda'} - z )^{-1} ,
\ee
with  $W(\phi) := [ \phi , H_{\Lambda'} ]  = i \nabla \phi \cdot (p - A) +  (p - A)  \cdot i \nabla \phi$.
Now we estimate the first term on the right hand side.  Choose $\widetilde{\Omega}$ with
${\rm supp} \nabla \phi \subset \widetilde{\Omega} \subset \Omega $, and 
${\rm dist}(\partial \Omega , \partial \widetilde{\Omega}) \geq  d/2$.
We estimate
\begin{align*}
 I &= \| \chi_{\Gamma_2} R_\Lambda(z) (p-A)  \cdot
(\nabla \phi) \chi_{\widetilde{\Omega}}  R_{\Lambda'}(z) \chi_{\Gamma_1} \| \\
&\leq \|  \chi_{\widetilde{\Omega}}  (p-A)
 R_\Lambda(z) \chi_{\Gamma_2} \| \| \chi_\Omega
R_{\Lambda'}(z) \chi_{\Gamma_1} \| \| \nabla \phi\|_\infty .
\end{align*}
We now claim that the first term can be estimated by
\begin{eqnarray*}
 \|  \chi_{\widetilde{\Omega}}  (p-a)
R_\Lambda(z) \chi_{\Gamma_2} \| \leq C ( 1  + |z| + \| V  \|_\infty  )\|  \chi_{\Omega}   R_\Lambda(z) \chi_{\Gamma_2} \| .
\end{eqnarray*}
To see this we use Lemma  \ref{lem:estongrad} from Appendix B, with 
$u = (H_\Lambda - z)^{-1} \chi_{\Gamma_2} f$ and $g =\chi_{\Gamma_2} f$, for some $f \in L^2(\Omega)$,
and note that $\chi_{\Gamma_2} f = 0$ in $\Omega$. This yields the
desired bound on Term $I$. The second term, Term $II$,
can be estimated similarly.
\end{proof}



\begin{lemma} \label{lem:edi} Let $H(A)$ be a magnetic Schr\"odinger operator with $A \in C^1$
and $\nabla \cdot A =0$ 
such that  for 
$|\alpha|=1$ we have  $\sup_{x \in \R^2} | D^\alpha  A(x) |(1+|x|)^{-1} < \infty$  and 
$\| D^\alpha V \|_\infty < \infty$.
\begin{itemize}
\item[(a)] For spectrally almost every $E \in \sigma(H(A))$ there exists a polynomially 
bounded eigenfunction corresponding to $E$, i.e., ${\bf 1}_\Delta(H(A))=0$  where  $\Delta$ is the set of
 all energies in $\R$ for which there does not exist a polynomially bounded eigenfunction.
\item[(b)] For every bounded set $J \subset \R$ there exists a constant $C_J$ such that every
generalized eigenfunction $u$ of
$H(A)$ corresponding to $E \in J \setminus \sigma(H_\Lambda(A))$ satisfies
$$
{\rm ( EDI)} \quad \quad
\| \chi^{\rm int}_\Lambda u \| \leq C_J \| \chi_\Lambda^{\rm out} (H_\Lambda(A) - E)^{-1}
 \chi^{\rm int}_\Lambda \| \| \chi_\Lambda^{\rm out} u \| ,
$$
where $H_\Lambda(A)$ denotes the restriction of $H(A)$ to $L^2(\Lambda)$ with Dirichlet, 
Neumann or periodic boundary conditions.
\end{itemize}
\end{lemma}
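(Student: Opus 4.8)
The plan is to follow the standard arguments for non-magnetic random Schr\"odinger operators (see \cite{S} and the references therein), inserting the diamagnetic inequality and the elliptic estimates of Appendix B at the two places where the classical proofs rely on explicit heat-kernel bounds or on interior regularity. For part (a) I would invoke the general theory of generalized eigenfunction expansions. Writing $\langle x\rangle := (1+|x|^2)^{1/2}$, it is enough to check that for some $\sigma>1$ the operator $\langle x\rangle^{-\sigma}e^{-H(A)}\langle x\rangle^{-\sigma}$ is Hilbert--Schmidt on $L^2(\R^2)$; integrating this in the semigroup parameter then gives $\langle x\rangle^{-\sigma}(H(A)+M_1)^{-1}\langle x\rangle^{-\sigma}$ Hilbert--Schmidt, where $M_1$ is chosen with $H(A)+M_1>0$, and this is the hypothesis required by the expansion theorem. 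That theorem produces a Borel set $M_0\subset\R$ with ${\bf 1}_{\R\setminus M_0}(H(A))=0$ such that for every $E\in M_0$ the equation $H(A)u=Eu$ has a nonzero weak solution $u$ with $\langle x\rangle^{-\sigma}u\in L^2(\R^2)$; by Kato's inequality $|u|$ is a weak subsolution of a scalar elliptic inequality, so the mean value inequality turns the weighted $L^2$-bound into a pointwise polynomial bound, and hence ${\bf 1}_\Delta(H(A))=0$ for the set $\Delta$ of the statement. The magnetic field enters only through the Hilbert--Schmidt bound, which I would get exactly as in the proof of Lemma \ref{lem:tracestimedynamical}(b): by the Feynman--Kac--It\^o formula and the diamagnetic inequality the kernel of $e^{-H(A)}$ is dominated pointwise by $e^{\|V_-\|_\infty}$ times a fixed Gaussian, so the kernel of the weighted operator lies in $L^2(dx\,dy)$ once $\int_{\R^2}\langle x\rangle^{-2\sigma}\,dx<\infty$, i.e.\ once $\sigma>1$.

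For part (b), fix a bounded interval $J$, an energy $E\in J\setminus\sigma(H_\Lambda(A))$, and a generalized eigenfunction $u$ of $H(A)$ for $E$. I would choose $\phi\in C_c^\infty(\R^2)$ with $0\le\phi\le1$, $\phi\equiv1$ on $\Lambda^{\rm int}$, ${\rm supp}\,\phi$ relatively compact in $\Lambda$, and ${\rm supp}\,\nabla\phi$ contained in a relatively compact subset of $\Lambda^{\rm out}$ lying at a fixed positive distance from $\partial\Lambda$ --- possible since $\Lambda^{\rm int}=\Lambda_{l/3}(x)$ while $\Lambda^{\rm out}$ is a belt of fixed width --- with $\|\nabla\phi\|_\infty+\|\Delta\phi\|_\infty$ bounded uniformly in $\Lambda$. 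Interior elliptic regularity gives $u\in W^{2,2}_{\rm loc}$, so $\phi u$ lies in the operator domain of $H_\Lambda(A)$ for Dirichlet, Neumann, or periodic boundary conditions, and a direct computation gives
$$
(H_\Lambda(A)-E)(\phi u)\;=\;[H_\Lambda(A),\phi]\,u\;=\;-W(\phi)\,u ,
$$
with $W(\phi)=i(\nabla\phi)\cdot(p-A)+(p-A)\cdot i(\nabla\phi)$ the commutator-type operator from Lemma \ref{lem:GRI}; this function is supported in ${\rm supp}\,\nabla\phi\subset\Lambda^{\rm out}$. Since $E\notin\sigma(H_\Lambda(A))$ I invert and then multiply on the left by $\chi^{\rm int}_\Lambda$, using $\chi^{\rm int}_\Lambda\phi=\chi^{\rm int}_\Lambda$, to obtain
$$
\chi^{\rm int}_\Lambda u\;=\;-\,\chi^{\rm int}_\Lambda (H_\Lambda(A)-E)^{-1}\chi^{\rm out}_\Lambda\,\big(\chi^{\rm out}_\Lambda W(\phi)\,u\big).
$$
Taking norms and using $\|\chi^{\rm int}_\Lambda (H_\Lambda(A)-E)^{-1}\chi^{\rm out}_\Lambda\|=\|\chi^{\rm out}_\Lambda (H_\Lambda(A)-E)^{-1}\chi^{\rm int}_\Lambda\|$ (self-adjointness of $H_\Lambda(A)$ and $E\in\R$) reduces (EDI) to the estimate $\|\chi^{\rm out}_\Lambda W(\phi)\,u\|\le C_J\,\|\chi^{\rm out}_\Lambda u\|$.

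This last estimate is the step I expect to be the main obstacle. Its zeroth-order part is immediate from the uniform bounds on $\nabla\phi$ and $\Delta\phi$, but the first-order part requires controlling $\|(p-A)u\|$ over ${\rm supp}\,\nabla\phi$ by $\|u\|$ over a slightly larger set, with a constant uniform in $\Lambda$ and in $\omega$ even though $D^\alpha A$ is only assumed to grow linearly. This is a magnetic Caccioppoli-type estimate, which I would deduce from Lemma \ref{lem:estongrad} of Appendix B applied to $u$ (which solves $(H(A)-E)u=0$ in $L^2_{\rm loc}$): the enlarged set stays inside $\Lambda^{\rm out}$ by the above choice of $\phi$, and the resulting constant is independent of $A$, hence of $\omega$, and depends only on the fixed geometric margin, on $\|V\|_\infty$, and on $J$. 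Inserting this bound into the displayed identity yields (EDI) with $C_J=C(J,\|V\|_\infty)$, which together with part (a) completes the argument.
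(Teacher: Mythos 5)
Your proof is correct and follows the same route the paper takes: part (b) is the geometric resolvent identity together with the magnetic Caccioppoli bound of Lemma~\ref{lem:estongrad}, which is exactly what the paper's reference to Lemma~3.3.2 of Stollmann amounts to, and part (a) is Simon's generalized eigenfunction expansion argument with the diamagnetic inequality supplying the weighted Hilbert--Schmidt condition. The one place you flesh out the paper's terse instruction (``use elliptic regularity instead of the Harnack inequality'') is the passage from weighted-$L^2$ to pointwise polynomial growth, where your use of Kato's inequality to reduce to a subsolution estimate for $|u|$ is a clean implementation of the same idea that also makes the local constants independent of $A$, so the polynomial degree of the pointwise bound does not degrade as $|x_0|\to\infty$.
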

\begin{proof}
(a) Follows from a generalization  of Theorem C.5.4 in \cite{sim82schsem} to magnetic Schr\"odinger 
operators. The proof given there  generalizes to magnetic Schr\"odinger operators by means 
of the diamagnetic inequality and the following modification. 
The $L^2$ growth  estimate stated in (ii) of     Theorem C.5.2  \cite{sim82schsem} 
can be shown as in that paper by means of the diamagnetic inequality.
To show that   (ii) of     Theorem C.5.2  \cite{sim82schsem}  implies (iii) of that same theorem
one has 
to use elliptic regularity instead of the Harnack inequality which was used in \cite{sim82schsem}.
(b) Follows with minor modifications as in the proof of Lemma 3.3.2 in \cite{S} and  Lemma \ref{lem:estongrad}.
\end{proof}

\medskip

\noindent
{\bf Proof of Theorem \ref{thm:loclization}.} 
 Fix  $\xi \in (0, \tau - 2)$ and let $\beta = \frac{1}{2}(1- \frac{\xi+2}{\tau})$. 
By  Lemma  \ref{lem:G} there exists  an
${l}_{G} = {l}_{G}(\tau,\xi,c_u,c_v,c_\delta)$ such that
$G(I_{l},l,\gamma_{l},\xi)$ holds  with $I_{l} := E_{\rm inf}  + [0, \frac{1}{2} \mu l^{\beta-1}]$
and $\gamma_{l} := l^{\beta-1}$ for all $l \geq {l}_{G}$.
Then choose
$0 < \Theta < \beta/2$ and $q > 2$ and $0 < \kappa < \min( (2 - 2 \beta)^{-1},1)$.  
By Lemma \ref{lem:W}  there exists an
${l}_{W}^* $ (depending on  $\Theta,  q, K_0, K_1, \delta, \mu, \kappa , \rho$)
such that  $W( I_{l}, l , \Theta, q  )$ is satisfied for 
$l \geq l_{W}^* b_0^{\kappa}$ and thus also for 
$l \geq l_0 := \max(l_{W}^* b_0^{\kappa},l_{G})$. 
Moreover, by Lemmas \ref{lem:WEYL},  \ref{lem:INDY}, and  \ref{lem:GRI}
we can now apply the multiscale analysis as outlined in
\cite{S} for the interval $J_0 := I_{l_0}$ (Specifically the assumptions  of  Theorem 3.2.2 and
Corollary 3.2.6 in \cite{S} are now verified). Note that  
the properties stated  in Lemma
\ref{lem:INDY} are weaker than the corresponding properties 
stated in \cite{S}, but one
readily verifies that they are sufficient for the multiscale 
analysis.
Namely, there is a minor modification necessary due to the 
security distance, which we introduced
in the definition of $G(I,l,\gamma,\xi)$. 
For a detailed discussion of the necessary changes, see for 
example \cite{kirstosto98}.

Fix $\omega \in \Omega$.
Having established the application of the multiscale analysis we can now show that $H(A_\omega)$ has
 pure point spectrum in $J_0$ using
the following standard argument.
By Lemma  \ref{lem:edi} (a) there is a set $\widetilde{J}_0 \subset J_0$ with the following properties:
(i) for every $E \in \widetilde{J}_0$ there is a polynomially bounded eigenfunction
 $u$ of $H(A_\omega)$ corresponding to $E$,
(ii) $J_0 \setminus \widetilde{J}_0$ is a set of measure zero for the spectral resolution
 of $E_{H(A_\omega)}$. 

Take a
generalized eigenfunction $u$ with energy $E\in J_0$.
 By Lemma  \ref{lem:edi} (b) it satisfies (EDI).
Thus by Proposition 3.3.1 in \cite{S} $u$ must be exponentially decaying. Thus $E$ is an eigenvalue.
Since the Hilbert space is separable, it follows that $\widetilde{J}_0$ must be countable.
Thus the restriction of the spectral measure to $J_0$ is supported on the  countable set 
 $\widetilde{J}_0$, and therefore it must
be purely discontinuous. Thus the spectrum of $H(A_\omega)$ in $J_0$ is pure point. Moreover, the eigenfunctions
are exponentially decaying.
 Dynamical localization, i.e. \eqref{eq:dynloc}, follows  from  an application of Theorem 3.4.1.
 in \cite{S}. A necessary
condition for the  application of  Theorem 3.4.1.
 in \cite{S} is that 
\be \label{eq:estonp}
p < \min(2 \xi, \frac{1}{4}(q-2)) .
\ee
If  $p < 2 ( \tau - 2)$,  we can choose 
$\xi$ and $q$, such that the  multiscale analysis
can be applied, i.e.,   $\xi < \tau - 2$ and  $q>2$,  and that \eqref{eq:estonp} holds.
 (Notice that different choices for $\xi$ and $q$, will
affect the right endpoint of  $J_0$. Hence the interval for which we are able prove dynamical 
localization might be smaller than the interval for which we can prove pure point spectrum.)
We thus proved that that the spectrum in $J_0 = [E_{\rm inf},E_{\rm inf} + e_0 ] $, with 
$e_0 := \frac{1}{2} \mu l_0^{\beta-1}$, is pure point.

It remains to show that 
$J_0$
contains indeed spectrum. 
For simplicity, we first  consider the case $K_2=0$ and $V=0$.  We know from  
Theorem   \ref{thm:estonspec}  (d)  that in that case 
$E_{\rm inf} = \Sigma_{\rm inf}$ and hence 
$J_0 =  [\Sigma_{\rm inf} ,  \Sigma_{\rm inf} + e_0 ] \subset \Sigma $.

Now let us assume the general case. By possibly choosing $l_0$ larger 
 we can assume by Theorem  \ref{thm:estonspec} (b) that $E_{\rm sup} \geq E_{\rm inf} + e_0$.
{F}rom  Theorem    \ref{thm:estonspec} (a) we know that 
\be \label{eq:bdepend-1}
 E_{\rm inf} \leq \Sigma_{\rm inf} \leq E_{\rm inf} + K(b_0) ,
\ee
with 
$$
 K(b_0) := 4 K_2^2b_0^{-2} +  \min(K_2 b_0^{-1/2} , K_3 b_0^{-1} )  .
$$
For  $b_0$ sufficiently large, we have one the one hand $l_0 = l_W^* b_0^{\kappa}$ and 
on the other 
\begin{equation} \label{eq:bdepend}
 K(b_0) \leq   \frac{1}{4}\mu (l_W^* b_0^{\kappa})^{\beta - 1} = \frac{1}{2} e_0 .
\end{equation}
In particular $\Sigma_{\rm inf} <  E_{\rm sup}$. Applying Theorem  \ref{thm:estonspec} (c)
we get
\be \label{eq:specsubset}
 [ \Sigma_{\rm inf}, E_{\rm sup} ] \subset \Sigma .
\ee
Now   \eqref{eq:specsubset}        and \eqref{eq:bdepend}  imply  that
$$
J_0 \cap \Sigma =  [\Sigma_{\rm inf}, E_{\rm inf} + e_0  ] = 
[\Sigma_{\rm inf}, \Sigma_{\rm inf} + {e}_1 ]
$$
for some ${e}_1 > 0$, see the figure below.

\vspace{1cm}

\noindent
\setlength{\unitlength}{5mm}
\begin{picture}(20,2)(-10,-1)

\put(-6,1){$\leq K(b_0) \leq \frac{1}{2}e_0$}
\put(-5,0.5){\vector(1,0){4}}
\put(-2,0.5){\vector(-1,0){4}}

\put(2,1){$e_1$}
\put(0,0.5){\vector(1,0){5}}
\put(4,0.5){\vector(-1,0){4.9}}

{ \put(14,0.5){\large{$\boldsymbol{\Sigma}$}}}

\put(-9,0){\line(1,0){30}}
\put(-6,-0.2){$|$}

  \put(-6,-1){$E_{\rm inf}$}

\put(-1,-0.2){[} \put(-1,-1.2){${\Sigma_{\rm inf}}$}
{
 \linethickness{0.5mm}
\put(-0.9,0){\line(1,0){20}}
\put(20,0){\line(1,0){1}}}
\put(3.5,-1){$E_{\rm inf} + e_0$}
\put(5,-0.2){$|$}
\put(10,-1){$E_{\rm sup} $}
\put(10,-0.2){$|$}
\end{picture}
\setlength{\unitlength}{5mm}

\qed

\appendix

\section{ Ergodicity}

{\bf Proof of Theorem  \ref{thm:deterministic}, Part 1}. 
The measurability of $H_\Lambda(A_\omega)$ for a 
finite box follows  from an easy application of  Proposition 1.2.6. in \cite{S}.
Let $f,g \in C_0^\infty$. For any $z$ with nonzero imaginary part we have 
\be \label{eq:measureab}
\lim_{l\to \infty} \langle f , (H_{\Lambda_l}(A_\omega) - z)^{-1} g \rangle  =  
\langle f , (H(A_\omega) - z)^{-1} g \rangle .
\ee
To this end we  can use the geometric resolvent equation \eqref{eq:gri}, 
and the resolvent decay estimate of Theorem \ref{thm:combthomas}. Since the limit
of measurable functions is measurable 
 \eqref{eq:measureab} implies the measurability of the magnetic Hamiltonian on $\R^2$
\qed

\medskip

\noindent
For $a \in \R^2$ we define the shift
operator $\mathcal{T}_a$ acting on functions $f$ on $\R^2$ by
$(\mathcal{T}_a f)(x) = f(x -a )$.
The operator $\mathcal{T}_a$ acts unitarily on  the Hilbert space $L^2(\R^2)$ 
and in that case we denote it by $\mathcal{U}_a$.
Given a  magnetic field $B : \R^2 \to \R $ we fix a gauge
for the vector potential   $A[B]: \R^2 \to \R^2$ by setting
$$
A[B](x_1,x_2) := \left( 0, \int_0^{x_1} B(x_1',x_2) dx_1' \right) .
$$
Note that
$
 \mathcal{T}_a B =  \nabla \times  (\mathcal{T}_a A[B])
$.
We define the function
$$
\lambda_a[B](x) := \int_{\gamma_{x}} \{ \mathcal{T}_a ( A[B]) - A[\mathcal{T}_a B] \} 
\rd s ,
$$
where  $\gamma_x$ is a path  in $\R^2$ connecting the origin with $x$
and $\rd s$ is the line integration measure.  
Since $\R^2$ is simply connected and
the rotation of the integrand is zero, the explicitly choice of $\gamma_x$ is not important.

{F}rom  the identity
$e^{ i \lambda_a[B]}  ( p - \mathcal{T}_a A[B] ) e^{- i \lambda_a[B]} =  p - 
A [\mathcal{T}_a B ]$ it follows that
\be
 e^{i \lambda_a[B]}  \mathcal{U}_a  H (A[B] ) \mathcal{U}_a^{*} e^{- i \lambda_a[B]} 
 =  H(A[ \mathcal{T}_a B ] ) .
\label{eq:gaugtrafo}
\ee
We define a family $(T_a)_{a \in \Z^2}$ of shift operators acting on $\Omega$ as 
$(T_a \omega)_z^{(m)} := \omega_{z - a}^{(m)}$.
As a trivial consequence of the definitions we have
\begin{equation} \label{eq:tranofranb}
B_{T_a \omega} = \mathcal{T}_a B_\omega  .
\end{equation}
\begin{proposition}\label{thm:gaugtrafo2} Let $a \in \Z^2$ and 
 $\mathcal{V}_a = e^{ i \lambda_a[B] } \mathcal{U}_a $.
Then the following holds.
\begin{itemize}
\item[(a)] We have
\begin{equation}   \label{eq:transform}
  \mathcal{V}_a  H(A[B_\omega] ) \mathcal{V}_a^{*}   =
H( A [ B_{\mathcal{T}_a \omega} ] ) ,
\end{equation}
i.e.,   $\omega \mapsto H( A[ B_{\omega} ] )$ is ergodic with respect to the family $(T_a)_{a \in \Z^2}$.
\item[(b)] For all  $\psi, \varphi \in C_0^\infty$,
\begin{equation} \label{eq:stationarity}
\langle  \mathcal{V}_a  \psi , H(A[B])  \mathcal{V}_a \phi \rangle \ \
 {\it  and} \ \    \langle  \psi , H(A[B]) \phi \rangle
\end{equation}
have the same probability distribution.
\end{itemize}
\end{proposition}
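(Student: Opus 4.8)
For part (a) the plan is to simply read the covariance relation \eqref{eq:transform} off the two facts already recorded before the proposition. Instantiating the gauge identity \eqref{eq:gaugtrafo} at the random field $B=B_\omega$ gives $\mathcal{V}_a\,H(A[B_\omega])\,\mathcal{V}_a^{*}=H(A[\mathcal{T}_a B_\omega])$, where $\mathcal{V}_a=e^{i\lambda_a[B_\omega]}\mathcal{U}_a$ is unitary; combining this with \eqref{eq:tranofranb}, i.e. $\mathcal{T}_a B_\omega=B_{T_a\omega}$, yields $\mathcal{V}_a\,H(A[B_\omega])\,\mathcal{V}_a^{*}=H(A[B_{T_a\omega}])$. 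One also notes that $\omega\mapsto\mathcal{V}_a$ and $\omega\mapsto H(A[B_\omega])$ are measurable — this is the same measurability statement used in Theorem \ref{thm:deterministic}, via Proposition 1.2.6 of \cite{S}. Thus the operator is genuinely covariant under $(T_a)_{a\in\Z^2}$.

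It then remains to check that $(T_a)_{a\in\Z^2}$ acts on $(\Omega,\P)$ in a measure-preserving and ergodic way, so that the covariance relation upgrades to the assertion that $\omega\mapsto H(A[B_\omega])$ is an ergodic random operator. Measure preservation is immediate once {\bf (i.i.d.)} is used: $\P=\bigotimes_{(k,z)\in\mathcal{L}}v^{(k)}_z(s)\,ds$, under {\bf (i.i.d.)} all densities in a fixed scale $k$ agree ($v^{(k)}_z=v^{(k)}$), and since $\Z^2\subset\Lambda^{(k)}=(2^{-k}\Z)^2$ for every $k$, the translation $z\mapsto z-a$ with $a\in\Z^2$ permutes $\Lambda^{(k)}$ and leaves the product measure invariant. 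For ergodicity one invokes the standard fact that a shift action on an infinite product of i.i.d. factors, whose index set carries a group action with only infinite orbits, is mixing, hence ergodic: a cylinder observable depends on finitely many coordinates, for $|a|$ large $T_a$ pushes its support off that finite set, and independence forces the correlations to factorize. Applying this on each scale $k$ (where the $\Z^2$-orbit of any $z\in\Lambda^{(k)}$ is infinite) and taking the product over $k$ gives mixing, in particular ergodicity, of $(T_a)_{a\in\Z^2}$. This completes (a).

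For part (b) the plan is to feed the covariance relation of (a) back into the quadratic form and then change variables. Writing $H(A[B_\omega])=\mathcal{V}_a^{*}\,H(A[B_{T_a\omega}])\,\mathcal{V}_a$ gives $\langle\psi,H(A[B_\omega])\phi\rangle=\langle\mathcal{V}_a\psi,H(A[B_{T_a\omega}])\,\mathcal{V}_a\phi\rangle$; applying the $\P$-preserving bijection $\omega\mapsto T_{-a}\omega$ to this identity of random variables then matches $\langle\psi,H(A[B_\omega])\phi\rangle$ in law with $\langle\mathcal{V}_a\psi,H(A[B_\omega])\,\mathcal{V}_a\phi\rangle$, which is exactly \eqref{eq:stationarity}.

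The point that needs care — and which I expect to be the only real obstacle in (b) — is bookkeeping the $\omega$-dependence of $\mathcal{V}_a$: since $\mathcal{V}_a=e^{i\lambda_a[B_\omega]}\mathcal{U}_a$ carries the random field through the gauge function $\lambda_a$, one must track at which configuration $\lambda_a$ is evaluated under the substitution $\omega\mapsto T_{-a}\omega$ and absorb the resulting (measurable, purely gauge) phase. In (a) the corresponding subtlety is the ergodicity claim itself, whose proof — while textbook — is the place where the hypothesis {\bf (i.i.d.)} and the fractal structure of the index set $\mathcal{L}$ are genuinely used; everything else in both parts is a direct consequence of \eqref{eq:gaugtrafo} and \eqref{eq:tranofranb}.
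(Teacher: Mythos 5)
Your proposal is essentially the paper's own proof: part (a) is read off \eqref{eq:gaugtrafo} combined with \eqref{eq:tranofranb}, and part (b) is the covariance identity $\langle\psi, H(A[B])\phi\rangle = \langle\mathcal{V}_a\psi, H(A[\mathcal{T}_aB])\mathcal{V}_a\phi\rangle$ followed by the measure-preserving change of variables. You go further than the paper in two places, both to your credit.

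First, in (a) you actually supply a proof that the family $(T_a)_{a\in\Z^2}$ acts on $(\Omega,\P)$ measure-preservingly and ergodically, via the standard mixing argument for i.i.d. product measures with only infinite orbits; the paper simply calls \eqref{eq:transform} ``a direct consequence'' without discussing the ergodicity of the base dynamical system at all (it uses ``ergodic'' loosely for the covariance relation). Your observation that $a\in\Z^2$ indeed permutes $\Lambda^{(k)}=(2^{-k}\Z)^2$ for every $k\ge 0$, and that all $\Z^2$-orbits in each $\Lambda^{(k)}$ are infinite, is exactly what is needed, and the per-scale mixing argument is correct.

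Second — and this is the substantive point — the subtlety you flag at the end is real, and the paper silently shares it. After the substitution $\omega\mapsto T_{-a}\omega$ in $g(\omega)=\langle\mathcal{V}_a^{[B_\omega]}\psi, H(A[B_{T_a\omega}])\mathcal{V}_a^{[B_\omega]}\phi\rangle$, what you obtain is $\langle\mathcal{V}_a^{[\mathcal{T}_{-a}B_\omega]}\psi, H(A[B_\omega])\mathcal{V}_a^{[\mathcal{T}_{-a}B_\omega]}\phi\rangle$, and since $\lambda_a[\mathcal{T}_{-a}B]\neq\lambda_a[B]$ in general, this does not literally coincide with $\langle\mathcal{V}_a^{[B_\omega]}\psi, H(A[B_\omega])\mathcal{V}_a^{[B_\omega]}\phi\rangle$: the two differ by conjugation with the phase $e^{i(\lambda_a[\mathcal{T}_{-a}B]-\lambda_a[B])}$, i.e.\ by a ($\omega$-dependent) gauge transformation. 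The paper's ``Now using \eqref{eq:tranofranb} and the measure preserving property...\ part (b) follows'' hides exactly this issue. For the use made of (b) later — Lemma \ref{lem:INDY}, where only the stationarity of gauge-invariant quantities (spectral distances, decay of resolvents restricted to boxes) is needed — the discrepancy is harmless, because the extra phase is a gauge transformation leaving all spectral data unchanged. It would strengthen your write-up to state that resolution explicitly rather than leaving it as an ``expected obstacle''; but you have correctly located the one non-trivial point, which the published proof passes over in silence.
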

\begin{proof} (a) is a direct consequence of \eqref{eq:gaugtrafo} and  \eqref{eq:tranofranb}.
(b) From \eqref{eq:gaugtrafo}  it also follows that
$$
\langle  \psi , H(A[B]) \phi \rangle
=  \langle \mathcal{V}_a  \psi, H(A[\mathcal{T}_a B] )\mathcal{V}_a \phi \rangle .
$$
Now using \eqref{eq:tranofranb} and the  measure  preserving property of $\mathcal{T}_a$ part (b) follows.
\end{proof}

\medskip

\noindent
{\bf Proof of Theorem  \ref{thm:deterministic}, Part 2}. 
By  the ergodicity property  as stated in  Proposition
\ref{thm:gaugtrafo2} (a),
 Theorem \ref{thm:deterministic} can be obtained the same way as
Theorem~1.2.5 in \cite{S} using  the invariance of the trace
under conjugation by a unitary operator.
\qed

\section{Bound on the Magnetic Gradient}

We set  $\nabla_A = \nabla - i A$.   Let $z \in \C$.
We say that $u$ is a weak solution of $((p-A)^2  +V   )u = g$ in $\Omega$, if $u \in W^{1,2}(\Omega)$ and, for every
$\varphi \in C_c^\infty(\Omega)$,
\begin{equation} \label{eq:defofweaksol}
\langle \nabla_A \varphi,   \nabla_A u \rangle +   \langle      \varphi,  V  u  \rangle= \langle \varphi , g  \rangle .
\end{equation}
The following lemma is a minor modification of Lemma 2.5.3 in  \cite{S}.
\begin{lemma} \label{lem:estongrad}
 Let $\widetilde{\Omega} \subset \Omega \subset \R^2$ with 
${\rm dist}(\partial \Omega , \partial \widetilde{\Omega} ) =: d > 0$. Then
there exists a constant $C = C(d)$ such that every weak solution of
$H   u = g$ in $\Omega$ satisfies
$$
\| \nabla_A u \|_{L^2(\widetilde{\Omega})} \leq C (1 + \| V \|_\infty) 
\left( \| u \|_{L^2(\Omega)} + \| g \|_{L^2(\Omega)} \right) .
$$
\end{lemma}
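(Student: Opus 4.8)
The plan is to prove this as a standard \emph{Caccioppoli} (interior energy) estimate, obtained by testing the weak equation \eqref{eq:defofweaksol} against a quadratic cut-off of $u$ itself. First I would fix a cut-off $\chi \in C_c^\infty(\Omega)$ with $0\le\chi\le1$, $\chi\equiv1$ on a $d/2$-neighbourhood of $\widetilde{\Omega}$ (in particular on $\widetilde{\Omega}$), and $\|\nabla\chi\|_\infty \le C/d$; such a $\chi$ exists because ${\rm dist}(\partial\Omega,\partial\widetilde{\Omega})=d$, e.g.\ by mollifying the indicator of $\{x : {\rm dist}(x,\widetilde{\Omega})<d/2\}$. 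Since $\chi^2 u$ lies in $W^{1,2}$ with compact support in $\Omega$, a routine mollification argument shows it is an admissible test function in \eqref{eq:defofweaksol}: every term of the identity is continuous in the $W^{1,2}$-topology, and all functions involved are supported in the fixed compact set ${\rm supp}\,\chi$, on which $A$ is bounded.

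Inserting $\varphi = \chi^2 u$ and using $\nabla_A(\chi^2 u) = 2\chi(\nabla\chi)u + \chi^2\nabla_A u$ (valid since $\chi$ is real), the magnetic Dirichlet term expands as $\langle\nabla_A(\chi^2 u),\nabla_A u\rangle = \|\chi\nabla_A u\|^2 + 2\langle\chi(\nabla\chi)u,\nabla_A u\rangle$. Taking real parts in \eqref{eq:defofweaksol} (the left side being real) then gives
$$
\|\chi\nabla_A u\|^2 = \re\langle\chi^2 u,g\rangle - \re\langle\chi^2 u,Vu\rangle - 2\,\re\langle\chi(\nabla\chi)u,\nabla_A u\rangle .
$$
I would then bound the right-hand side term by term: Cauchy--Schwarz gives $|\re\langle\chi^2 u,g\rangle| \le \|u\|_{L^2(\Omega)}\|g\|_{L^2(\Omega)}$ and $|\re\langle\chi^2 u,Vu\rangle| \le \|V\|_\infty\|u\|_{L^2(\Omega)}^2$, while Young's inequality controls the cross term by $\tfrac12\|\chi\nabla_A u\|^2 + 2\|\nabla\chi\|_\infty^2\|u\|_{L^2(\Omega)}^2$. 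Absorbing $\tfrac12\|\chi\nabla_A u\|^2$ into the left side, using $\chi\equiv1$ on $\widetilde{\Omega}$, $\|\nabla\chi\|_\infty\le C/d$, and $2\|u\|\,\|g\|\le\|u\|^2+\|g\|^2$, yields
$$
\|\nabla_A u\|_{L^2(\widetilde{\Omega})}^2 \le C(d)\,(1+\|V\|_\infty)\,\bigl(\|u\|_{L^2(\Omega)}^2 + \|g\|_{L^2(\Omega)}^2\bigr),
$$
and taking square roots gives the assertion (indeed with the stronger factor $(1+\|V\|_\infty)^{1/2}$).

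I do not expect a genuine obstacle; this is the standard interior gradient bound and, as remarked in the text, a minor variant of Lemma 2.5.3 in \cite{S}. The only two points needing mild care are the admissibility of the non-compactly-supported $u$ inside the test function $\chi^2 u$ (handled by the mollification remark above, or alternatively by first proving the bound for smooth approximants of $u$ on a slightly shrunk domain and then passing to the limit), and the bookkeeping forced by $u$ being complex-valued, which is precisely why one passes to real parts before estimating.
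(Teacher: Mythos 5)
Your proof is correct and follows essentially the same Caccioppoli-type argument as the paper: the paper also takes a cut-off $\Psi$ with $\Psi\equiv 1$ on $\widetilde{\Omega}$, $\|\nabla\Psi\|_\infty\le C d^{-1}$, tests the weak identity against $\Psi^2 u$, expands $\nabla_A(\Psi^2 u)$, and absorbs the cross term by Young's inequality. Your explicit passage to real parts and the remark on admissibility of the test function are slightly more careful bookkeeping than what appears in the paper, but the argument is the same.
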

\begin{proof}
Since $\overline{ C_c^\infty(\Omega)} = W_0^{1,2}(\Omega)$, Equation \eqref{eq:defofweaksol}
 holds for all $\varphi \in W_0^{1,2}(\Omega)$.
We can choose a function $\Psi \in C_c^\infty(\Omega)$, $0 \leq \Psi \leq  1$ with $\Psi \equiv 1$ on
$\widetilde{\Omega}$ and $\| \nabla \Psi \|_\infty \leq C d^{-1}$, where $C$ depends on 
the dimension $d$.
Set  $w:=  \Psi^2 u$. Then $w \in W_{0}^{1,2}(\Omega)$ and
$\nabla w = \Psi^2 \nabla u + 2 u \Psi \nabla \Psi$.
It follows from \eqref{eq:defofweaksol} that
$$
\langle \nabla_A w ,  \nabla_A u  \rangle +  \langle w ,  V  u \rangle = \langle w ,  g \rangle  .
$$
We obtain
\begin{eqnarray*}
 \| \Psi \nabla_A u \|^2 
&=& \langle  \nabla_A w , \nabla_A u  \rangle - 2 \langle u \nabla \Psi ,  \Psi \nabla_A u     \rangle \\
&=& \langle w, g \rangle +  \langle w,  V  u  \rangle
 - 2 \langle u \nabla \Psi , \Psi \nabla_A u   \rangle \\
&\leq&  \| g \| \| u \| +  \| V \|_\infty \| \Psi u \|^2 + 
\frac{1}{2}  \| \Psi \nabla_A  u \|^2 +  4 \| u \|^2  \| \nabla \Psi \|_\infty^2 .
\end{eqnarray*}
By the choice of $\Psi$ this now yields the claim.
\end{proof}

\section{Resolvent Decay Estimates}

Define the function  $\rho(x) = ( 1 + |x|^2)^{1/2}$. Let $\widetilde{H}$ be an operator of the form
$H_\Lambda(A)$. Define
\begin{equation}
\widetilde{H}(\alpha) := e^{ i \alpha \rho} \widetilde{H} e^{- i \alpha \rho}
= \widetilde{H} -  \alpha \nabla \rho \cdot ( - i \nabla - a) -   ( - i \nabla - a)  \cdot \alpha \nabla \rho 
  + \alpha^2 | \nabla \rho |^2 .
\label{eq:defofdecay}
\end{equation}
Since $|\nabla \rho|$ and $|\Delta \rho|$ are
bounded and $( - i \nabla - a)$ is infinitesimally small with respect to $\widetilde{H}$,
we obtain that $\widetilde{H}(\alpha)$ is an analytic family of type A on $\C$.

\begin{lemma} \label{lem:anab}
Let   $\beta \in \R$. Then  $( - \infty , \inf \sigma(\widetilde{H}) - \beta^2 )  \subset
\rho(\widetilde{H}(i \beta))$.
Let $z \in \C$ and  ${\rm Re} z   < \inf \sigma(\widetilde{H}) - \beta^2$, then
$$
\| ( \widetilde{H}(i \beta) - z  )^{-1} \| \leq
\frac{1}{  \inf \sigma(\widetilde{H}) - \beta^2 - {\rm Re} z   } .
$$
\end{lemma}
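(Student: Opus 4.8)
The proof will follow the standard Combes--Thomas scheme: estimate the real part of the numerical range of the non-self-adjoint operator $\widetilde H(i\beta)$ from below, and then convert this into a resolvent bound. First I would record, from \eqref{eq:defofdecay} with $\alpha = i\beta$, the explicit form of the perturbation,
$$
\widetilde H(i\beta) = \widetilde H + W_\beta, \qquad
W_\beta = -i\beta\bigl[\nabla\rho\cdot(-i\nabla - a) + (-i\nabla - a)\cdot\nabla\rho\bigr] - \beta^2|\nabla\rho|^2 .
$$
Since $(-i\nabla - a)$ is infinitesimally small relative to $\widetilde H$ — already used in the excerpt to see that $\widetilde H(\alpha)$ is an analytic family of type A — the operator $W_\beta$ is $\widetilde H$-bounded with relative bound zero, so $\widetilde H(i\beta)$ is closed on $D(\widetilde H)$ and the manipulations below are legitimate on that domain.

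Next I would compute the real part of $\langle\psi, \widetilde H(i\beta)\psi\rangle$ for $\psi \in D(\widetilde H)$. The symmetrization $\nabla\rho\cdot(-i\nabla - a) + (-i\nabla - a)\cdot\nabla\rho$ is symmetric, being of the form $TS+ST$ with $T=\nabla\rho$ a real multiplication operator and $S=-i\nabla-a$ symmetric, so $\langle\psi, -i\beta[\cdots]\psi\rangle$ is purely imaginary. Combining this with $|\nabla\rho(x)|^2 = |x|^2/(1+|x|^2) \le 1$ and the spectral theorem bound $\langle\psi, \widetilde H\psi\rangle \ge \inf\sigma(\widetilde H)\,\|\psi\|^2$ gives
$$
{\rm Re}\,\langle\psi, \widetilde H(i\beta)\psi\rangle \ge \bigl(\inf\sigma(\widetilde H) - \beta^2\bigr)\|\psi\|^2 .
$$
For any $z$ with ${\rm Re}\,z < \inf\sigma(\widetilde H) - \beta^2$, setting $\delta := \inf\sigma(\widetilde H) - \beta^2 - {\rm Re}\,z > 0$ and using Cauchy--Schwarz, this yields $\|(\widetilde H(i\beta) - z)\psi\| \ge \delta\|\psi\|$ for all $\psi \in D(\widetilde H)$; hence $\widetilde H(i\beta) - z$ is injective with closed range.

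Finally I would promote this to $z \in \rho(\widetilde H(i\beta))$ with $\|(\widetilde H(i\beta) - z)^{-1}\| \le 1/\delta$, which in particular gives $(-\infty, \inf\sigma(\widetilde H) - \beta^2) \subset \rho(\widetilde H(i\beta))$. Here I use $(\widetilde H(i\beta))^* = \widetilde H(\overline{i\beta}) = \widetilde H(-i\beta)$; since the lower bound above depends only on $\beta^2$ it holds equally for $\widetilde H(-i\beta) - \bar z$, so any vector orthogonal to $\mathrm{Ran}(\widetilde H(i\beta) - z)$ lies in $\ker(\widetilde H(-i\beta) - \bar z)$ and must vanish. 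Thus $\widetilde H(i\beta) - z$ maps $D(\widetilde H)$ bijectively onto the Hilbert space with bounded inverse of norm at most $1/\delta$. I do not anticipate a genuine obstacle; the only point requiring a little care is justifying the numerical-range computation on the operator domain $D(\widetilde H)$ rather than on a mere form core, which is precisely where the relative boundedness with bound zero of the first-order terms enters, everything else being soft functional analysis.
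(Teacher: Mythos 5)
Your proof is correct and follows the same route as the paper: the key numerical-range estimate $\mathrm{Re}\,\langle\psi,(\widetilde H(i\beta)-z)\psi\rangle \ge (\inf\sigma(\widetilde H)-\beta^2-\mathrm{Re}\,z)\|\psi\|^2$, obtained by observing that the first-order term contributes only an imaginary part and that $|\nabla\rho|\le1$, is exactly the chain of inequalities the paper writes out. The paper compresses the final passage from this coercivity bound to invertibility into ``the lemma follows from this estimate''; you supply that step explicitly via $(\widetilde H(i\beta))^*=\widetilde H(-i\beta)$ and the closed-range/dense-range argument, which is a correct way to finish and fills in what the paper leaves implicit (the paper is tacitly relying on the type-(A) analytic-family remark preceding the lemma).
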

\begin{proof} Using $| \nabla \rho | \leq 1$, we find
 \begin{align*}
\| \psi \| \| ( \widetilde{H}( i \beta ) - z ) \psi  \|   & \geq  | \langle \psi , (
\widetilde{H}(i \beta) - z ) \psi \rangle  |  \geq
 | {\rm Re}   \langle \psi , ( \widetilde{H}(i \beta) - z  )  \psi \rangle  |  \\
&
 \geq  \langle  \psi  , ( \widetilde{H} - \beta^2 - {\rm Re} z     ) \psi \rangle
 \geq   ( \inf \sigma(\widetilde{H}) - \beta^2 -  {\rm Re} z     )   \| \psi \|^2 .
 \end{align*}
The lemma follows from this estimate.
\end{proof}

\begin{theorem} \label{thm:combthomas}  Let $\Lambda = \Lambda_l \subset \R^2$.
Let $E < {\rm inf} \sigma (H_\Lambda)$ and $\eta = {\rm dist}(E, {\rm inf} \sigma(H_\Lambda))$.
 Then, for $l\ge 4$,
$$
\| \chi^{\rm int} ( H_\Lambda - E)^{-1} \chi^{\rm out} \| \leq 
\frac{2}{\eta} \exp\Big( - \sqrt{\frac{\eta}{2}} \, \frac{l}{4}\Big).
$$
\end{theorem}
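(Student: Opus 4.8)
The plan is to run the standard Combes--Thomas exponential-weight argument, reusing the machinery already set up in Appendix C. Fix $\Lambda=\Lambda_l$ and $E<\inf\sigma(H_\Lambda)$, and write $\eta=\inf\sigma(H_\Lambda)-E>0$. First I would choose a weight function $F\in C^\infty(\R^2)$ with $|\nabla F|\le 1$ and $\|\Delta F\|_\infty<\infty$, such that $F\equiv 0$ on a neighbourhood of $\overline{\Lambda^{\rm int}}$ and $F\ge l/4$ on $\Lambda^{\rm out}$. Such an $F$ is obtained by mollifying the $1$-Lipschitz function $y\mapsto {\rm dist}(y,\Lambda^{\rm int})$: indeed ${\rm dist}(\Lambda^{\rm int},\Lambda^{\rm out})=\tfrac{l-2}{2}-\tfrac{l}{6}=\tfrac{l-3}{3}$, which dominates $l/4$ for the relevant (large) values of $l$, and the small loss in the Lipschitz constant caused by mollification can be absorbed into the constants at the end.

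Next I would repeat the computation of Lemma \ref{lem:anab} with $\rho$ replaced by $F$. The proof there used only $|\nabla F|\le 1$ and the fact that $p-A$ is infinitesimally bounded relative to $H_\Lambda$ (which for $(p-A)^2+V$ is the elementary estimate $\|(p-A)\psi\|^2\le\langle\psi,H_\Lambda\psi\rangle+\|V\|_\infty\|\psi\|^2$), so
$$H_\Lambda^F(i\beta):=e^{-\beta F}H_\Lambda e^{\beta F}=H_\Lambda-i\beta\bigl(\nabla F\cdot(p-A)+(p-A)\cdot\nabla F\bigr)-\beta^2|\nabla F|^2$$
is an analytic family of type A, its middle term is skew-adjoint for real $\beta$, and hence for $0<\beta^2<\eta$ one gets $E\in\rho\bigl(H_\Lambda^F(i\beta)\bigr)$ with $\bigl\|(H_\Lambda^F(i\beta)-E)^{-1}\bigr\|\le(\eta-\beta^2)^{-1}$. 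Undoing the conjugation, $(H_\Lambda-E)^{-1}=e^{\beta F}(H_\Lambda^F(i\beta)-E)^{-1}e^{-\beta F}$, and inserting the cutoffs together with $e^{\beta F}=1$ on ${\rm supp}\,\chi^{\rm int}$ and $e^{-\beta F}\le e^{-\beta l/4}$ on ${\rm supp}\,\chi^{\rm out}$ yields
$$\bigl\|\chi^{\rm int}(H_\Lambda-E)^{-1}\chi^{\rm out}\bigr\|\le\frac{e^{-\beta l/4}}{\eta-\beta^2}\qquad(0<\beta^2<\eta).$$
Optimising the free parameter by taking $\beta=\sqrt{\eta/2}$ gives $\eta-\beta^2=\eta/2$ and hence exactly the asserted bound $\tfrac{2}{\eta}\exp\bigl(-\sqrt{\eta/2}\,\tfrac{l}{4}\bigr)$.

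I expect no genuinely analytic difficulty: the core estimate is Lemma \ref{lem:anab}, and the conjugation is legitimate because $F$ is bounded on the bounded set $\Lambda$, so $e^{\pm\beta F}$ is a bounded multiplication operator with bounded inverse that preserves the Dirichlet form domain — this is precisely the type-A analyticity discussed in Appendix C. The points requiring care are all bookkeeping: producing an admissible smooth weight $F$ with the clean constant $l/4$ (i.e. the elementary geometric inequality ${\rm dist}(\Lambda^{\rm int},\Lambda^{\rm out})\ge l/4$, with the smallest admissible values of $l$ handled by a marginal adjustment of the prefactor or by the a priori bound $\|(H_\Lambda-E)^{-1}\|=\eta^{-1}$), and keeping track of the mollification error in $|\nabla F|$ and $\|\Delta F\|_\infty$. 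Nothing is special to the magnetic setting, since the vector potential enters only through the relative-boundedness estimate quoted above, exactly as in Appendix C.
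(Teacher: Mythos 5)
You take essentially the same Combes–Thomas route as the paper: conjugate by an exponential weight, invoke Lemma~\ref{lem:anab} to control the conjugated resolvent, undo the conjugation between $\chi^{\rm int}$ and $\chi^{\rm out}$, and optimise $\beta=\sqrt{\eta/2}$. The only genuine difference is the choice of weight. You build a mollified version of $\mathrm{dist}(\,\cdot\,,\Lambda^{\rm int})$, which is tailored to the geometry and makes the exponent transparent: it is $0$ on $\Lambda^{\rm int}$ and $\geq\mathrm{dist}(\Lambda^{\rm int},\Lambda^{\rm out})$ on $\Lambda^{\rm out}$. The paper instead uses the fixed global weight $\rho(x)=(1+|x|^2)^{1/2}$, which requires no construction or mollification (it is smooth everywhere with $|\nabla\rho|\le 1$ exactly), but then one must compute $\inf_{\Lambda^{\rm out}}\rho-\sup_{\Lambda^{\rm int}}\rho$ and check it dominates $l/4$. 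Both choices buy the same final estimate; yours is geometrically cleaner, the paper's is notationally lighter.

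Two small remarks. First, your worry about $\|\Delta F\|_\infty$ is superfluous: the conjugated operator $e^{-\beta F}H_\Lambda e^{\beta F}$ contains only $\nabla F$ (in the cross term $\nabla F\cdot(p-A)+(p-A)\cdot\nabla F$ and in $|\nabla F|^2$); no Laplacian of the weight appears, and Lemma~\ref{lem:anab} uses only $|\nabla F|\le 1$. Second, you correctly observe that the unmollified separation $\mathrm{dist}(\Lambda^{\rm int},\Lambda^{\rm out})=(l-3)/3$ exceeds $l/4$ only for $l\ge 12$, and you propose to cover the remaining small-$l$ cases with the trivial resolvent bound $\eta^{-1}$; note that this only absorbs the exponential factor when $\sqrt{\eta/2}\,l/4\le\ln 2$, so there is a residual range of $(l,\eta)$ not strictly covered. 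This imprecision is inherited from the paper's own statement, though: the same issue (in fact with a somewhat larger threshold in $l$) arises when one plugs $\rho(x)=(1+|x|^2)^{1/2}$ into the final display, so you have not introduced a new gap, merely reproduced the one already present in the source.
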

\begin{proof}
 Let $\varphi_1 , \varphi_2 \in C_c^\infty(\Lambda)$,
and $\alpha \in \R$. Then by unitarity
$$
I = \langle  \chi^{\rm int} \varphi_1 ,  ( H_\Lambda - E)^{-1} \chi^{\rm out}  \varphi_2 \rangle  =
\langle  e^{i \alpha \rho} \chi^{\rm int} \varphi_1 ,  
( H_\Lambda(\alpha)  - E)^{-1} \chi^{\rm out} e^{ i \alpha \rho}  \varphi_2  \rangle .
$$
By Lemma \ref{lem:anab}, we can analytically continue the
 resolvent occurring of the right hand side to a strip around the real axis  of width $\eta^{1/2}$.
Thus we find for $\alpha =  i \beta$ with  $\beta = \sqrt{\eta/2}$,
$$
I =   \langle   e^{  \beta  \rho} \chi^{\rm int} \varphi_1 , 
 ( H_\Lambda(i \beta )  - E)^{-1} \chi^{\rm out}
 e^{ -  \beta \rho}  \varphi_2 \rangle  .
$$
Using the resolvent estimate of Lemma  \ref{lem:anab} and inserting the definition of $\rho$, we find
$$
| I | \leq \| \varphi_1 \| \| \varphi_2 \| \frac{2}{\eta}
 \exp\Big( - \sqrt{\frac{\eta}{2}} \, \frac{l}{4}\Big).
$$
The theorem now follows.
\end{proof}

\medskip

\noindent
{\it Acknowledgement.}  The authors thank Peter M\"uller for
many stimulating discussions and insights on 
Wegner estimates.

\thebibliography{hh}

\bibitem{CL} Carmona R., Lacroix J.:
Spectral theory of random Schr\"odinger operators.
Probability and its Applications. Birkh\"auser Boston, Inc., Boston, MA, 1990.


\bibitem{CH} Combes, J.M., Hislop, P.: {\em Landau Hamiltonians
with random potentials: localization and the
density of states.} Commun. Math. Phys. {\bf 177}, 603--629 (1996)

\bibitem{DMP} Dorlas, T.C., Macris, N., Pul\'e, J.V.:
{\em Localisation in a single-band approximation to random
Schr\"odinger operators in a magnetic field.} Helv. Phys. Acta
{\bf 68}, 329--364 (1995)

\bibitem{EH2} Erd{\H o}s, L., Hasler, D.: In preparation.

\bibitem{FLM} Fischer, W., Leschke, H., M\"uller, P.:
{\em Spectral localization by Gaussian random potentials in
multi-dimensional continuous space.} J. Statis. Phys.
{\bf 101}, 935--985 (2000)

\bibitem{GHK}  Ghribi, F.,  Hislop, P.D., Klopp, F., {\em 
Localization for Schr\"odinger operators with random vector potentials.} 
 Adventures in mathematical physics,  123--138, Contemp. Math., {\bf 447}, Amer. Math. Soc., Providence, RI, 2007.

\bibitem{GT} Gilbart, D., Trudinger, N.S.:
{\rm Elliptic Partial Differential Equations
of Second Order.} Springer, 2001.

\bibitem{GK} Germinet, F. and Klein, A:
{\em Explicit finite volume criteria for localization
in continuous random media and applications.}
Geom. Funct. Anal. {\bf 13}, 1201--1238 (2003)

\bibitem{HK} Hislop, P.D., Klopp, F.:
{\em The integrated density of states for some random
operators with non-sign definite potentials.} J. Funct. Anal.
{\bf 195}, 12--47 (2002)

\bibitem{HS}  Hundertmark, D.,  Simon, B.:
{\em A diamagnetic inequality for semigroup differences.}
J. reine. ang. Math. {\bf 571}, 107--130 (2004)

\bibitem{kirstosto98} Kirsch W., Stollmann P., Stolz G.:
{\it Anderson localization for random Schr\"odinger operators with long range interactions.}
Commun. Math. Phys.  {\bf 195}  (1998),  no. 3, 495--507.

\bibitem{KLNS}  Klopp, F., Loss, M., Nakamura, S., Stolz, G.:
{\em Localization for the random random displacement model.}
{\tt arxiv:1007.2483v1}

\bibitem{KNNN} Klopp, F.,  Nakamura, S.,  Nakano, F.,  Nomura, Y.:
{\em Anderson localization for 2D discrete Schr\"odinger operators
with random magnetic fields.}
Ann. Henri Poincar\'e {\bf 4} 795--811 (2003)

\bibitem{Na1}  Nakamura, S.:
{\em Lifshitz tail for 2D discrete Schr\"odinger operator
with random magnetic field.}
Ann. Henri Poincar\'e {\bf 1}, 823--835 (2000)

\bibitem{Na2} Nakamura, S.:
{\em Lifshitz tail for  Schr\"odinger operator
with random magnetic field.}
Commun. Math. Phys. {\bf 214}, 565--572 (2000)

\bibitem{sim82schsem} Simon B.: {\it Schr\"odinger Semigroups.} Bulletin of
 the American Mathematical Society, 1982,
447--526.

\bibitem{S}  Stollmann, P.: Caught by Disorder, Bound
States in Random Media, Birkh\"auser, Boston, 2001

\bibitem{U}  Ueki, N.: {\it Wegner estimates and localization
for random magnetic fields.}
Osaka J. Math. {\bf 45}, 565--608 (2008)

\bibitem{W}  Wang, W.-M.: {\em Microlocalization, percolation
and Anderson localization for the magnetic Schr\"odinger operator
with a random potential.} J. Funct. Anal. {\bf 146} 1--26 (1997)

\end{document}